\documentclass[11pt, letterpaper]{article}

\usepackage{indentfirst}
\usepackage{amsmath}
\usepackage{amsthm}
\usepackage{color}
\usepackage{eufrak}
\usepackage{verbatim}
\usepackage[makeroom]{cancel}
\usepackage{rotating}
\usepackage{float}
\usepackage{fullpage}
\usepackage[margin=1in]{geometry}
\usepackage{amssymb}
\usepackage{amsthm}
\usepackage{mathrsfs}
\usepackage{epsfig}
\usepackage{graphicx}
\usepackage{indentfirst}
\usepackage{framed}
\usepackage{color}
\usepackage{hyperref}
\usepackage{graphicx}
\usepackage{hyperref}
\usepackage{float}
\usepackage{bm}
\usepackage{cleveref}
\usepackage{enumitem}
\usepackage{mathtools}
\usepackage{mleftright}
\usepackage{thmtools} 
\usepackage{relsize}\usepackage{scalerel}
\usepackage{tikz}
\usepackage{multicol}
\usetikzlibrary{positioning,shapes,shadows,arrows}
\usepackage[labelfont=bf]{caption}
\usepackage[boxruled,linesnumbered,lined,commentsnumbered]{algorithm2e}
\usepackage[noend]{algpseudocode}
\usepackage{tablefootnote}

\usepackage[style=alphabetic,natbib=true,maxnames=99,maxalphanames=99]{biblatex}

\definecolor{corlinks}{RGB}{0,0,150}
\definecolor{cormenu}{RGB}{0,0,150}
\definecolor{corurl}{RGB}{0,0,150}

\hypersetup{
	colorlinks=true,
	urlcolor=corlinks,
	linkcolor=corlinks,
	menucolor=cormenu,
	citecolor=corlinks,
	pdfborder= 0 0 0
}

\newcommand{\numberthis}{\addtocounter{equation}{1}\tag{\theequation}}

\newtheorem{theorem}{Theorem}[section]
\newtheorem{lemma}[theorem]{Lemma}
\newtheorem{fact}[theorem]{Fact}
\newtheorem{claim}[theorem]{Claim}

\newtheorem{proposition}[theorem]{Proposition}

\theoremstyle{definition}

\newtheorem{definition}[theorem]{Definition}

\newcommand{\D}{\mathcal{D}}

\let\phi\varphi

\newcommand{\binset}{\{0,1\}}
\newcommand{\bin}{\binset}
\newcommand{\poly}{\mathsf{poly}}

\newcommand{\supp}{\mathsf{Support}}
\newcommand{\SD}{\Delta_{\mathsf{TV}}} 
\newcommand{\KL}{\mathrm{KL}}
\DeclareMathOperator*\Exp{{\bf E}}
\DeclareMathOperator*\Prob{{\bf Pr}}

\newcommand{\bool}{\mleft\{0,1\mright\}}

\newcommand{\N}{\mathbb{N}}
\newcommand{\Nat}{\N}


\newcommand{\rK}{\mathsf{rK}}
\newcommand{\pK}{\mathsf{pK}}
\newcommand{\K}{\mathsf{K}}

\renewcommand{\P}{\mathsf{P}}
\newcommand{\NP}{\mathsf{NP}}

\DeclarePairedDelimiterX\set[1]\lbrace\rbrace{#1}

\newcommand{\cd}{ \mathsf{cd} }

\newcommand{\Shannon}{\mathrm{H}}

\newcommand{\ccc}{\circ\cdots\circ}
\newcommand{\Enc}{\mathrm{Enc}}
\newcommand{\Dec}{\mathrm{Dec}}
\newcommand{\Next}{\mathsf{Next}}
\newcommand{\Ext}{\mathsf{Ext}}
\newcommand{\lrang}[1]{\langle{#1}\rangle}

\newcommand{\num}{\mathfrak{m}}

\newcommand{\PComp}{\textsc{PComp}}
\newcommand{\PSamp}{\textsc{PSamp}}
\addbibresource{refer.bib}

\title{Optimal Coding for Randomized Kolmogorov Complexity \\ and Its Applications\vspace{0.2cm}}

\author{Shuichi Hirahara\thanks{National Institute of Informatics, Japan. \texttt{E-mail:~s\_hirahara@nii.ac.jp}} \and 
	Zhenjian Lu\thanks{University of Warwick, UK. \texttt{E-mail:~zhenjian.lu@warwick.ac.uk}} \and
Mikito Nanashima\thanks{Tokyo Institute of Technology, Japan. \texttt{E-mail:~nanashima@c.titech.ac.jp}}\vspace{0.1cm}}
\date{}

\usepackage{lineno}

\begin{document}
\maketitle

\begin{abstract}
The coding theorem for Kolmogorov complexity states that any string sampled from a computable distribution has a description length close to its information content. A coding theorem for resource-bounded Kolmogorov complexity is the key to obtaining fundamental results in average-case complexity, yet whether any samplable distribution admits a coding theorem for randomized time-bounded Kolmogorov complexity ($\mathsf{rK}^\mathsf{poly}$) is open and a common bottleneck in the recent literature of meta-complexity. Previous works bypassed this issue by considering {probabilistic Kolmogorov complexity} ($\mathsf{pK}^\mathsf{poly}$), in which public random bits are assumed to be available.

In this paper, we present an efficient coding theorem for randomized Kolmogorov complexity under the non-existence of one-way functions, thereby removing the common bottleneck.
This enables us to prove $\mathsf{rK}^\mathsf{poly}$ counterparts of virtually all the average-case results that were proved only for $\mathsf{pK}^\mathsf{poly}$, and enables the resolution of the following concrete open problems.

\begin{enumerate}
    \item The existence of a one-way function is characterized by the failure of average-case symmetry of information for randomized time-bounded Kolmogorov complexity, as well as a conditional coding theorem for randomized time-bounded Kolmogorov complexity.  This resolves the open problem of Hirahara, Ilango, Lu, Nanashima, and Oliveira (STOC'23).
     \item
     Hirahara, Kabanets, Lu, and Oliveira (CCC'24) showed that randomized time-bounded Kolmogorov complexity admits search-to-decision reductions in the errorless average-case setting over any samplable distribution, and left open whether a similar result holds in the error-prone setting.  We resolve this question affirmatively, and as a consequence, characterize the existence of a one-way function by the average-case hardness of computing $\mathsf{rK}^\mathsf{poly}$ with respect to an arbitrary samplable distribution, which is an $\mathsf{rK}^\mathsf{poly}$ analogue of the $\mathsf{pK}^\mathsf{poly}$ characterization of Liu and Pass (CRYPTO'23).
\end{enumerate}

The key technical lemma is that any distribution whose next bits are efficiently predictable admits an efficient encoding and decoding scheme, which could be of independent interest to data compression.
\end{abstract}

\thispagestyle{empty} 
\newpage
\tableofcontents
\thispagestyle{empty} 

\newpage

\setcounter{page}{1}

\section{Introduction}

Shannon's source coding theorem is a centerpiece of information theory.
It shows that if $m$ independent samples are drawn from a distribution $\mathcal{D}$, then the $m$ samples can be encoded into a string of expected length $m \cdot 
 ( \Shannon(\mathcal{D}) + o(1) )$,
where $\Shannon(\mathcal{D})$ denotes the Shannon entropy of $\mathcal{D}$.
A computationally efficient variant of Shannon's source coding theorem was given by \citet{ImpagliazzoZ89_focs_conf}, who showed that $m$ independent samples drawn from any polynomial-time samplable distribution $\mathcal{D}$ can be (inefficiently) compressed into a string of expected length $m\cdot  ( \Shannon(\mathcal{D}) + o(1))$ that can be decoded in polynomial time.
Thus, the amortized encoding length of one string in the many samples from $\mathcal{D}$ approaches to $\Shannon(\mathcal{D})$, which is information-theoretically optimal.

Less understood is a ``one-shot'' setting, in which \emph{one} string $x$ is drawn from a distribution $\mathcal{D}$, and the question is whether $x$ has a short description.
Information theoretically,
for any distribution $\mathcal{D}$ over $\binset^*$,
there exists an encoding scheme that
compresses
any string $x$ in the support of a distribution $\mathcal{D}$ 
into a string of length $\log \frac{1}{\mathcal{D}(x)} + O(1)$,
where $\mathcal{D}(x)$ denotes the probability that $x$ is sampled from $\mathcal{D}$.
In terms of Kolmogorov complexity, 
this result is often referred to as the \emph{coding theorem for Kolmogorov complexity} (coined in \cite{LuO21_icalp_conf}; it is also called a source compression theorem \cite{Lee06_PhD_Thesis}).
It states that any string $x$ in the support of a computable distribution $\mathcal{D}$
satisfies that 
\[
\K(x) \le \log \frac{1}{\mathcal{D}(x)} + O(1),
\]
where $\K(x)$ denotes the Kolmogorov complexity of $x$, i.e., the length of a shortest program that prints $x$,
and the constant $O(1)$ depends only on the distribution $\mathcal{D}$.
Note that Kolmogorov complexity does not impose any time bound on the time it takes to print $x$.
This limits the applicability of the coding theorem in the literature of computational complexity theory.
More relevant to complexity theory is a coding theorem for resource-bounded Kolmogorov complexity measures, such as $\K^t(x)$, i.e., the length of a shortest program that prints $x$ in time $t$.

The coding theorem is one of the most fundamental properties of Kolmogorov complexity,%
\footnote{According to \citet{Lee06_PhD_Thesis}, a coding theorem is one of the four ``pillars'' of Kolmogorov complexity.}
and is the key to establishing fundamental theorems of average-case complexity theory.
For example, \citet{Levin86_siamcomp_journals} initiated the theory of average-case $\NP$-completeness by presenting a natural distributional problem which is complete for $\NP$ with respect to the class $\PComp$ of polynomial-time computable distributions.
Here, a distribution is said to be \emph{polynomial-time computable} if the cumulative distribution function is computable in polynomial time.
\citeauthor{Levin86_siamcomp_journals} showed this completeness result by showing that
$\PComp$ admits an \emph{efficient coding theorem} for $\K^t$, that is, that
any string in the support of a polynomial-time computable distribution $\mathcal{D}$ can be compressed into a polynomial-time program of length $\log \frac{1}{\mathcal{D}(x)} + O(1)$ in polynomial time.
We refer the reader to the survey of \citet{BogdanovT06_fttcs_journals} for the background on average-case complexity theory.

Can we obtain a coding theorem for resource-bounded Kolmogorov complexity with respect to a wider class of distributions?
The most standard class of distributions considered in the literature of average-case complexity theory is the class $\PSamp$ of (polynomial-time) samplable distributions.
A distribution $\mathcal{D} = \{\mathcal{D}_n\}_{n \in \Nat}$ is said to be (\emph{polynomial-time}) 
 \emph{samplable} if there exists a randomized polynomial-time algorithm that, on input $1^n$, outputs a string that is distributed according to $\mathcal{D}_n$.
Under a plausible derandomization assumption, 
\citet{AntunesF09_coco_conf} proved a coding theorem for $\K^\poly$ with respect to any samplable distribution.
The assumption can be removed if we consider a randomized variant of $\K^\poly$ in which public random bits are given to short programs.
The \emph{$t$-time-bounded probabilistic Kolmogorov complexity} of a string $x$, denoted by $\pK^t$ \cite{GoldbergKLO22_coco_conf}, is defined to be the minimum integer $k$ such that the probability that $\K^t(x \mid r) \le k$ over a uniformly random $r \in \binset^t$ is at least $\frac{2}{3}$, where $\K^t(x \mid r)$ denotes the conditional Kolmogorov complexity of $x$ given $r$, i.e., the length of a shortest program that prints $x$ given $r$ as input in time $t$.
\citet{LuOZ22_icalp_conf} showed that any samplable distribution admits a coding theorem for probabilistic Kolmogorov complexity.
Note that the notion of $\pK^t$ deviates from the standard notion of Kolmogorov complexity in that depending on the public random bits $r$, the shortest program that prints $x$ on input $r$ may be different.
In fact, $\pK^\poly$ is essentially equivalent to (the logarithm of the reciprocal of) the time-bounded universal probability \cite{HiraharaN23_focs_conf}, which is technically useful but somewhat artificial from the perspective of data compression.%
\footnote{In terms of data compression, the difference between $\rK^\poly$ and $\pK^\poly$ can be explained as follows.  In $\pK^\poly$, we assume that an inefficient encoding algorithm and an efficient decoding algorithm share random bits, which may not be the case in practice.  In $\rK^\poly$, an efficient decoding algorithm is allowed to be randomized, but the random bits are private and not shared with an encoding algorithm.}

A more natural randomized variant of time-bounded Kolmogorov complexity is \emph{randomized Kolmogorov complexity}.
The $t$-time-bounded randomized Kolmogorov complexity of a string $x$, denoted by $\rK^t(x)$, is defined to be the length of a shortest randomized program that prints $x$ in time $t$ with probability $\frac{2}{3}$ over the internal randomness of the randomized program.
This is arguably more natural than $\pK^\poly$ in that the program is fixed irrespective of the random bits used by the program.
It is evident that $\pK^t(x) \le \rK^t(x) \le \K^t(x)$, and thus the compression power of $\rK^t$ is in between $\pK^t$ and $\K^t$.
Partial progress towards obtaining coding theorems for randomized Kolmogorov complexity was made by \citet{LuO21_icalp_conf,LuOZ22_icalp_conf}, who proved a (information-theoretically sub-optimal) coding theorem for an exponential-time variant of $\rK^\poly$ (the randomized variant of Levin's $\mathsf{Kt}$-complexity \cite{Oliveira19_icalp_conf}).
No optimal coding theorem for $\rK^\poly$ is known for any class of distributions larger than $\PComp$.
This leads us to the following question: \emph{For which distributions \emph{(}and when\emph{)} does a coding theorem for $\rK^\poly$ holds}?

Answering this question is indispensable for a closely related area of research --- \emph{data compression}.  The main question investigated in the literature of data compression \cite{GoldbergS91_siamcomp_journals,TrevisanVZ05_cc_journals,Wee04_coco_conf,BarakSW03_random_conf,HsiaoLR07_eurocrypt_conf,HaitnerMS23_innovations_conf} is which class of distributions admits \emph{efficient} coding theorems rather than (existential) coding theorems.
The difference between the two types of the coding theorems is that
in the latter, we do not care about the efficiency of an encoding algorithm.
In an \emph{efficient} coding theorem for a distribution $\mathcal{D}$, we require that there exists a \emph{polynomial-time} algorithm that takes a string $x$ drawn the distribution $\mathcal{D}$ and outputs a compressed string of length close to its information content $\log \frac{1}{\mathcal{D}(x)}$.
\citet{GoldbergS91_siamcomp_journals,TrevisanVZ05_cc_journals} identified several classes of distributions that admit efficient coding theorems, such as distributions samplable with logspace machines \cite{TrevisanVZ05_cc_journals}, high entropy sources \cite{GoldbergS91_siamcomp_journals,TrevisanVZ05_cc_journals}, and samplable witness sets for $\NP$ \cite{TrevisanVZ05_cc_journals}.
However, no efficient coding theorem for any class of distributions that strictly contains $\PComp$ is known, just because even existential coding theorems for $\rK^\poly$ are unknown.

\subsection{Interplay between One-Way Functions and Kolmogorov Complexity}
Faced with the lack of a coding theorem for $\rK^\poly$, previous works in the recent literature of meta-complexity bypassed this issue by considering probabilistic Kolmogorov complexity $\pK^\poly$ or resource-unbounded Kolmogorov complexity $\K$. 
There has been a flurry of new characterizations of the existence of one-way functions based on Kolmogorov complexity \cite{LiuP20_focs_conf,RenS21_coco_conf,LiuP21_crypto_conf,IlangoRS22_stoc_conf,AllenderCMTV21_fsttcs_conf,LiuP22a_coco_conf,LiuP23_tcc_conf,LiuP23_crypto_conf,HiraharaILNO23_stoc_conf,Hirahara23_stoc_conf,ImpagliazzoL90_focs_conf,HiraharaN23_focs_conf}, starting from the influential work of \citet{LiuP20_focs_conf}.
A one-way function is one of the most fundamental cryptographic primitives because its existence is equivalent to the existence of a variety of cryptographic primitives, such as a private-key encryption scheme \cite{GoldwasserM84_jcss_journals}, a pseudorandom generator \cite{HastadILL99_siamcomp_journals}, a digital signature \cite{Rompel90_stoc_conf}, and a commitment scheme \cite{Naor91_joc_journals}.
The new ``meta-computational'' characterizations of one-way functions provide us with the hope that the improved understanding of one-way functions might lead us to the resolution of long-standing open problems, such as the elimination of Pessiland \cite{Impagliazzo95_coco_conf} (i.e., does the average-case hardness of $\NP$ imply the existence of a one-way function?).
Among the characterizations,
we highlight the characterizations that are based on an arbitrary samplable distribution. 
\begin{enumerate}
    \item \citet{HiraharaILNO23_stoc_conf} showed that average-case asymmetry of information for probabilistic Kolmogorov complexity $\pK^\poly$ characterizes the existence of a one-way function.  That is, a one-way function can be constructed if and only if for \emph{some} samplable distribution $\mathcal{D}$,
    the symmetry of information for $\pK^\poly$, i.e.,
    \[
    \pK^\poly(x \mid y) + \pK^\poly(y) \approx \pK^\poly(x, y) \approx  \pK^\poly(y \mid x) + \pK^\poly(x) 
    \]
    does not hold with a non-negligible probability over $(x, y)$ drawn from $\mathcal{D}$. 
    \item \citet{ImpagliazzoL90_focs_conf} and \citet{HiraharaN23_focs_conf} showed that a one-way function exists if and only if approximating time-bounded universal probability is hard with respect to some samplable distribution.
    \item \citet{IlangoRS22_stoc_conf} characterized the existence of a one-way function by the average-case hardness of Kolmogorov complexity $\K$ with respect to an arbitrary samplable distribution.
    \item \citet{LiuP23_crypto_conf} characterized the existence of a one-way function by the average-case hardness of computing probabilistic Kolmogorov complexity $\pK^\poly$ with respect to an arbitrary samplable distribution.
\end{enumerate}

These results provide fascinating approaches to construct one-way functions in that it suffices to construct \emph{some} samplable distribution that witnesses asymmetry of information or the computational intractability of computing Kolmogorov complexity measures, which appears to be intuitively easier.
However, the results do not extend to randomized Kolmogorov complexity $\rK^\poly$, precisely because of the lack of a coding theorem for $\rK^\poly$.
Indeed, 
all the proofs of the results above rely on a coding theorem for corresponding Kolmogorov complexity measures;
for example, the result of \cite{IlangoRS22_stoc_conf} relies on the coding theorem for resource-unbounded Kolmogorov complexity $\K$.

\subsection{Our Results}
In this paper, we identify a class of distributions that contains $\PComp$ and admits an efficient coding theorem for $\rK^\poly$.
Roughly speaking, we present an efficient and information-theoretically optimal coding theorem for a distribution $\mathcal{D}$ if there exists a ``{next-bits predictor}'' for $\mathcal{D}$ in the sense that any next bit of a given arbitrary prefix is predictable with high accuracy in randomized polynomial time.
This enables us to show that $\rK^\poly$, $\pK^\poly$ and $\K$ are all approximately equal to each other on average if one-way functions do not exist.
In particular, we demonstrate that virtually all the average-case results that were previously shown to hold only for $\pK^\poly$ can be translated into $\rK^\poly$ counterparts.
This enables us to resolve the main open problems left in previous works \cite{HiraharaILNO23_stoc_conf,HaitnerMS23_innovations_conf,HiraharaKLO24_coco_conf}.
We describe details below.

\subsubsection{Symmetry of Information for \texorpdfstring{$\rK^\poly$}{rK poly} versus One-Way Functions}

Symmetry of information for Kolmogorov complexity \cite{ZvonkinL1970} is one of the most fundamental properties of Kolmogorov complexity and is yet another one of the four ``pillars'' of Kolmogorov complexity \cite{Lee06_PhD_Thesis}.
It states that for all strings $x$ and $y$ of length $n$,
\[
    \K(x \mid y) + \K(y) \approx \K(x, y) \approx  \K(y \mid x) + \K(x),
\]
where the approximate equality holds up to an additive $O(\log n)$ term.
The original proof of symmetry of information due to Kolmogorov and Levin \cite{ZvonkinL1970} relies on an exhaustive search, and thus does not extend to the case of resource-bounded Kolmogorov complexity.
As early as the 1960s, Kolmogorov suggested that it is an interesting avenue of research to investigate symmetry of information for time-bounded Kolmogorov complexity \cite{LeeR05_tcs_journals}.
After a long line of research \cite{ZvonkinL1970,LongpreM93_ipl_journals,LongpreW95_iandc_journals,LeeR05_tcs_journals,Hirahara21_stoc_conf,Hirahara22_coco_conf,GoldbergK22_eccc_journals,GoldbergKLO22_coco_conf}, \citet{HiraharaILNO23_stoc_conf} presented two characterizations of an average-case variant of symmetry of information:
\begin{enumerate}
    \item The average-case asymmetry of information for $\pK^\poly$ characterizes the existence of a one-way function.
    \item The average-case asymmetry of information for $\rK^{\mathsf{quasipoly}}$ characterizes the existence of a one-way function secure against \emph{quasi-polynomial-time} algorithms.
\end{enumerate}
It was left as a main open problem (highlighted by Osamu Watanabe in \cite{HiraharaILNO23_stoc_conf}) whether the failure of the average-case symmetry of information for $\rK^\poly$ characterizes the existence of a \emph{standard} one-way function (i.e., secure against polynomial-time algorithms).

We resolve this open problem affirmatively and obtain the following new characterization of the non-existence of one-way functions through the validity of symmetry of information for $\rK^{\poly}$, as well as an average-case conditional coding theorem.

\begin{restatable}{theorem}{OWFIO}\label{t:OWF-io}
	The following are equivalent.
	\begin{enumerate}
		\item\label{i:OWF-io-noOWF} One-way functions do not exist.

		\item\label{i:OWF-io-SoI} \emph{\textbf{(Infinitely-Often Average-Case Symmetry of Information for $\rK^t$)}} For every polynomial-time samplable distribution family $\{\mathcal{D}_n\}_{n \in\mathbb{N}}$ supported over $\{0,1\}^n\times \{0,1\}^n$, and every polynomial $q$, there exists a polynomial $p$ such that for infinitely many $n\in\mathbb{N}$, the following holds for all $t\geq p(n)$,
		\[
		\Prob_{(x,y)\sim \mathcal{D}_n} \mleft[\mathsf{rK}^{t}(x\mid y)\leq \mathsf{rK}^{t}(x,y)-\mathsf{rK}^{t}(y)+ \log t \mright] \geq 1-\frac{1}{q(n)}.
		\]

  	\item\label{i:OWF-io-Coding} \emph{\textbf{(Infinitely-Often Average-Case Conditional Coding for $\rK^t$)}}
		For every polynomial-time samplable distribution family $\{\mathcal{D}_n\}_{n \in\mathbb{N}}$ supported over $\{0,1\}^n\times \{0,1\}^n$, and every polynomial $q$, there exists a polynomial $p$ such that for infinitely many $n\in\mathbb{N}$,
		\[
		\Prob_{(x,y)\sim \mathcal{D}_n} \mleft[\mathsf{rK}^{p(n)}(x\mid y)\leq \log \frac{1}{\D_n(x\mid y)}+ \log p(n) \mright] \geq 1-\frac{1}{q(n)},
		\]		
       where $\mathcal{D}_n(x \mid y)$ denotes the probability that $(x, y)$ is sampled from $\mathcal{D}_n$ conditioned that the second item being sampled is $y$.

      	\item\label{i:OWF-io-efficient} \emph{\textbf{(Infinitely-Often Average-Case Efficient Conditional Coding for $\rK^t$)}}
		For every polynomial-time samplable distribution family $\{\mathcal{D}_n\}_{n \in\mathbb{N}}$ supported over $\{0,1\}^n\times \{0,1\}^n$, and every polynomial $q$, there exists a polynomial $p$ such that for infinitely many $n\in\mathbb{N}$,
		\[
		\Prob_{(x,y)\sim \mathcal{D}_n} \mleft[\mathsf{rK}^{p(n)}(x\mid y)\leq \log \frac{1}{\D_n(x\mid y)}+ \log p(n) \mright] \geq 1-\frac{1}{q(n)}.
		\]		
  Moreover, it admits an efficient encoder in the following sense: there exists an efficient algorithm $\Enc$ that outputs, for given $(x,y)\sim \mathcal{D}_n$, a description of a $p(n)$-time program $\Pi$ of length at most $-\log\D_n(x\mid y)+\log p(n)$ with probability at least $1-1/q(n)$ over the choice of $(x,y)\sim \mathcal{D}_n$ and randomness for $\Enc$, such that
  $\Pi$ outputs $x$ for given $y$ and randomness $r\sim\bin^{p(n)}$ with probability at least $2/3$ over the choice of $r$.
	\end{enumerate}
\end{restatable}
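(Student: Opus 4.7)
The plan is to close the cycle $(1) \Rightarrow (4) \Rightarrow (3) \Rightarrow (2) \Rightarrow (1)$. The implication $(4) \Rightarrow (3)$ is immediate, since $(4)$ is $(3)$ augmented with an efficient encoder.

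For $(3) \Rightarrow (2)$, I combine three ingredients. First, applying $(3)$ to a distribution with constant second coordinate yields an unconditional coding theorem $\rK^{p(n)}(y) \leq \log(1/\D_n(y)) + \log p(n)$ with high probability, where $\D_n(y)$ denotes the marginal probability of the second coordinate. Second, a standard counting-based incompressibility bound gives $\rK^t(x,y) \geq \log(1/\D_n(x,y)) - O(\log n)$ with high probability for any samplable distribution. Third, $(3)$ itself provides $\rK^{p(n)}(x \mid y) \leq \log(1/\D_n(x \mid y)) + \log p(n)$. Combining these via the Bayes identity $\log(1/\D_n(x \mid y)) = \log(1/\D_n(x,y)) - \log(1/\D_n(y))$ and a union bound yields the symmetry inequality of $(2)$, after absorbing the $O(\log n)$ slack into $\log t$.

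For $(2) \Rightarrow (1)$, I argue by contrapositive using the HILL construction. Given a one-way function, there is a pseudorandom generator $G : \binset^{n/2} \to \binset^n$. Consider the samplable distribution on $\binset^n \times \binset^n$ that, on uniform seed $s \in \binset^{n/2}$, outputs $(x, y)$ where $x$ is $s$ padded to length $n$ and $y = G(s)$. Hardcoding the seed gives $\rK^t(x, y) \leq n/2 + O(\log n)$; a standard counting argument combined with $|\mathrm{Im}(G)| \leq 2^{n/2}$ gives $\rK^t(y) \geq n/2 - O(\log n)$ on average; and pseudorandomness of $G$ yields $\rK^t(x \mid y) \geq n/2 - O(\log n)$ on average, since a significantly shorter conditional description would produce a partial inverter and hence a distinguisher. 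Consequently $\rK^t(x \mid y) - (\rK^t(x,y) - \rK^t(y)) \geq n/2 - O(\log n)$ with constant probability on all sufficiently large $n$, violating the symmetry inequality of $(2)$ for any polynomial $t$.

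The heart of the theorem, and the primary obstacle, is $(1) \Rightarrow (4)$. Under the assumption that no one-way function exists, Impagliazzo--Luby-style arguments yield, for any samplable distribution $\D_n$ and infinitely many $n$, an efficient \emph{conditional next-bits predictor}: on input $(y, x_{<i})$ it estimates the conditional probabilities $\D_n(x_i = b \mid y, x_{<i})$ with small expected $\ell_1$-error over $(x,y) \sim \D_n$. The paper's central technical lemma (highlighted in the abstract) then converts this predictor into an efficient, information-theoretically optimal coding scheme via a randomized arithmetic-coding-style construction: the encoder compresses $x$ bit by bit using the predicted probabilities, producing a short randomized program $\Pi$ which, on input $y$ and fresh randomness $r$, re-runs the predictor to decode $x$. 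The main difficulty is controlling how per-step prediction errors compound over the $n$ encoding steps while simultaneously maintaining (i) the length bound $\log(1/\D_n(x \mid y)) + \log p(n)$, (ii) the $2/3$ decoding-success probability required by the definition of $\rK^{p(n)}$, and (iii) polynomial-time efficiency of the encoder.
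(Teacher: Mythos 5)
Your cycle $(1)\Rightarrow(4)\Rightarrow(3)\Rightarrow(2)\Rightarrow(1)$ matches the paper's architecture, and your $(3)\Rightarrow(2)$ step is essentially the paper's (which routes through $\K$ and uses \Cref{l:incompressible}), up to one subtlety you should fix: in the infinitely-often setting you apply statement (3) separately to $\mathcal{D}$ and to the marginal-with-constant-second-coordinate distribution, and the two resulting infinite sets of good $n$ could be disjoint. The paper avoids this by applying the coding theorem once to the uniform \emph{mixture} of the two distributions. The two remaining implications, however, have genuine gaps.

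For $(2)\Rightarrow(1)$, your PRG counterexample does not go through. First, the bound $\rK^t(y)\ge n/2-O(\log n)$ for $y=G(s)$ does not follow from counting plus $|\mathrm{Im}(G)|\le 2^{n/2}$: the distribution $G(\mathcal{U}_{n/2})$ need not be close to uniform on $\mathrm{Im}(G)$, and a non-injective PRG can concentrate noticeable mass on strings of low complexity. The correct lower bound (via \Cref{l:incompressible} and $\K\le\rK^t$) is $\rK^t(y)\ge n/2-\log|G^{-1}(y)|-O(\log n)$, and since HILL gives no injectivity guarantee, the $\log|G^{-1}(y)|$ term cannot be discarded. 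Second, your claim $\rK^t(x\mid y)\ge n/2-O(\log n)$ is not justified by the inverter you sketch: a conditional description of length $n/2-c\log n$ yields an inverter succeeding with probability only about $2^{-n/2}\cdot\poly(n)$ per preimage, which is negligible and breaks nothing. What actually works — and is the crux of the paper's proof of \Cref{l:avg-soi-owf-io} — is to work directly with the candidate one-way function $f$ and the distribution $(x,f(x))$, show via SoI and \Cref{c:LW} that $\rK^t(x\mid f(x))\le\log|f^{-1}(f(x))|+O(\log t)$ for most $x$, and then observe that guessing a random short program hits \emph{some} element of $f^{-1}(y)$ with probability $|f^{-1}(y)|\cdot 2^{-\log|f^{-1}(y)|}/\poly(n)=1/\poly(n)$: the sum over all preimages is what makes the inversion probability noticeable. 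That preimage-counting step is absent from your argument, and invoking HILL only obscures it.

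For $(1)\Rightarrow(4)$, you correctly identify the structure (distributional inverters give a conditional next-bits predictor; arithmetic coding over the predicted probabilities gives the encoder) and you correctly identify the obstacle, but you do not resolve it — you defer to "the paper's central technical lemma," which is the very thing that must be proved. The missing content is twofold: (i) the predictor is randomized and the encoder and decoder share no randomness, so its outputs must be made \emph{pseudo-deterministic}; the paper does this by adding a random offset and rounding to a coarse grid (à la Goldberg--Sipser and Trevisan--Vadhan--Zuckerman), with only $O(\log n)$ bits of offset stored in the encoding; and (ii) when a next-bit probability is below the predictor's additive accuracy, the multiplicative error blows up and arithmetic decoding fails, so such "light" bits must be detected and embedded verbatim with their positions, and one must then argue that a sample from $\mathcal{D}_n$ has no light bits with high probability so that the length bound is unaffected. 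Without these two ideas the per-step errors you mention cannot be controlled, and the proof of the key lemma is incomplete.
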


Our proof is fundamentally different from the previous proof of \cite{HiraharaILNO23_stoc_conf} for $\rK^\mathsf{quasipoly}$.
The proof of \cite{HiraharaILNO23_stoc_conf} relies on the reconstructive extractors of \citet{Trevisan01_jacm_journals,RazRV02_jcss_journals}, whose advice complexity of the reconstruction procedure is information-theoretically sub-optimal by an additive $O(\log^3 n)$ term, and this term is what forced \cite{HiraharaILNO23_stoc_conf} to consider quasi-polynomial-time one-way functions.
Roughly speaking, the additive error term corresponds to the seed length of an extractor.
Even without the reconstructive property, the state-of-the-art extractor construction due to \citet{GuruswamiUV09_jacm_journals} has seed length $O(\log^2 n)$.
Thus, in order to obtain \Cref{t:OWF-io} using the approach of \cite{HiraharaILNO23_stoc_conf}, we would need to improve the seed length of the state-of-the-art extractor construction to $O(\log n)$.
We sidestep this issue by taking a new approach based on the proof techniques of \citet{GoldbergS91_siamcomp_journals,TrevisanVZ05_cc_journals}.

\citet{HaitnerMS23_innovations_conf} obtained
an efficient coding theorem for $\pK^\poly$  with respect to any samplable distribution under the non-existence of a one-way function; that is, any samplable distribution $\mathcal{D}$ admits a polynomial-time encoding and decoding scheme of expected length $\Shannon(\mathcal{D}) + O(1)$ \emph{when shared random bits are available.}
\Cref{i:OWF-io-efficient} of \Cref{t:OWF-io} provides the same conclusion (up to an additive logarithmic term) \emph{without any shared random bit}, which resolves the natural open problem left in \cite{HaitnerMS23_innovations_conf}.
We also make progress towards the uniform version of the main result of \cite{HaitnerMS23_innovations_conf} by showing that
any distribution incompressible to $k$ bits without shared random bits
has $(1-\epsilon) k - O(\log n)$ bits of uniform-next-bit pseudoentropy for any constant $\epsilon > 0$; we defer the details to \Cref{sec:uniform-HMS}. 

\Cref{t:OWF-io} elucidates that the non-existence of a one-way function is both necessary and sufficient for the conditional version of an average-case coding theorem for $\rK^\poly$.
Moreover, both efficient and existential coding theorems for $\rK^\poly$ are equivalent to each other (\Cref{i:OWF-io-Coding,i:OWF-io-efficient}).
We also mention that $\rK^\poly$, $\pK^\poly$, and $\K$ are all approximately equal to each other on average under the non-existence of a one-way function, which enables us to translate any average-case result about $\pK^\poly$ into an $\rK^\poly$ counterpart (unless a one-way function exists); see \Cref{optimal coding-io K-version}.

We also have an analogous result for \emph{infinitely-often} one-way functions.
In this case, using the notion of computational depth \cite{AntunesFMV06_tcs_journals}, we obtain a characterization for a \emph{worst-case} variant of symmetry of information for $\rK^\poly$, which comes tantalizingly closer to the worst-case symmetry of information for $\K^\poly$ investigated by \citet{LongpreM93_ipl_journals,LongpreW95_iandc_journals}.
A $t$-time-bounded computational depth $\cd^t(x)$ is defined as $\pK^t(x) - \K(x)$.

\begin{restatable}{theorem}{OWFAE}\label{t:OWF-ae}
	The following are equivalent.
	\begin{enumerate}
		\item \label{i:OWF-ae-noOWF}Infinitely-often one-way functions do not exist.
		
		\item \label{i:OWF-ae-Coding} \emph{\textbf{(Almost-Everywhere Average-Case Conditional Coding for $\rK^t$)}}
		For every polynomial-time samplable distribution family $\{\mathcal{D}_n\}_{n \in\mathbb{N}}$ supported over $\{0,1\}^n\times \{0,1\}^n$, there exists a polynomial $p$ such that for all $n,k\in\mathbb{N}$,
		\[
		\Prob_{(x,y)\sim \mathcal{D}_n} \mleft[\mathsf{rK}^{p(n,k)}(x\mid y)\leq \log \frac{1}{\D_n(x\mid y)}+ \log p(n,k) \mright] \geq 1-\frac{1}{k}.
		\]		
		
		\item \label{i:OWF-ae-Coding-cd} \emph{\textbf{(Almost-Everywhere Worst-Case Conditional Coding for $\rK^t$ with Computational Depth)}}
		There exists a constant $c>0$ such that the following holds. For every computable distribution family $\{\mathcal{D}_n\}_{n \in\mathbb{N}}$ supported over $\{0,1\}^n\times \{0,1\}^n$, all $n,t\in\mathbb{N}$ such that $t\geq n$ and all $(x,y)\in\supp(\D_n)$,
		\[
		\mathsf{rK}^{({2^{\alpha}\cdot t})^c}(x\mid y)\leq \log \frac{1}{\D_n(x \mid y)} + c\cdot(\log t + \alpha),
		\]
		where $\alpha\vcentcolon=\mathsf{cd}^{t}(x,y)$.
		
		\item \label{i:OWF-ae-SoI} \emph{\textbf{(Almost-Everywhere Average-Case Symmetry of Information for $\rK^t$)}} For every polynomial-time samplable distribution family $\{\mathcal{D}_n\}_{n \in\mathbb{N}}$ supported over $\{0,1\}^n\times \{0,1\}^n$, there exists a polynomial $p$ such that for all $n,k\in\mathbb{N}$ and $t\geq p(n,k)$,
		\[
		\Prob_{(x,y)\sim \mathcal{D}_n} \mleft[\mathsf{rK}^{t}(x\mid y)\leq \mathsf{rK}^{t}(x,y)-\mathsf{rK}^{t}(y)+ \log t \mright] \geq 1-\frac{1}{k}.
		\]
		
		\item\label{i:OWF-ae-SoI-cd} \emph{\textbf{(Almost-Everywhere Worst-Case Symmetry of Information for $\rK^t$ with Computational Depth)}} 		There exists a constant $c>0$ such that the following holds. For all $n,t\in\mathbb{N}$ such that $t\geq n$ and all $x,y\in\bool^n$,
		\[
		\mathsf{rK}^{({2^{\alpha}\cdot t})^c}(x\mid y)\leq \rK^t(x,y)-\rK^t(y) + c\cdot(\log t + \alpha),
		\]
		where $\alpha\vcentcolon=\mathsf{cd}^{t}(x,y)$.
	\end{enumerate}
\end{restatable}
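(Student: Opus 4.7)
The plan is to establish the equivalence via two parallel cycles anchored at item~(\ref{i:OWF-ae-noOWF}): $(\ref{i:OWF-ae-noOWF})\Rightarrow(\ref{i:OWF-ae-Coding})\Rightarrow(\ref{i:OWF-ae-SoI})\Rightarrow(\ref{i:OWF-ae-noOWF})$ for the average-case statements, and $(\ref{i:OWF-ae-noOWF})\Rightarrow(\ref{i:OWF-ae-Coding-cd})\Rightarrow(\ref{i:OWF-ae-SoI-cd})\Rightarrow(\ref{i:OWF-ae-noOWF})$ for the worst-case depth versions. Both cycles are powered by the paper's key technical lemma, which turns an efficient next-bit predictor for a distribution into an efficient $\rK^\poly$ encoder/decoder pair; the bridge from item~(\ref{i:OWF-ae-noOWF}) is the Impagliazzo--Luby-style fact that the non-existence of infinitely-often one-way functions yields almost-everywhere next-bit predictors for every polynomial-time samplable distribution.

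For (\ref{i:OWF-ae-noOWF})$\Rightarrow$(\ref{i:OWF-ae-Coding}), I would apply the key lemma to the conditional distribution of $x$ given $y$, which remains samplable once $y$ is supplied as auxiliary input. The resulting encoder outputs, for all sufficiently large $n$, a randomized $p(n,k)$-time program of length $\log(1/\D_n(x\mid y))+\log p(n,k)$ that prints $x$ from $y$, correct with probability at least $1-1/k$ over $(x,y)\sim\D_n$. For the worst-case depth version (\ref{i:OWF-ae-noOWF})$\Rightarrow$(\ref{i:OWF-ae-Coding-cd}), I would combine the classical time-unbounded coding theorem (which gives a $\K$-program of length $\log(1/\D_n(x\mid y))+O(1)$ for any computable $\D_n$) with an enumeration over the $O(2^\alpha)$ candidate $\pK^t$-witnesses of length $\K(x,y)+\alpha$; the $(2^\alpha\cdot t)^c$ running time exactly affords this enumeration.

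For (\ref{i:OWF-ae-Coding})$\Rightarrow$(\ref{i:OWF-ae-SoI}) (and analogously (\ref{i:OWF-ae-Coding-cd})$\Rightarrow$(\ref{i:OWF-ae-SoI-cd})), coding applied to the joint, the marginal on $y$, and the conditional of $x\mid y$ yields upper bounds $\rK^t(\cdot)\le-\log\D_n(\cdot)+O(\log t)$ with high probability, while the counting bound $|\{z:\rK^t(z)\le k\}|\le O(2^k)$ gives matching lower bounds $\rK^t(x,y)\ge-\log\D_n(x,y)-O(\log n)$ and similarly for $\rK^t(y)$. Combining, $\rK^t(x,y)-\rK^t(y)=-\log\D_n(x\mid y)\pm O(\log n+\log t)$, and the conditional coding bound on $\rK^t(x\mid y)$ then yields the symmetry inequality. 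For (\ref{i:OWF-ae-SoI})$\Rightarrow$(\ref{i:OWF-ae-noOWF}), given an infinitely-often one-way function $f$, I would instantiate SoI on the samplable distribution $\D_n=(x,f(x))$ with uniform $x$; using $\rK^t(x,f(x))\le n+O(\log t)$ and $\rK^t(f(x))\ge n-\log|f^{-1}(f(x))|-O(\log n)$, SoI forces $\rK^\poly(x\mid f(x))\le\log|f^{-1}(f(x))|+O(\log n)$, and exhaustive enumeration over the corresponding short randomized programs yields a polynomial-time inverter, contradicting one-wayness on infinitely many lengths.

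The principal obstacle is the key technical lemma itself: producing an $\rK^\poly$-program, whose decoder's randomness is \emph{private} and unshared with the encoder, from a next-bit predictor, with only an additive $O(\log n)$ slack. This is precisely the barrier that forced the reconstructive-extractor approach of~\cite{HiraharaILNO23_stoc_conf} into the quasi-polynomial regime, losing an $O(\log^2 n)$ seed-length factor; the paper's new arithmetic-coding-style construction, drawing on the techniques of~\cite{GoldbergS91_siamcomp_journals,TrevisanVZ05_cc_journals}, is where the real work lies. Secondary care is required to propagate parameters uniformly so that the ``almost-everywhere'' quantifier on $n$ is preserved through the reduction to conditional distributions.
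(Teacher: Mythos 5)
Your average-case cycle $(\ref{i:OWF-ae-noOWF})\Rightarrow(\ref{i:OWF-ae-Coding})\Rightarrow(\ref{i:OWF-ae-SoI})\Rightarrow(\ref{i:OWF-ae-noOWF})$ matches the paper's route and is sound (one small slip: to invert $f$ you cannot afford \emph{exhaustive} enumeration over programs of length $\log|f^{-1}(f(x))|+O(\log n)$, which is $|f^{-1}(f(x))|\cdot\poly(n)$ time; you must instead sample a random short program and argue it is a witness for some preimage with probability $1/\poly(n)$). The genuine gap is in your second cycle, at $(\ref{i:OWF-ae-noOWF})\Rightarrow(\ref{i:OWF-ae-Coding-cd})$. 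Your proposed argument---the time-unbounded coding theorem plus ``enumeration over the $O(2^\alpha)$ candidate $\pK^t$-witnesses of length $\K(x,y)+\alpha$''---never invokes the non-existence of one-way functions. Since item~(\ref{i:OWF-ae-Coding-cd}) implies item~(\ref{i:OWF-ae-noOWF}), an unconditional proof of~(\ref{i:OWF-ae-Coding-cd}) would unconditionally rule out one-way functions, so the argument cannot work as stated. Concretely: the number of programs of length at most $\K(x,y)+\alpha$ is $2^{\K(x,y)+\alpha+1}$, not $O(2^\alpha)$, so the enumeration is not affordable in time $(2^\alpha t)^c$; and even granting access to a $\pK^t$-witness of $(x,y)$, nothing in that object yields a \emph{randomized $t'$-time program of length $\log(1/\D_n(x\mid y))+O(\log t+\alpha)$ printing $x$ from $y$}. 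Knowing that $\pK^t(x,y)-\K(x,y)\le\alpha$ does not make the short time-unbounded program for $x$ given $y$ run fast.

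The paper instead proves $(\ref{i:OWF-ae-Coding})\Rightarrow(\ref{i:OWF-ae-Coding-cd})$, which is where the one-way-function assumption enters (packaged inside item~(\ref{i:OWF-ae-Coding})). One applies the average-case conditional coding theorem to the universal samplable distribution $Q_{\langle n,t\rangle}$ obtained by running a random program of random length on random coins, which dominates $2^{-\pK^t(x,y)}$. The set $E$ of pairs violating the desired $\rK$ bound then has $Q$-measure at most $1/k$ with $k\approx 2^\alpha t^{O(1)}$; conditioning $Q$ on $E$ gives a computable distribution assigning $(x,y)$ mass $k\cdot Q_{\langle n,t\rangle}(x,y)$, and the time-unbounded coding theorem applied to that conditional distribution forces $\K(x,y)\le\log(1/Q_{\langle n,t\rangle}(x,y))-\log k+O(\log t)$, hence $\cd^t(x,y)>\alpha$ for every $(x,y)\in E$. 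The converse $(\ref{i:OWF-ae-Coding-cd})\Rightarrow(\ref{i:OWF-ae-Coding})$ then follows because samplable strings have logarithmic computational depth with high probability. Your $(\ref{i:OWF-ae-Coding-cd})\Rightarrow(\ref{i:OWF-ae-SoI-cd})$ step has a related issue: item~(\ref{i:OWF-ae-SoI-cd}) quantifies over \emph{all} $x,y\in\bool^n$ with no distribution in sight, so the ``matching lower bounds from counting'' you use in the average-case version (which hold only with high probability over a sample) are unavailable; the paper routes this equivalence through the universal distribution and computational depth as well.
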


\subsubsection{Error-Prone Average-Case Search-to-Decision Reductions for \texorpdfstring{$\rK^\poly$}{rK poly}}

Next, we investigate the open question left by \citet{HiraharaKLO24_coco_conf}
and an $\rK^\poly$ counterpart of the $\pK^\poly$ characterization of \citet{LiuP23_crypto_conf}.
Whether a search-to-decision reduction exists for the problem of computing time-bounded Kolmogorov complexity is a long-standing open problem that dates back to as early as the 1960s \cite{Trakhtenbrot84_annals_journals},
and recently there has been progress on this question \cite{CarmosinoIKK16_coco_conf,Hirahara18_focs_conf,Ilango20_coco_conf,LiuP20_focs_conf,MazorP24_coco_conf,HiraharaKLO24_coco_conf}.
\citet{HiraharaKLO24_coco_conf} presented a search-to-decision reduction for computing $\rK^\poly$ in the errorless average-case setting:
if there exists an efficient errorless average-case algorithm for computing $\rK^\poly$ on average, then there exists an efficient errorless average-case algorithm that finds a shortest randomized program of length $\rK^\poly(x)$ on a random input $x$ drawn from an arbitrary samplable distribution.
Designing such a reduction is well motivated by the fact that such a reduction is \emph{necessary} for excluding (an errorless variant of) Pessiland from Impagliazzo's five worlds \cite{Impagliazzo95_coco_conf}; see \cite{HiraharaKLO24_coco_conf} for more details on the background.

The proof of \cite{HiraharaKLO24_coco_conf} is based on a highly non-trivial combination of a reconstructive disperser of \cite{Hirahara20_focs_conf} and a non-reconstructive disperser of \cite{Ta-ShmaUZ07_combinatorica_journals}, and does not extend to the \emph{error-prone} average-case setting.
Designing a similar reduction in the error-prone average-case setting was left as one of the main open questions in \cite{HiraharaKLO24_coco_conf}.
The difference between error-prone and errorless average-case complexities \cite{HiraharaS22_innovations_conf,HiraharaN22_coco_conf} is that 
in the latter, an average-case algorithm is not allowed to make any error and instead allowed to indicate its failure of an algorithm, which is equivalent to the notion of average-polynomial-time \cite{Impagliazzo95_coco_conf,BogdanovT06_fttcs_journals}.

Using our new coding theorem, we present a search-to-decision reduction in the error-prone average-case setting, thereby answering the open problem of \cite{HiraharaKLO24_coco_conf}.
To state the result formally, we need a couple of definitions.
For $\lambda\in [0,1)$, let $\lambda$-$\mathsf{MINrKT}$ be the following promise problem $({\sf YES}, {\sf NO})$:
\begin{align*}
    {\sf YES} &\vcentcolon= \mleft\{(x,1^s,1^t,1^\ell) \mid \rK^t_{\lambda}(x)\leq s\mright\},\\
    {\sf NO} &\vcentcolon= \mleft\{(x,1^s, 1^t,1^\ell) \mid \rK^t_{\lambda-1/\ell}(x)> s\mright\}.
\end{align*}
We say that an algorithm $A$ \emph{decides} $(\mathsf{YES}, \mathsf{NO})$ on input $x$ if $x \in \mathsf{YES}$ implies $A(x) = 1$ and $x \in \mathsf{NO}$ implies $A(x) = 0$.
The promise problem has a natural search version.
For $x\in\bool^{n}$, $t\in\mathbb{N}$ and $0<\varepsilon,\lambda < 1$, we say that a randomized program $M$ is an \emph{$\varepsilon$-$\rK_{\lambda}^t$-witness of $x$} if
\begin{itemize}
    \item $|M|\leq \rK_{\lambda}^t(x)$, and
    \item $M$ outputs $x$ within $t$ steps with probability at least $\lambda-\varepsilon$ over the internal randomness of $M$.
\end{itemize}
For $\lambda\in [0,1)$, let $\lambda$-$\mathsf{Search}$-$\mathsf{MINrKT}$ be the following search problem: Given $(x,1^t,1^\ell)$, where $x\in\bool^*$, $t,\ell\in\mathbb{N}$, find an $(1/\ell)$-$\rK^t_{\lambda}$-witness of $x$.

\begin{restatable}{theorem}{OWFrK}\label{t:owf-rKt-formal}
	The following are equivalent.
	\begin{enumerate}
		\item\label{i:owf-rKt-1} Infinitely-often one-way functions do not exist.
		
		\item\label{i:owf-rKt-2}  \emph{\textbf{($\mathsf{Search}$-$\mathsf{MINrKT}$ is easy on average over polynomial-time samplable distributions)}} 	 For every $\lambda\in[0,1]$, every polynomial-time samplable distribution family $\{\D_n\}_{n\in\mathbb{N}}$, where each $\D_n$ is over $\bool^n$, there exist a polynomial $\rho$ and a probabilistic polynomial-time algorithm $A$ such that for all $n, s,\ell,k\in \mathbb{N}$, and all $t\geq \rho(n)$,
		\[
		\Prob_{x\sim \D_n, A} \mleft[A(x,1^t, 1^{\ell}, 1^k) \textnormal{ outputs an $(1/\ell)$-$\rK_{\lambda}^t$-witness of $x$}\mright] \geq 1-\frac{1}{k}.
		\]
		
		\item\label{i:owf-rKt-3}  \emph{\textbf{($\mathsf{MINrKT}$ is easy on average over polynomial-time samplable distributions)}} For every $\lambda\in[0,1)$, every polynomial-time samplable distribution family $\{\D_n\}_{n\in\mathbb{N}}$, where each $\D_n$ is over $\bool^n$, there exist a polynomial $\rho$ and a probabilistic polynomial-time algorithm $A$ such that for all $n, s,\ell,k\in \mathbb{N}$, and all $t\geq \rho(n)$,
		\[
		\Prob_{x\sim \D_n, A} \mleft[A(-,1^k)  \textnormal{ decides } \lambda\textnormal{-}\mathsf{MINrKT} \text{ on input } (x,1^s,1^t,1^\ell)\mright] \geq 1-\frac{1}{k}.
		\]

  		\item\label{i:owf-rKt-4}  \emph{\textbf{($\mathsf{MINrKT}$ is easy on average over the uniform distribution)}} There exist a polynomial $\rho$ and a probabilistic polynomial-time algorithm $A$ such that for all $n, s, \ell, k \in \Nat$,
		\[
		\Prob_{x\sim \bool^n, A} \mleft[A(-,1^k) \textnormal{ decides } (2/3)\textnormal{-}\mathsf{MINrKT} \textnormal{ on input }(x,1^s,1^{\rho(n)},1^{\ell})\mright] \geq 1-\frac{1}{k}.
		\]
	\end{enumerate}
\end{restatable}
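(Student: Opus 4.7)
The plan is to close the cycle $(\ref{i:owf-rKt-1}) \Rightarrow (\ref{i:owf-rKt-2}) \Rightarrow (\ref{i:owf-rKt-3}) \Rightarrow (\ref{i:owf-rKt-4}) \Rightarrow (\ref{i:owf-rKt-1})$. The only substantive new input is the efficient coding theorem for $\rK^{\poly}$ from Theorem~\ref{t:OWF-ae}, item~(\ref{i:OWF-ae-Coding}); the remaining implications are a trivial search-to-decision comparison, a specialization, and a standard pseudorandom-generator distinguishing argument.

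For $(\ref{i:owf-rKt-1}) \Rightarrow (\ref{i:owf-rKt-2})$, I would apply Theorem~\ref{t:OWF-ae} (with an empty conditioning string) to any samplable $\{\mathcal{D}_n\}_{n\in\mathbb{N}}$. This yields, for each $k$, an efficient encoder that on $x\sim\mathcal{D}_n$ produces, except on a $1/k$-fraction, a $p(n,k)$-time randomized program $\Pi$ of length at most $\log(1/\mathcal{D}_n(x))+\log p(n,k)$ that prints $x$ with probability $\geq 2/3$. Two modifications turn $\Pi$ into an $(1/\ell)$-$\rK_\lambda^t$-witness: (i)~amplify the success probability from $2/3$ to $\lambda-1/\ell$ by wrapping $\Pi$ in a short program that runs it a polynomial number of times and outputs the modal result, adding only $O(\log\ell)$ bits to $|\Pi|$; and (ii)~certify $|\Pi|\le \rK_\lambda^t(x)$ via the Shannon-style counting converse that each randomized program of length $k$ can output at most $O(1)$ distinct strings with probability $\geq \lambda-1/\ell$, so that a union bound gives $\Pr_{x\sim\mathcal{D}_n}[\rK_\lambda^t(x)<\log(1/\mathcal{D}_n(x))-O(\log n)] = o(1)$. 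Upper and lower bounds then match up to logarithmic slack, to be absorbed by $\rho$.

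For $(\ref{i:owf-rKt-2}) \Rightarrow (\ref{i:owf-rKt-3})$, on input $(x,1^s,1^t,1^\ell)$ run the search algorithm from (\ref{i:owf-rKt-2}) and output $1$ iff the returned witness $M$ satisfies $|M|\leq s$: a YES instance forces $|M|\le \rK_\lambda^t(x)\le s$, while in a NO instance every program of length $\leq s$ succeeds with probability $< \lambda-1/\ell$, so the returned $M$ (which does achieve that probability) must satisfy $|M|>s$. The implication $(\ref{i:owf-rKt-3}) \Rightarrow (\ref{i:owf-rKt-4})$ is immediate since the uniform distribution is polynomial-time samplable and $\lambda=2/3$ is permitted. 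For $(\ref{i:owf-rKt-4}) \Rightarrow (\ref{i:owf-rKt-1})$, suppose an infinitely-often one-way function exists; then HILL yields an infinitely-often PRG $G:\{0,1\}^{n/2}\to\{0,1\}^n$. Setting $s=n/2+O(1)$ and $t=\rho(n)$, for $r\sim U_{n/2}$ the string $G(r)$ is a YES instance of $(2/3)$-$\mathsf{MINrKT}$ (witnessed by hard-coding $r$), while a standard counting argument shows that for $x\sim U_n$ at most $O(2^s)$ strings satisfy $\rK^{\rho(n)}_{2/3-1/\ell}(x)\leq s$, so $x$ is a NO instance with probability $\geq 1-2^{-\Omega(n)}$. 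The algorithm from (\ref{i:owf-rKt-4}) therefore distinguishes $G(U_{n/2})$ from $U_n$ on infinitely many $n$ with non-negligible advantage, contradicting the PRG property.

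The main obstacle is the \emph{exact} length constraint $|M|\leq \rK_\lambda^t(x)$ in the definition of an $(1/\ell)$-$\rK_\lambda^t$-witness: the coding theorem only delivers length $\log(1/\mathcal{D}(x))+O(\log n)$, whereas the matching counting lower bound yields $\rK_\lambda^t(x)\geq \log(1/\mathcal{D}(x))-O(\log n)$, leaving an $O(\log n)$ gap. Reconciling these requires a careful absorption of the logarithmic slack into the polynomials $\rho$, $p$ and the error parameters $k$, $\ell$, and this calibration is the delicate portion of implication $(\ref{i:owf-rKt-1}) \Rightarrow (\ref{i:owf-rKt-2})$.
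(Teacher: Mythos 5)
Your implications $(\ref{i:owf-rKt-2}) \Rightarrow (\ref{i:owf-rKt-3})$, $(\ref{i:owf-rKt-3}) \Rightarrow (\ref{i:owf-rKt-4})$, and $(\ref{i:owf-rKt-4}) \Rightarrow (\ref{i:owf-rKt-1})$ are fine; the last one is essentially the Liu--Pass argument that the paper invokes as a black box (\Cref{l:liu-pass}). The problem is $(\ref{i:owf-rKt-1}) \Rightarrow (\ref{i:owf-rKt-2})$, where you correctly identify the obstacle but your proposed resolution does not work. The definition of an $(1/\ell)$-$\rK_\lambda^t$-witness requires $|M| \le \rK_\lambda^t(x)$ \emph{exactly}: the only slack parameters in the definition are the success probability ($\lambda - 1/\ell$) and the time bound, not the program length. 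The coding theorem hands you a program of length $\log(1/\mathcal{D}_n(x)) + O(\log n)$, and the counting converse only certifies $\rK_\lambda^t(x) \ge \log(1/\mathcal{D}_n(x)) - O(\log n)$ on average, so the program you produce may be $\Theta(\log n)$ bits longer than the optimum. No choice of $\rho$, $p$, $k$, or $\ell$ absorbs an additive excess in \emph{bit length}; those parameters control time, error probability, and success probability, none of which appear in the length constraint. So the "careful calibration" you defer to is not a calibration problem --- the approach is structurally unable to meet the length bound.

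The paper closes this gap by using the non-existence of infinitely-often one-way functions a \emph{second} time, beyond the coding theorem. It defines a candidate one-way function $f$ that takes a length $i$, a randomized program $\Pi$, randomness $r_t$, and verifier randomness, runs $U(\Pi_{[i]}, r_t)$ to obtain $x$, and outputs $(i, x, \ldots)$ only if an approximate threshold verifier $V$ confirms that $\Pi_{[i]}$ outputs $x$ with probability at least $\lambda - 1/(2\ell)$. Inverting $f$ and scanning $i$ from smallest to largest returns, for the first successful $i$, a program of length exactly $i \le \rK_\lambda^t(x)$ that passes $V$ and hence is a genuine witness: the exact length bound comes from the fact that at $i = \rK_\lambda^t(x)$ a true minimum-length program lies in the preimage set with probability about $2^{-i}/O(n)$, so the inverter succeeds there for most $x$ weighted by $2^{-\rK_\lambda^t(x)}$ (\Cref{l:LP-OWF-cond}). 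The coding theorem (\Cref{optimal coding-ae}) enters only afterwards, to show that this weighting dominates the given samplable $\mathcal{D}_n$ on average and thereby transfer the inverter's guarantee from the universal distribution to $\mathcal{D}_n$. In short, the coding theorem is the tool for changing the distribution, not for producing the witness; your proposal tries to make it do both jobs and the second one fails.
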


This extends the $\pK^\poly$ characterization of one-way functions by \citet{LiuP23_crypto_conf} 
to the $\rK^\poly$ counterparts.

\subsubsection{Optimal Coding Theorems for Next-Bits Predictable Distributions}

The key lemma behind all the results above is an \emph{unconditional} efficient coding theorem for $\rK^\poly$ with respect to \emph{next-bits predictable distributions}.

\begin{definition}[See also \Cref{def: nextbit approx}]
For a family $\mathcal{D}=\{\mathcal{D}_n\}_{n \in \Nat}$ of distributions and a function $\varepsilon \colon \Nat \to (0, 1)$,
a \emph{next-bits predictor for $\mathcal{D}$ with accuracy $\varepsilon$} is a randomized polynomial-time algorithm 
such that 
for every $n\in\N$, every $x\in \supp(\mathcal{D}_n)$, every $i\in \{1, \cdots, |x|\}$, and every $b\in\bin$,
	\[
 \Prob_P\mleft[\mathcal{D}^*_n(b \mid x_{[i-1]})-\varepsilon(n) \le P(x_{[i-1]},b,1^n) \le \mathcal{D}^*_n(b \mid x_{[i-1]})+\varepsilon(n) \mright] \ge 1-\varepsilon(n),
 \]
 where $\mathcal{D}^*_n(b \mid x_{[i-1]})$ denotes the probability, over $X \sim \mathcal{D}_n$, that the $i$-th bit of $X$ is $b$ conditioned that the first $(i-1)$-bits prefix of $X$ is equal to that of $x$.%
 \footnote{Throughout this paper, we only consider a family $\mathcal{D}$ of distributions such that every $x \in \supp(\mathcal{D}_n)$ has length at most $p(n)$ for some polynomial $p$. }
 We say that $\mathcal{D}$ is \emph{next-bits predictable} if for all polynomials $q$, there exists a next-bits predictor for $\mathcal{D}$ with accuracy $1/q$.
\end{definition}

This definition should be compared with Yao's next-bit predictor \cite{Yao82a_focs_conf,Vadhan12_fttcs_journals}. 
There are three differences between Yao's \emph{next-bit} predictor and our \emph{next-bits} predictor.
\begin{enumerate}
    \item  For a distribution $\mathcal{D}$, Yao's next-bit predictor only predicts the $i$-th bit given the first $i-1$ bits of a random string $x$ sampled from $\mathcal{D}$ for \emph{some} index $i \in \{1, \cdots, |x|\}$.
    In contrast, the definition of a \emph{next-bits} predictor requires that for \emph{all} indices $i \in \{1, \cdots, |x|\}$, the $i$-th bit is predictable given the first $i-1$ bits of $x$ .
    \item  We require that the accuracy of the prediction can be made arbitrarily small, whereas Yao's next-bit predictor is accurate with a non-negligible probability.
    \item The output of Yao's next-bit predictor is considered to be correct if the bit produced by Yao's next-bit predictor is equal to the next bit, whereas a next-bits predictor aims to estimate the probability density function of the next bits.
    For example, the uniform distribution is next-bits predictable, but does not have Yao's next-bit predictor.
    In this sense, our notion of next-bits predictor is close in spirit to the notion of $\KL$ predictor of \citet{VadhanZ12_stoc_conf}.
\end{enumerate}

For next-bits predictable distributions, we present an optimal and efficient coding theorem for $\rK^\poly$ up to an additive $O(\log n)$ term.
This extends the previous work of \citet{Levin86_siamcomp_journals} because any polynomial-time computable distribution is next-bits predictable.

\begin{theorem}[See also \Cref{t:opt-avg-coding-approx}]
\label{thm:opt-avg-coding-in-introduction}
For any next-bits predictable family $\mathcal{D}=\{\mathcal{D}_n\}_{n \in \Nat}$ of distributions, and  for every polynomial $q$,
there exists an \emph{efficient encoding and decoding scheme} whose expected encoding length is $\Shannon(\mathcal{D}_n) + \log q(n)$; that is, 
there exists a pair $(\Enc, \Dec)$ of randomized polynomial-time algorithms such that 
for every $n \in \Nat$,
\[
 \Exp_{x \sim \mathcal{D}_n, \Enc} [ |\Enc(1^n, x)| ] \le \Shannon(\mathcal{D}_n) + \log p(n)
\]
and for every $x \in \supp(\mathcal{D}_n)$, \[\Prob_{\Enc, \Dec} \mleft[ \Dec(1^n, \Enc(1^n, x)) = x \mright] \ge \frac{2}{3}.
\]
\end{theorem}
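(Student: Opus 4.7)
The plan is to use \emph{arithmetic coding} driven by the next-bits predictor, after two preprocessing modifications: amplifying the predictor so its accuracy becomes essentially error-free, and \emph{smoothing} the estimates so that every conditional probability is bounded away from $0$. My $\Enc$ processes $x \sim \mathcal{D}_n$ one bit at a time, maintains a current interval $[L_i, L_i + W_i) \subseteq [0,1)$, uses the amplified predictor (with $\Enc$'s own randomness) to estimate the conditional distribution over the next bit, and zooms into the sub-interval corresponding to $x_i$. The output is a short prefix of the binary expansion of a point inside the final interval; $\Dec$ reverses the process using a fresh, independent copy of the amplified predictor.

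First I would amplify the given predictor $P$ into $\widehat{P}$ by taking the median of $\poly(n)$ independent invocations, so that by a Chernoff bound $\widehat{P}$ achieves accuracy $\varepsilon = n^{-c}$ (for a large constant $c$) with confidence $1 - 2^{-n^2}$ per query; a union bound then implies that, with probability $1 - 2^{-\Omega(n)}$ over the randomness used by either $\Enc$ or $\Dec$, every prefix encountered in one encoding yields an accurate estimate. Second, I would smooth each estimate $\hat{p}$ into $\bar{p} = (1-\lambda)\hat{p} + \lambda/2$ with $\lambda = n^{-2}$, which keeps $\bar{p} \ge \lambda/2$ everywhere at the cost of only $O(n\lambda) = O(1/n)$ bits in the expected code length (split the analysis into the cases $p \le 1/2$ and $p > 1/2$).

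The key technical point is that $\Enc$ and $\Dec$ use \emph{independent} randomness, so their smoothed estimates can differ by up to $2\varepsilon$. To compensate, I would insert a \emph{safety buffer} of width $4\varepsilon \cdot W_i$ around the subdivision point of each interval: $\Enc$ places its encoded point strictly outside this buffer, so even though $\Dec$ computes a slightly shifted subdivision point, the encoded point still lies unambiguously on the correct side. The final interval has width at least $\prod_i \bar{p}_{i, x_i} \cdot \bigl(1 - O(\varepsilon/\lambda)\bigr)^{|x|}$, and a point inside it is described by $\sum_i \log(1/\bar{p}_{i,x_i}) + O(|x|\,\varepsilon/\lambda) + O(1)$ bits; taking expectations and choosing $\varepsilon = n^{-5}$, $\lambda = n^{-2}$, this is $\Shannon(\mathcal{D}_n) + o(1)$, and the $O(\log n)$ overhead of describing the scheme is absorbed into $\log q(n)$.

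The main obstacle is this third step: making arithmetic coding robust to the gap between $\Enc$'s and $\Dec$'s randomness \emph{without} any shared random bits, since a naive scheme silently desynchronizes the moment the two parties compute different subdivision points. Balancing the buffer width, the smoothing parameter, and the amplification accuracy so that robustness is achieved while the expected code length still matches $\Shannon(\mathcal{D}_n)$ up to an additive $\log q(n)$ is the delicate part --- too much smoothing loses the entropy bound, too little breaks the buffer. Once the parameters are tuned, correctness of $\Dec$ with probability $\ge 2/3$ follows from a union bound over the events ``$\widehat{P}$ misbehaves on some prefix'' and ``the buffer is insufficient,'' each of probability $2^{-\Omega(n)}$.
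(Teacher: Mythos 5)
You have correctly located the crux of the problem --- that $\Enc$ and $\Dec$ cannot share the predictor's randomness --- but your proposed fix does not actually resolve it. The safety buffer of width $4\varepsilon\cdot W_i$ only absorbs the \emph{local} discrepancy between the two parties' estimates at step $i$, under the implicit assumption that they agree on the current interval $[L_i, L_i+W_i)$. They do not: since every previous subdivision was computed from different estimates, the decoder's interval endpoints and width have already drifted from the encoder's by roughly $\sum_{j<i} 2\varepsilon W_j$ in absolute terms. This accumulated drift is \emph{not} proportional to $W_i$; e.g., if every conditional probability is about $1/2$, then $W_i\approx 2^{-i}$ while the drift stays of order $\varepsilon$, so for $i\gtrsim \log(1/\varepsilon)$ the drift exceeds the entire current interval (let alone a buffer that is an $O(\varepsilon)$-fraction of it), and decoding silently derails. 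No choice of $\varepsilon$ polynomial in $1/n$ escapes this, because the final interval width $\approx \mathcal{D}_n(x)$ is typically exponentially small. A buffer of \emph{absolute} width $\Theta(\varepsilon)$ per step fails for the same reason.

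The paper's proof avoids drift entirely by making the predictor \emph{pseudo-deterministic} rather than merely accurate: it adds a shared noise offset $\alpha\in\{0,\dots,2\ell(n)q(n)-1\}$ to the estimate and rounds to a coarse grid (the Goldberg--Sipser / Trevisan--Vadhan--Zuckerman technique, \Cref{next-bit predictor}), so that with high probability over $\alpha$ the rounded value is \emph{identical} across all executions. The offset $\alpha$ is transmitted inside the codeword at a cost of only $O(\log(\ell(n)q(n)))$ bits, so encoder and decoder compute exactly the same subdivision points and standard arithmetic-coding correctness applies (\Cref{lemma: arithmetic encoding}); positions where the conditional probability is too small for the rounding to preserve a multiplicative guarantee (``light next-bits'') are embedded verbatim with their indices. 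Your amplification and smoothing steps are fine on their own (and your smoothing is a reasonable alternative to the paper's light-bit embedding for the length analysis), but without some mechanism that forces the two parties to agree \emph{exactly} on each conditional probability, the scheme as described does not decode correctly.
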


Moreover, we obtain a \emph{worst-case} coding theorem for $\rK^\poly$ optimal up to a $(1+\epsilon)$-factor for every constant $\epsilon > 0$, which is instrumental in \Cref{sec:uniform-HMS}.
\begin{theorem}[See also \Cref{t:almost-opt-wst-coding}]
For any next-bits predictable family $\mathcal{D}=\{\mathcal{D}_n\}_{n \in \Nat}$ of distributions,  for every $\epsilon>0$, there exists a polynomial $p$ such that for every $n\in\N$ and every $x\in \supp(\mathcal{D}_n)$, 
    \[\rK^{p(n)}(x)\le (1+\epsilon)\log\frac{1}{\mathcal{D}_n(x)} + \log p(n).\]
\end{theorem}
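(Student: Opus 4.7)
The strategy is arithmetic coding based on a sharpened version of the next-bit conditionals. Writing $q_i(b) := \mathcal{D}_n^*(b \mid x_{[i-1]})$, I define the sharpened conditional $\hat q_i(b) := q_i(b)^{1+\epsilon}/Z_i$, where $Z_i := q_i(0)^{1+\epsilon} + q_i(1)^{1+\epsilon}$. The elementary inequality $t^{1+\epsilon} \le t$ on $[0,1]$ gives $Z_i \le 1$, and hence $\log(1/\hat q_i(b)) \le (1+\epsilon)\log(1/q_i(b))$. Therefore arithmetic coding of $x$ under the conditionals $(\hat q_i)_i$ produces a code of length at most $\sum_i \log(1/\hat q_i(x_i)) + O(1) \le (1+\epsilon)\log(1/\mathcal{D}_n(x)) + O(1)$.

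For each $x \in \supp(\mathcal{D}_n)$ I would exhibit a randomized $p(n)$-time program $M_x$ that consists of this arithmetic code together with a fixed $O(1)$-bit decoder and an $O(\log n)$-bit description of the parameters. The decoder invokes the next-bits predictor, amplified to accuracy $\varepsilon = 1/\poly(n)$ with failure probability at most $1/(3n)$ per invocation by taking the median over $\poly(n)$ independent runs, to obtain estimates $\tilde q_i$ satisfying $|\tilde q_i(b) - q_i(b)| \le \varepsilon$, and then performs arithmetic decoding against the corresponding sharpened conditionals. A union bound over the $n$ invocations gives overall decoding-success probability at least $2/3$, and the total program length is bounded by $(1+\epsilon)\log(1/\mathcal{D}_n(x)) + \log p(n)$ as required.

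The main obstacle is that the encoder (which may be inefficient, since the $\rK^{\poly}$ bound only constrains the program's running time) uses $\hat q_i$ built from the true $q_i$, while the decoder uses randomized estimates, yielding intervals that agree only up to an additive $O(\varepsilon)$. I would resolve this via \emph{robust} arithmetic coding with shrunken intervals of width $\hat q_i(b) - O(\varepsilon)$, so that the encoded point lies strictly inside the decoder's interval for the correct bit, making the shrinking overhead a lower-order $O(n\varepsilon/\tau)$ term once a suitable threshold $\tau$ is fixed. At positions where $\hat q_i(b) = O(\varepsilon)$---equivalently $q_i(b) = O(\varepsilon^{1/(1+\epsilon)})$, so that this bit contributes $\Omega(\log n)$ to $\log(1/\mathcal{D}_n(x))$---shrinking is infeasible and I fall back to a direct prefix-free encoding of the bit, incurring $O(1)$ extra bits, which the $(1+\epsilon)$-factor budget easily absorbs. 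Encoder--decoder agreement about which positions are edge cases is maintained by basing the decision on the amplified predictor's estimates with a threshold buffer strictly wider than $2\varepsilon$, so that both sides reach the same verdict whenever the predictor is accurate on that bit.
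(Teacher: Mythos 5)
Your high-level plan (arithmetic coding driven by the predictor, a fallback for bits whose conditional probability is tiny, and absorbing the fallback overhead into the $(1+\epsilon)$ factor because each such bit contributes $\Omega(\log n)$ to $\log(1/\mathcal{D}_n(x))$) matches the paper's, which obtains this theorem from the meta-theorem \Cref{thm: nbp to coding} with lightness threshold $n^{-k}$, $k=\Theta(1/\epsilon)$. But there is a genuine gap in how you reconcile the encoder's and decoder's conditionals, and it is precisely the point the paper's machinery is built to handle. Arithmetic decoding at step $i$ compares the encoded point $v$ against a split point of the \emph{current} interval; the margin you gain by shrinking the step-$i$ sub-interval by $O(\varepsilon)$ is $O(\varepsilon)\cdot w_{i-1}$ in absolute terms, where $w_{i-1}=\prod_{j<i}\hat q_j(x_j)$ is the current interval width. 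Meanwhile the discrepancy between the encoder's split point (built from the true $q_i$) and the decoder's (built from fresh, independent $\varepsilon$-accurate estimates) is an \emph{absolute} quantity of order $\varepsilon$ already at step $1$, and it compounds through the interval endpoints. Since $w_{i-1}$ decays to $\approx \mathcal{D}_n(x)^{1+\epsilon}$, which can be $2^{-\Theta(n)}$, no polynomial accuracy $\varepsilon=1/\poly(n)$ makes the shrinking margin dominate the discrepancy; you would need the estimates to agree to within the \emph{final} interval width, i.e., exponential accuracy. The sharpened conditionals $q_i^{1+\epsilon}/Z_i$ do not help here (they only shrink the light sub-intervals further). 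The paper's resolution is not to tolerate a discrepancy but to eliminate it: \Cref{next-bit predictor} makes the predictor \emph{pseudo-deterministic} via adding a shared random offset $\alpha$ of logarithmic description length (included in the code) and rounding to a coarse grid, so that with high probability the encoder and decoder compute \emph{identical} values $\tilde P(x_{[i-1]},b,\alpha,1^n)$ and hence identical intervals.

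The same consistency problem undermines your fallback mechanism. A "threshold buffer strictly wider than $2\varepsilon$" does not make two independent noisy estimates reach the same verdict when the true value lies within $\varepsilon$ of the threshold, and in any case the natural trigger depends on $q_i(x_i)$, which the decoder does not know at step $i$. If instead you trigger on $\min_b \tilde q_i(b)\le\tau$ (computable by both sides) and spend one raw bit at every triggered position, you pay up to $n$ bits on a distribution whose near-deterministic positions contribute essentially nothing to $\log(1/\mathcal{D}_n(x))$, and the $(1+\epsilon)$ budget cannot absorb that. The paper instead lets the encoder decide lightness using the pseudo-deterministic $\tilde P$ and writes the pairs $(i,x_i)$ into an explicit list $L$ inside the code, paying $O(\log n)$ per \emph{actual} light bit; this is then absorbed because each $n^{-k}$-light bit contributes at least $k\log n$ to $\log(1/\mathcal{D}_n(x))$, with $k$ chosen so that $C/k\le\epsilon$. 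Your accuracy parameter $\varepsilon=1/\poly(n)$ would also need its exponent tied to $1/\epsilon$ for this absorption to go through.
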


These results could be interesting to practical data compression.  Data compressors in practice (see \cite{0023825_daglib_books}) work by predicting next symbols by a \emph{deterministic} algorithm.  Our results show that compression is possible even if a predictor is \emph{randomized} and makes a small additive error.

Although it remains open whether an existential coding theorem for $\rK^\poly$ with respect to $\PSamp$ can be obtained unconditionally,
we remark that the non-existence of a one-way function is necessary for an \emph{efficient} coding theorem for $\rK^\poly$.
Thus, \Cref{thm:opt-avg-coding-in-introduction} is unlikely to be extended to any samplable distribution.
\begin{theorem}\label{t:char-owf}
    The following are equivalent.
    \begin{enumerate}
        \item Infinitely-often one-way functions do not exist.\label{i:char-owf:no-owf}
        \item For every polynomial-time samplable distribution $\mathcal{D} = \{\mathcal{D}_n\}_{n \in \Nat}$, there exists an efficient encoding and decoding scheme whose expected encoding length is $\Shannon(\mathcal{D}_n) + O(\log n)$.\label{i:char-owf:optimal}
        \item For any constant $\varepsilon > 0$, for every polynomial-time samplable distribution over $\binset^n$ with entropy $n^\varepsilon$, there exists a polynomial-time encoding and decoding with expected length $n - 3$.\label{i:char-owf:nontrivial}
    \end{enumerate}
\end{theorem}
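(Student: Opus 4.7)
I will prove the three-way equivalence by the cycle \Cref{i:char-owf:no-owf} $\Rightarrow$ \Cref{i:char-owf:optimal} $\Rightarrow$ \Cref{i:char-owf:nontrivial} $\Rightarrow$ \Cref{i:char-owf:no-owf}, reducing the first two directions to tools already in the paper and treating the last as the only direction that requires new content.

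For \Cref{i:char-owf:no-owf} $\Rightarrow$ \Cref{i:char-owf:optimal}, I reduce to \Cref{thm:opt-avg-coding-in-introduction} via next-bits prediction. Given a polynomial-time sampler $S$ for a family $\{\mathcal{D}_n\}$, estimating the conditional density $\mathcal{D}_n^*(b \mid x_{[i-1]})$ amounts to estimating the fraction of sampler coins $r$ on which $S(r)_i = b$ conditioned on $S(r)_{[i-1]} = x_{[i-1]}$. This reduces by standard techniques, namely rejection-sampling combined with Goldreich--Levin-style inversion of the prefix map $r \mapsto S(r)_{[i-1]}$, to inverting a polynomial-time function on a specified target, which is possible in randomized polynomial time whenever no infinitely-often one-way function exists. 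Feeding the resulting next-bits predictor into \Cref{thm:opt-avg-coding-in-introduction} with $q(n)$ a sufficiently large polynomial yields an efficient scheme of expected length $\Shannon(\mathcal{D}_n) + O(\log n)$. The implication \Cref{i:char-owf:optimal} $\Rightarrow$ \Cref{i:char-owf:nontrivial} is immediate: for any fixed $\varepsilon > 0$ and any samplable $\{\mathcal{D}_n\}$ with $\Shannon(\mathcal{D}_n) \le n^\varepsilon$, the scheme from \Cref{i:char-owf:optimal} has expected length $n^\varepsilon + O(\log n) \le n - 3$ for all sufficiently large $n$; the finitely many small $n$ are handled by trivial padding.

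For \Cref{i:char-owf:nontrivial} $\Rightarrow$ \Cref{i:char-owf:no-owf}, I argue contrapositively. If an infinitely-often one-way function exists, an infinitely-often version of the HILL construction yields a pseudorandom generator $G \colon \bin^k \to \bin^n$ with seed length $k \le n^\varepsilon$ for a fixed $\varepsilon > 0$, secure against polynomial-time distinguishers on an infinite set of lengths. The distribution $\mathcal{D}_n \vcentcolon= G(U_k)$ is polynomial-time samplable with Shannon entropy at most $n^\varepsilon$, so by \Cref{i:char-owf:nontrivial} there is an efficient $(\Enc, \Dec)$ with $\Exp_{x \sim \mathcal{D}_n}[|\Enc(x)|] \le n - 3$ and decoder correctness at least $2/3$. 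I would then convert $(\Enc, \Dec)$ into a polynomial-time distinguisher between $\mathcal{D}_n$ and $U_n$ as follows: on input $x$, compute $e = \Enc(x)$ and output $1$ iff $\Dec(e) = x$ and $|e| \le L$ for an appropriate threshold $L$. On $U_n$, a counting argument bounds the acceptance probability by $2^{L+1}/2^n$, since each short encoding decodes (for a fixed decoder randomness) to at most one string; on $\mathcal{D}_n$, combining the decoding correctness with a Markov-type lower bound on the probability of short encodings (and with pseudorandomness of $G$) gives a strictly larger probability, yielding a $1/\poly(n)$ distinguishing advantage on infinitely many $n$, contradicting the security of $G$.

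\textbf{Main obstacle.} The narrow three-bit compression gap, combined with only constant decoder correctness, makes a single-sample Markov-plus-counting argument delicate: the Markov lower bound $\Pr[|\Enc(x)| \le n - c] \ge (c-2)/(n - c + 1)$ on $\mathcal{D}_n$ and the counting upper bound $2^{1-c}$ on $U_n$ only cross in a regime that is too narrow to give a distinguisher for all large $n$ directly. I therefore expect to first amplify decoder correctness to $1 - o(1)$ per input via a universal-hash-plus-majority-voting trick at only an $O(\log n)$ additive overhead, and then either apply the resulting amplified scheme to a suitable tuple distribution $\mathcal{D}_n^{\otimes m}$ (the hypothesis of \Cref{i:char-owf:nontrivial} is preserved for any slightly larger $\varepsilon'$ as long as $m^{1-\varepsilon'} n^{\varepsilon - \varepsilon'} \le 1$, which holds for all polynomially bounded $m$) to widen the gap into $\omega(\log n)$ bits, or else directly invoke a TVZ-style ``pseudorandomness implies pseudo-incompressibility'' reduction. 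Throughout, the main technical care is to preserve the infinitely-often quantifier and to keep the correctness-amplification overhead from consuming the small three-bit slack.
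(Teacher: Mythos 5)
Your cycle and the trivial implication \Cref{i:char-owf:optimal} $\Rightarrow$ \Cref{i:char-owf:nontrivial} match the paper, and for \Cref{i:char-owf:nontrivial} $\Rightarrow$ \Cref{i:char-owf:no-owf} the paper simply cites \cite[Proposition~3.2]{TrevisanVZ05_cc_journals} (Levin's ``pseudorandom implies incompressible'' observation), which is essentially the argument you sketch; your worries about the three-bit slack are legitimate but resolvable by the standard threshold-on-$|\Enc(x)|$ distinguisher (on the uniform distribution a scheme with worst-case decoder correctness $2/3$ must have expected length $n-O(1)$ by counting, and a constant gap in the expectation of a $[0,\poly(n)]$-valued statistic yields a $1/\poly(n)$ threshold distinguisher), so no tupling or correctness amplification is needed.

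The genuine gap is in \Cref{i:char-owf:no-owf} $\Rightarrow$ \Cref{i:char-owf:optimal}. You claim that estimating $\mathcal{D}_n^*(b\mid x_{[i-1]})$ reduces to ``inverting a polynomial-time function on a specified target, which is possible in randomized polynomial time whenever no infinitely-often one-way function exists.'' This is false: the non-existence of (io-)one-way functions only gives an inverter that succeeds with high probability over a \emph{random} image $f(r)$, not on a worst-case target. Accordingly, the paper's \Cref{lemma:io-ext-to-nba} (via the Impagliazzo--Levin distributional inverter, not Goldreich--Levin) yields a next-bits predictor only on a subset $S\subseteq\supp(\mathcal{D})$ of probability mass $1-1/s(n)$, so you cannot invoke \Cref{thm:opt-avg-coding-in-introduction}, whose hypothesis is predictability on the \emph{entire} support. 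What you actually obtain is the average-case coding guarantee of \Cref{optimal coding-ae}: a bound on $|\Enc(x)|$ and on decoding correctness that holds only for most $x\sim\mathcal{D}_n$. Item~\Cref{i:char-owf:optimal}, however, requires a scheme in which $\Dec(\Enc(x))=x$ with probability $2/3$ for \emph{every} $x\in\supp(\mathcal{D}_n)$ while the length bound holds in expectation. Bridging this is the actual content of the paper's proof: it builds a modified encoder $\Enc'$ that empirically estimates, for the given $x$, the probability that $\Dec(\Enc(x))=x$ with a short encoding; if the estimate is high it sends $0\concat\Enc(x)$, and otherwise it falls back to sending $1\concat x$ verbatim, which restores worst-case decodability at the cost of an $O(1)$ expected-length overhead (since the fallback fires with probability $O(1/\ell(n))$ over $x\sim\mathcal{D}_n$). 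Your proposal omits both the average-case limitation of the predictor and this repair step, so as written the first implication does not go through.
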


\section{Proof Overview}
At a high level, our proof of symmetry of information for $\rK^\poly$ under the non-existence of one-way functions proceeds as follows.
\begin{enumerate}
    \item To prove an efficient coding theorem for $\rK^\poly$ with respect to next-bits predictable distributions (\Cref{thm:opt-avg-coding-in-introduction}), we apply arithmetic encoding \cite[cf.][Sections~5.9 and~13.3]{0016881_daglib_books} to a \emph{randomized} next-bits predictor and then ``pseudo-derandomize'' the encoding by using the techniques of \citet{GoldbergS91_siamcomp_journals,TrevisanVZ05_cc_journals}.
    \item Using the distributional inverter of \citet{ImpagliazzoL89_focs_conf} (as in \cite{ImpagliazzoL90_focs_conf,HiraharaN23_focs_conf}), it can be shown that under the non-existence of one-way functions, any samplable distribution has a next-bits predictor (on average).  This enables us to deduce the average-case efficient conditional coding theorem for $\rK^\poly$
    under the non-existence of one-way functions (\Cref{i:OWF-io-noOWF} $\implies$ \Cref{i:OWF-io-efficient} in \Cref{t:OWF-io}) from \Cref{thm:opt-avg-coding-in-introduction}.
    \item Average-case symmetry of information follows from the average-case conditional coding theorem, as in    \cite{HiraharaILNO23_stoc_conf}.
\end{enumerate}
We emphasize the simplicity over the previous work \cite{HiraharaILNO23_stoc_conf}, which we view as the strength of this work.
Below, we explain the proof ideas of the coding theorem for $\rK^\poly$ in \Cref{subsection:optcode}
and then the proof of the search-to-decision reduction in \Cref{subsection:search-to-decision}.

\subsection{Optimal Coding Theorems for \texorpdfstring{$\rK^\poly$}{rK poly}}
\label{subsection:optcode}

The starting point of this work is the insightful work of \citet{HaitnerMS23_innovations_conf}, who showed that any distribution incompressible to $k$ bits has $k - 2$ bits of next-bit pseudoentropy for non-uniform algorithms.
Although we defer the definition of next-bit pseudoentropy to \Cref{sec:uniform-HMS},
their proof proceeds as follows.
\begin{enumerate}
    \item If a distribution $\mathcal{D}$ does not have $k-2$ bits of next-bit pseudoentropy, then by the work of \citet{VadhanZ12_stoc_conf}, there exists a ``$\KL$ predictor'' $P$ that predicts any next bit with a reasonably high accuracy.
    \item The predictor $P$ induces a \emph{polynomial-size-computable distribution} $\mathcal{D}^P$ (i.e., a non-uniform analogue of $\PComp$) such that $\mathcal{D}^P$ and $\mathcal{D}$ are close in terms of $\KL$ divergence.
    \item Applying the arithmetic encoding to $\mathcal{D}^P$ as in the work of \citet{Levin86_siamcomp_journals}, they obtain an efficient coding theorem for $\mathcal{D}^P$ and thus for $\mathcal{D}$; i.e., $\mathcal{D}$ admits an encoding and decoding scheme that can be computed by \emph{polynomial-size circuits}.
\end{enumerate}

The main idea of the proof of (\Cref{i:OWF-io-noOWF} $\implies$ \Cref{i:OWF-io-efficient}) in \Cref{t:OWF-io} is to replace the $\KL$ predictor in the proof of \cite{HaitnerMS23_innovations_conf} with the next-bits predictor that can be constructed from \cite{ImpagliazzoL89_focs_conf,ImpagliazzoL90_focs_conf,HiraharaN23_focs_conf}.
However, this poses a technical challenge:  We consider a one-way function secure against \emph{uniform algorithms}, in which case the next-bits predictor $P$ is a randomized algorithm, and arithmetic encoding may not be applicable.
This issue was also noted in \cite{HaitnerMS23_innovations_conf} and prevented \citeauthor{HaitnerMS23_innovations_conf} from obtaining the uniform version of their results when shared random bits are not available.
To explain the issue briefly, let us explain how the arithmetic encoding works.
Assuming that the cumulative distribution function $F_{\mathcal{D}}(x):=\sum_{y<x}\mathcal{D}(x)$ is efficiently computable by a deterministic algorithm,
a string $x$ can be encoded into the first $\lceil-\log \mathcal{D}(x)\rceil+1$ bits of the value $F_{\mathcal{D}}(x) + \mathcal{D}(x)/2$. 
If $F_\mathcal{D}(x)$ is efficiently approximated by a \emph{randomized} algorithm $P$, 
the arithmetic encoding of $x$ may largely depend on the internal randomness of $P$.
To address this issue, we would need a \emph{pseudo-deterministic} algorithm that approximates $F_\mathcal{D}(x)$, i.e., an algorithm that produces a fixed approximate value for $F_\mathcal{D}(x)$ with high probability.

We pseudo-derandomize arithmetic encoding by using the techniques of ``adding noise and rounding'' developed by \citet{GoldbergS91_siamcomp_journals,TrevisanVZ05_cc_journals}, where they addressed similar issues to obtain randomized compression algorithms for flat sources over a language in $\P$~\cite{GoldbergS91_siamcomp_journals} and a witness set for $\NP$~\cite{TrevisanVZ05_cc_journals}.  The rough ideas are as follows.  When we execute the next-bits predictor for approximating $F_{\mathcal{D}}(x)$ and $\mathcal{D}(x)$, we add a random noise and round the noised value to the nearest value in the integer multiples of a certain real value. As was shown in the previous works, the outcome of the next-bits predictor is fixed with high probability over the choice of the random noise, where the random noise is required to be shared between the encoder and decoder, but the description length of the random noise is logarithmically small.  This idea enables us to make the next-bits predictor {pseudodeterministic} only with a logarithmic amount of shared randomness.

    The techniques of ``adding noise and rounding'' can cause another issue with the accuracy of the next-bits predictor. More specifically, adding noise and rounding yield an additional accuracy error, and when the error is much larger than the next-bit probability, the accuracy of the approximation of the next-bit probability can become insufficient for decoding in arithmetic encoding. To address this issue, we avoid using the approximate value produced by the next-bits predictor when the next-bit probability is small, and in this case, we embed the next bit into the encoding with the position. We call a next bit with a small next-bit probability a \emph{light next bit}. Namely, our encoding algorithm first determines whether the next bit is light {by using the next-bits predictor} and if so, it embeds the next bit into the encoding; otherwise, it uses the next-bit prediction with the techniques of addition noise and rounding.  Using these ideas, we can show that, for each string $x\in\bin^n$ in the support of the distribution $\mathcal{D}$, the length of the encoding is roughly at most $-\log \mathcal{D}(x)+O((\num(x) + 1) \cdot \log n)$, where $\num(x)$ is the number of the light next bits of $x$. We can easily observe that the number of the light next-bits of $x$ is $0$ with high probability over $x\sim\mathcal{D}$, which implies the optimal coding property for $\rK^{\poly}$ with an efficient encoder. The same idea enables us to prove the \emph{conditional} coding because the next-bits predictor can approximate the next-bit probability starting from any conditional string.

\subsection{Error-Prone Average-Case Search-to-Decision Reductions for \texorpdfstring{$\rK^\poly$}{rK poly}}
\label{subsection:search-to-decision}

We describe the proof ideas behind \Cref{t:owf-rKt-formal}. First of all, it was implicitly shown in \cite{LiuP20_focs_conf} that average-case tractability of (decisional) $\mathsf{MINrKT}$ over the uniform distribution implies the non-existence of one-way functions. Then, to show \Cref{t:owf-rKt-formal}, it suffices to show that if one-way functions do not exist, then $\mathsf{Search}$-$\mathsf{MINrKT}$ can be solved on average over polynomial-time samplable distributions.

At a high level, our proof follows the approach in \cite{LiuP20_focs_conf}, which shows that the non-existence of one-way functions implies the average-case tractability of $\mathsf{Search}$-$\mathsf{MINKT}$ over the \emph{uniform} distribution. Here $\mathsf{Search}$-$\mathsf{MINKT}$ is the problem of finding a minimum $t$-time program (or, a $\K^t$-witness) of a given string. In fact, their result can be generalized to any \emph{polynomial-time computable} distribution. Next, we describe this approach in more detail.

Roughly put, the approach consists of the following steps:
\begin{enumerate}
	\item Construct a function $f$ such that if $f$ can be inverted on average over uniformly random inputs of $f$, then one can obtain an average-case algorithm for finding $\K^t$-witnesses, over the distribution where each $x$ has probability mass $2^{-\K^t(x)}$.
	
	\item Show that such an average-case algorithm also works for any fixed-polynomial-time computable distribution.
\end{enumerate}

The authors of \cite{LiuP20_focs_conf} construct a function $f$ as follows: $f$ takes an integer $i\in[n+O(1)]$, representing the length of a program, and a program $\Pi\in\bool^{i}$. It then obtains the output string $x$ of $\Pi$ after running for $t$ steps. Finally, it outputs $(i,x)$.

Let us first suppose that we can invert $f$ in the worst case. Then, given $x$, one can find the smallest $i^*$ such that $(i^*,x)$ is inverted successfully, in which case we obtain a program of length $i^*$ that outputs $x$ within $t$ steps. Such a program will be a $\K^t$-witness of $x$.

In the case that we can invert $f$ on average over uniformly random inputs, such a search algorithm will succeed on average over $x$ sampled according to $\D_f$, which is the distribution induced by $f$ (over uniformly random inputs). It is easy to observe that for each $x$, $\D_f$ will output $x$ with probability at least about $2^{-\K^t(x)}$. In this case, we have that $\D_f$ \emph{dominates}\footnote{Recall that a distribution $\D$ dominates another distribution $\D'$ if $\D(x)\geq \D'(x)/\poly(n)$ for every $x$.} the distribution $Q^t$, which is defined as the (semi-)distribution that assigns each $x$ with probability mass $2^{-\K^t(x)}$. As a result, the average-case search algorithm that works for $\D_f$ also works for $Q^t$. This completes the description of the first step.

For the second step, we want to show that the same average-case search algorithm also works for any fixed-polynomial-time computable distribution $\D$. Again, it suffices to show that for a sufficiently large polynomial $t$, $Q^t$ dominates $\D$. In other words, for every $x$, $2^{-\K^t(x)}\gtrsim \D(x)$. Note that this essentially follows from the known coding theorem for polynomial-time computable distributions with respect to the measure $\K^\poly$ \cite{Levin86_siamcomp_journals}.

Next, we describe how to apply the above approach to obtain an average-case algorithm for finding an $(1/\ell)$-$\rK^t$-witness of $x$ while $x$ is sampled over a \emph{polynomial-time samplable} distribution. More specifically, we will do the following.
\begin{enumerate}
	\item Construct a function $f$ such that if $f$ can be inverted on average over uniformly random inputs, then one can obtain an average-case algorithm for finding an $(1/\ell)$-$\rK^t$-witnesses, over the distribution where each $x$ has probability mass $2^{-\rK^t(x)}$. 
	\item Show that such an average-case algorithm also works for any fixed-polynomial-time samplable distribution.
\end{enumerate}
We will need new ideas in both steps described above.

Our construction of the function $f$ is as follows:  $f$ takes an integer $i\in [n+O(1)]$, a \emph{randomized} program $\Pi\in\bool^{i}$, as well as a string $r\in\bool^t$, which will be used as the internal randomness for running $\Pi$. We then obtain $x$, which is the output of $\Pi$ running with randomness $r$ after $t$ steps. Now the key step here is that we will ensure that $\Pi$ is a random program that outputs $x$ with probability at least $2/3-1/\ell$. This is done using a randomized polynomial algorithm $V$ with the following property: With high probability, for every $(\Pi,x)$,
\begin{itemize}
		\item if within $t$ steps, $\Pi$ outputs $x$ with probability at least $2/3$, then $V$ accepts, and  
\item if within $t$ steps, $\Pi$ outputs $x$ with probability less than $2/3-1/\ell$, then $V$ rejects.
\end{itemize}
(For the sake of simplicity in this proof overview, think of $V$ as a deterministic algorithm.)
Finally, if $(\Pi,x)$ passes the test of $V$, we output $(i,x)$. Otherwise, we output $\bot$.

The idea is that for such a function $f$, if we invert an image $(i,x)$ successfully, then we will obtain a randomized program $\Pi$ along with some randomness $r$ such that $\Pi$, running on $r$ for $t$ steps, outputs $x$. Moreover, it holds that $(\Pi,x)$ passes the test of $V$, which means $\Pi$ outputs $x$ with probability at least $2/3-1/\ell$. Therefore, by finding the smallest $i^*$ such that $(i^*,x)$ is inverted successfully, we can obtain an $(1/\ell)$-$\rK^t$-witness of $x$.

Also, when considering the distribution $\D_f$ induced by $f$ (over uniformly random inputs), note that for every $x$, if a $\rK^t$-witness of $x$ is picked, which happens with probability at least $\frac{1}{O(n)}\cdot 2^{-\rK^t(x)}$, then after running $\Pi$ for $t$ steps, we will obtain $x$ with probability at least $2/3$. Moreover, $(\Pi,x)$ will pass the test of $V$. Therefore, for every $x$, $\D_f$ will output $x$ with probability at least about $2^{-\rK^t(x)}$. 

As discussed previously, this allows us to obtain an average-case search algorithm for $\mathsf{Search}$-$\mathsf{MINrKT}$ over the distribution $Q^t$, which assigns each $x$ with probability mass $2^{-\rK^t(x)}$. Now, to show that the same (average-case) search algorithm also works for any fixed-polynomial-time samplable distribution $\D$, it suffices to show that $Q^t$ dominates $\D$. Another key observation here is that to show the former, we do not necessarily need $Q^t$ to dominate $\D$ \emph{in the worst case}; it suffices for $Q^t$ to dominate $\D$ \emph{on average}. In other words, for almost all $x\sim \D$, $2^{-\rK^t(x)}\gtrsim \D(x)$. Then this follows from our average-case coding theorem for polynomial-time samplable distributions with respect to $\rK^{\poly}$ under the non-existence of one-way functions (\Cref{i:OWF-ae-noOWF} $\implies$ \Cref{i:OWF-ae-Coding} in \Cref{t:OWF-ae}).

\paragraph{Acknowledgements.} Zhenjian Lu received support from the UKRI Frontier Research Guarantee Grant EP/Y007999/1.

 \section{Preliminaries}
	\subsection{Notation}
	We use the notation $\varepsilon$ to represent an empty string.

    For a distribution $\D$ supported over $\bool^n\times\bool^n$ and $y\in\bool^n$, we let $\D(\cdot \mid y)$ denote the conditional distribution of $\D$ on the first half given that the second half is $y$.
	
	For $n,n'\in \N$ with $n\le n'$, let $[n:n'] = \{n,n+1,\ldots, n'\}$. Let $[n]:=[1:n] = \{1,\ldots,n\}$ for each $n\in\N$. 
	
	For a string $x\in\bin^*$ and $S\subseteq [|x|]$, let $x_{S}$ denote a substring of $x$ indicated by $S$, i.e., $x_S=x_{i_1}\ccc x_{i_k}$ for $S=\{i_1,\ldots,i_k\}$, where $i_1<\cdots <i_k$. Particularly, $x_{[i]} = x_1\ccc x_i$ and $x_{[i:j]} = x_i\circ x_{i+1}\ccc x_j$. For simplicity, let $x_{[0]}= \varepsilon$ for every $x\in\bin^*$.
	
	For a distribution $\mathcal{D}$ over $\bin^*$ and strings $x,y\in\bin^*$, we define $\mathcal{D}^*(x\mid y)\in [0,1]$ as \[\mathcal{D}^*(x\mid y):= \Prob_{z\sim \mathcal{D}}\mleft[z_{[|y|+1:|y|+|x|]}=x\middle| z_{[|y|]} = y\mright].\] 
	When $y=\varepsilon$, we drop ``$|y$'' from the notation above, i.e., 
	\[\mathcal{D}^*(x):= \Prob_{z\sim \mathcal{D}}\mleft[z_{[|x|]}=x\mright].\]
	
	For every distribution $\mathcal{D}$ over $\bin^*$, every $x\in \bin^*$, and $k\in\N$, we use the notation $\Next(\mathcal{D};x)$ to refer to the conditional distribution of the next bit of a subsequent string of $x$ selected according to $\mathcal{D}$. Namely, for each $b\in\bin$,
	\[\Prob_{b' \sim \Next(\mathcal{D};x)}[b'=b] = \mathcal{D}^*(b\mid x).\]

    When we consider a \emph{polynomial-time} decoding algorithm, the time bound is regarded as a function in the length of the original string before being encoded rather than the given encoding or input.  
	
	\subsection{Useful Tools}

 \begin{theorem}[Coding Theorem \cite{levin74laws}]\label{t:coding} Let $\mathcal{E}$ be a distribution whose cumulative distribution function can be computed by some program $p$. Then for every $x\in\supp(\mathcal{E})$,
	\[
	\K(x\mid p) \leq \log \frac{1}{\mathcal{E}(x)}+O(1).
	\]
\end{theorem}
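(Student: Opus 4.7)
The plan is to apply classical arithmetic coding relative to the conditioning program $p$. First, I would fix a canonical order on $\bin^*$ (say, shortlex order), and to each $x \in \supp(\mathcal{E})$ associate the half-open interval $I(x) := [F(x), F(x) + \mathcal{E}(x)) \subseteq [0,1)$, where $F(x) := \sum_{y < x} \mathcal{E}(y)$. Since $p$ computes the cumulative distribution function, both $F(x)$ and $\mathcal{E}(x)$ are recoverable from $p$, and the intervals $\{I(x)\}_{x \in \supp(\mathcal{E})}$ form a pairwise disjoint family inside $[0,1)$.

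Next, I would set $\ell_x := \lceil \log(1/\mathcal{E}(x))\rceil + 1$ and observe that $I(x)$ has length $\mathcal{E}(x) \ge 2 \cdot 2^{-\ell_x}$, so it is guaranteed to contain at least one dyadic rational of the form $d / 2^{\ell_x}$. I would pick such a $d$ and let $s_x \in \bin^{\ell_x}$ denote its binary representation, padded with leading zeros to length exactly $\ell_x$, so that $0.s_x \in I(x)$.

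To conclude, I would exhibit a fixed constant-size universal-machine program $M$ that, given $p$ as side information and an input $s$, forms the dyadic rational $\alpha := 0.s$ and enumerates $y \in \bin^*$ in shortlex order, invoking $p$ to compute $F(y)$ and $\mathcal{E}(y)$ and halting at the unique $y$ for which $\alpha \in I(y)$. Disjointness ensures this $y$ is unique and equals $x$ when $s = s_x$. Therefore
\[
\K(x \mid p) \le |M| + |s_x| \le \log\frac{1}{\mathcal{E}(x)} + O(1).
\]

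I do not anticipate a genuine obstacle here, as this is Levin's classical argument. The only points requiring care are (i) that the decoding enumeration is guaranteed to terminate uniquely, which follows from the disjointness of the $I(y)$ and the fact that $\alpha$ lies in exactly one of them, and (ii) that no self-delimiting overhead is incurred, since $\K$ is plain Kolmogorov complexity and the program is just the fixed $O(1)$-bit header of $M$ followed by the raw bitstring $s_x$ of a-priori-unknown length --- the universal machine simply reads the remainder of its input as $s$.
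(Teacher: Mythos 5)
Your proposal is correct and is exactly the classical Shannon--Fano--Elias/arithmetic-coding argument that the paper imports from Levin without proof (the paper only cites \cite{levin74laws} for \Cref{t:coding}); it also mirrors the encoding the paper itself uses in its $\Enc_q/\Dec_q$ construction, which takes a $(\lceil -\log \mathcal{E}(x)\rceil+1)$-bit truncation of a point in the interval $[F(x), F(x)+\mathcal{E}(x))$ rather than a dyadic rational inside it --- an equivalent variant. The only point worth flagging is that the argument implicitly assumes the program $p$ returns the exact (e.g., rational) values of the cumulative distribution function, so that membership of $\alpha$ in $I(y)$ is decidable; this is the standard reading of ``computable'' here and is the same level of rigor at which the result is used in the paper.
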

	
	\begin{lemma}[See, e.g., {\cite[Lemma 9]{HiraharaILNO23_stoc_conf}}]\label{l:incompressible}
		There exists a universal constant $b>0$ such that for every distribution family $\{\mathcal{D}_n\}_{n \in \Nat}$ supported over $\bool^n\times \bool^n$, every $n\in\mathbb{N}$, and every $y\in\bool^n$,
		\[
		\Prob_{x\sim \mathcal{D}_n(\cdot\mid y)}\mleft[\mathsf{K}(x\mid y)< \log\frac{1}{\mathcal{D}_n(x\mid y)}-\alpha\mright]< \frac{n^b}{2^{\alpha}}.
		\]
	\end{lemma}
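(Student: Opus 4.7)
The plan is a standard Kolmogorov-complexity counting argument, parametrised by the size of each complexity layer. Fix $n$, $y \in \bool^n$, and $\alpha \ge 0$, and abbreviate $q(x) := \mathcal{D}_n(x \mid y)$. The event in question is equivalent to $q(x) < 2^{-\K(x\mid y) - \alpha}$. I would partition the support $\bool^n$ by the value of the conditional Kolmogorov complexity: for every integer $k \ge 0$, let $B_k = \{\, x \in \bool^n : \K(x \mid y) = k \,\}$. The standard counting bound, that there are at most $2^{k+1}$ prefix-free descriptions of length at most $k$, gives $|B_k| \le 2^{k+1}$. Moreover, since $|x| = n$, a universal constant $c_0$ satisfies $\K(x \mid y) \le n + c_0$ for every $x \in \bool^n$ (print-the-string program), so only layers $k \in \{0, 1, \ldots, n+c_0\}$ are nonempty.

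Next I would compute the probability of the bad event by grouping by layer:
\[
\Prob_{x \sim \mathcal{D}_n(\cdot \mid y)}\mleft[\K(x \mid y) < \log\tfrac{1}{q(x)} - \alpha\mright]
\;=\; \sum_{k=0}^{n+c_0} \sum_{\substack{x \in B_k \\ q(x) < 2^{-k-\alpha}}} q(x)
\;<\; \sum_{k=0}^{n+c_0} |B_k|\cdot 2^{-k-\alpha}
\;\le\; \sum_{k=0}^{n+c_0} 2^{k+1}\cdot 2^{-k-\alpha}
\;=\; (n+c_0+1) \cdot 2^{1-\alpha}.
\]
This is already of the right shape; to convert it into the claimed $n^b / 2^\alpha$ form I would pick a universal constant $b$ large enough that $n^b \ge 2(n + c_0 + 1)$ holds for all $n \ge 1$ (e.g.\ $b = 2$ suffices for all $n$ beyond a small threshold, and for the remaining finitely many small $n$ one either enlarges $b$ or observes that the bound is trivially at most $1$ when $\alpha \le b \log n$, which handles those cases vacuously).

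There is essentially no hard step: the whole proof is a one-line union bound once the layering is set up. The only mild subtlety is ensuring that the constant $b$ is genuinely universal — i.e.\ independent of $n$, $\alpha$, $y$, and the distribution family $\{\mathcal{D}_n\}$. This is handled as described above, using only that $\K(x \mid y) \le |x| + O(1)$ for $x \in \bool^n$, a universal fact depending solely on the fixed choice of universal machine defining $\K$. Thus the constant $c_0$, and hence $b$, can be fixed once and for all, independently of everything else in the statement.
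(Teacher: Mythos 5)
Your proof is correct and is the standard layered counting argument for this incompressibility bound; the paper itself does not prove the lemma but cites it (Lemma~9 of \cite{HiraharaILNO23_stoc_conf}), whose proof proceeds in essentially the same way, bounding the number of strings with $\K(x\mid y)=k$ by the number of descriptions of length at most $k$ and summing the resulting geometric contributions over the $O(n)$ nonempty layers. The only loose thread is the degenerate case $n=1$, where $n^b=1$ for every $b$ and neither of your two escape hatches applies; this edge case is immaterial to every use of the lemma in the paper (where $\alpha=\Omega(\log n)$ and $n$ is large), but strictly speaking it is not covered by "enlarge $b$."
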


    	\begin{fact}\label{f:K_small_than_rK}
		For every $x\in\{0,1\}^*$ and $t \in \mathbb{N}$,
		\[
		\mathsf{K}(x)\leq \mathsf{rK}^t(x).
		\]
	\end{fact}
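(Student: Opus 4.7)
The plan is to reduce randomized complexity to deterministic complexity by the standard ``majority-vote derandomization'' that works whenever no time bound is imposed on the simulator. Let $\Pi$ be a randomized program witnessing $\rK^t(x)$, so $|\Pi|=\rK^t(x)$ and $\Pr_{r\sim\{0,1\}^t}[\Pi(r)\text{ outputs }x\text{ within }t\text{ steps}]\ge 2/3$. I would construct a deterministic (unbounded-time) program $\Pi'$ which, given access to $\Pi$ and the time bound $t$, enumerates all $r\in\{0,1\}^t$, runs $\Pi$ on each $r$ for up to $t$ steps, and outputs the unique string that is produced by strictly more than half of the random seeds. Because the probability of producing $x$ exceeds $1/2$, such a majority string exists and equals $x$, so $\Pi'$ prints $x$.

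The description of $\Pi'$ consists of $\Pi$ together with a constant-sized routine implementing the enumeration and majority selection (the value of $t$ can be encoded inside that routine, or read from the input in a self-delimiting way with negligible overhead absorbed into the $O(1)$ convention in the definition of $\K$). Therefore $\K(x)\le |\Pi|+O(1)=\rK^t(x)+O(1)$, and under the usual convention that Kolmogorov complexity is defined up to an additive constant (or equivalently that the universal machine used to define $\K$ is fixed with enough overhead to absorb the simulator), this yields $\K(x)\le \rK^t(x)$ as stated. The only ``obstacle'' is cosmetic: making sure the additive constant for the majority-vote simulator is consistent with the universal machine fixed in the paper's conventions for $\K$ and $\rK^t$; there is no real mathematical content beyond the observation that unbounded deterministic simulation can emulate a bounded-error randomized program by exhaustive search over the random tape.
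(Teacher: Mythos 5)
Your approach is the standard one; the paper states this as a \emph{Fact} with no proof, and exhaustive enumeration of the random tape followed by a majority vote is exactly the intended argument, with the additive constant absorbed into the choice of universal machine as you say. One detail does need repair: your simulator $\Pi'$ must know $t$ in order to enumerate $\{0,1\}^t$ and to cut off the simulation, and neither of your suggestions for supplying $t$ actually costs $O(1)$ bits --- hardcoding $t$ into the routine makes the wrapper $\Theta(\log t)$ bits long rather than constant, and there is no ``input'' from which to read $t$, since $\K(x)$ here is unconditional. As written, your construction only yields $\K(x)\le \rK^t(x)+O(\log t)$, which is weaker than the stated bound. The standard fix is to dovetail so that $t$ is never needed: for $T=1,2,3,\dots$, run $\Pi$ on every $r\in\{0,1\}^T$ for $T$ steps and halt as soon as some string is output by strictly more than half of the seeds, returning that string. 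A program that halts within $T$ steps reads at most $T$ random bits, so any string achieving a strict majority at some stage $T\le t$ also achieves a strict majority at stage $t$, where only $x$ can do so because $x$ already has probability at least $2/3$ there; hence the procedure terminates by stage $t$ with output $x$, and the overhead is a genuine machine-independent $O(1)$.
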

 
  	\begin{lemma}[Success Amplification for $\rK^t$]\label{l:success_amp}
		For any string $x\in\bool^*$, time bound $t\in\mathbb{N}$, and $q\in\mathbb{N}$, we have
		\[
		\rK_{1-1/q}^{t'}(x)\leq \rK^{t}(x) + O(\log \log q),
		\]
		where $t'\vcentcolon=t\cdot O(\log q)$.
	\end{lemma}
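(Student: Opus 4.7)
The plan is to prove this by a standard success amplification via independent repetition and plurality voting. Let $M$ be a randomized program achieving $\rK^t(x)$, so $|M|=\rK^t(x)$ and $M$ outputs $x$ in time $t$ with probability at least $2/3$ over its internal randomness. I will construct a new randomized program $M'$ that invokes $M$ several times on independent random strings and returns the string that appears most frequently among the outputs.

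Concretely, fix a constant $c>0$ (to be chosen via Chernoff) and set $k := \lceil c \log q \rceil$. The program $M'$ will be the following: read $k$ independent blocks $r_1,\dots,r_k$ of randomness, each of length $t$; for each $i \in [k]$ simulate $M$ on randomness $r_i$ for $t$ steps to obtain an output string $x_i$; then output the string among $x_1,\dots,x_k$ that appears the largest number of times (breaking ties arbitrarily). To describe $M'$ it suffices to give $M$ together with the integer $k$, encoded in a prefix-free way using $O(\log k) = O(\log\log q)$ bits. Hence $|M'| \leq |M| + O(\log\log q) = \rK^t(x) + O(\log\log q)$. The running time is $k\cdot t$ for the simulations plus a low-order overhead for plurality selection, all bounded by $t \cdot O(\log q) = t'$.

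For the correctness probability, let $Y_i$ be the indicator that $M$ on randomness $r_i$ outputs $x$ within $t$ steps; the $Y_i$ are independent Bernoullis with $\Exp[Y_i] \ge 2/3$. Whenever $\sum_i Y_i > k/2$, the string $x$ strictly dominates in the multiset $\{x_1,\dots,x_k\}$ and hence $M'$ outputs $x$. By a Chernoff bound,
\[
\Prob\mleft[\sum_{i=1}^{k} Y_i \le k/2\mright] \le \exp(-\Omega(k)) \le 1/q
\]
provided the constant $c$ in $k = \lceil c \log q \rceil$ is chosen sufficiently large. Thus $M'$ is a randomized $t'$-time program of length $\rK^t(x)+O(\log\log q)$ that outputs $x$ with probability at least $1-1/q$, witnessing $\rK_{1-1/q}^{t'}(x) \le \rK^t(x) + O(\log\log q)$ as required.

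There is essentially no serious obstacle here; the only point requiring a modicum of care is ensuring that the integer $k$ is stored using a self-delimiting encoding of length $O(\log\log q)$ (so that the universal machine can parse $M'$ unambiguously from the concatenation of the description of $M$ and $k$), and that the plurality-selection step truly runs within the claimed $t' = t \cdot O(\log q)$ time bound once the $k$ simulations are accounted for.
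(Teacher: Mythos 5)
Your proof is correct and is exactly the standard repetition-plus-plurality argument; the paper states this lemma without proof as a routine amplification fact, and the key point you correctly identify is that the $O(\log\log q)$ overhead comes from the self-delimiting encoding of the repetition count $k$. The only cosmetic issue is the running time of the selection step: naive pairwise plurality counting over $k$ strings of length up to $t$ costs $t\cdot O(\log^2 q)$, so to stay within $t' = t\cdot O(\log q)$ you should use a linear-scan majority vote (e.g., Boyer--Moore), which suffices because $x$ is a strict majority among the outputs with probability at least $1-1/q$.
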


	    \begin{lemma}
		[{See, e.g., \cite[Lemma 6.14]{HiraharaN23_focs_conf}}]
		\label{lemma:random string is computationally shallow}
		For every polynomial-time samplable distribution family $\{\mathcal{D}_n\}_{n \in \Nat}$,
		there exists a polynomial $\rho$ such that 
		for every $n \in \Nat$, every $t \ge \rho(n)$, and every $\alpha \in \Nat$,
		\[
		\Prob_{x \sim \mathcal{D}_n} \mleft[ \mathsf{cd}^t(x) > \alpha \mright] \le 2^{-\alpha + O(\log n)}.
		\]
	\end{lemma}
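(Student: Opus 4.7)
The plan is to sandwich both $\pK^t(x)$ and $\K(x)$ around the information content $\log(1/\mathcal{D}_n(x))$: an upper bound on $\pK^t$ from the already-established coding theorem for probabilistic Kolmogorov complexity over samplable distributions, together with a matching lower bound on $\K$ from the incompressibility Lemma~\ref{l:incompressible}. Since $\cd^t(x) = \pK^t(x) - \K(x)$, subtracting the two bounds will yield the desired tail estimate directly.

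First I would invoke the coding theorem for $\pK^t$ on polynomial-time samplable distributions: there exists a polynomial $\rho$, depending only on the sampler of $\{\mathcal{D}_n\}$, such that for every $n$, every $t \ge \rho(n)$, and every $x \in \supp(\mathcal{D}_n)$,
\[\pK^{t}(x) \;\le\; \log \frac{1}{\mathcal{D}_n(x)} + O(\log n).\]
The reason this can be taken \emph{pointwise} over the support (rather than merely on average) is that the $\pK^t$ model provides public random bits, which can be used to sample a hash from an appropriate family; given the hash, each $x$ can be encoded as a short description of its pre-image under the sampler together with its hash value, yielding a program of the stated length irrespective of which $x$ in the support was fed in.

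Next I would apply the unconditional version of Lemma~\ref{l:incompressible} --- formally the special case $y=\varepsilon$, viewing $\mathcal{D}_n$ as a distribution on $\bool^n$ --- to obtain for every $\alpha \in \Nat$,
\[\Prob_{x \sim \mathcal{D}_n}\mleft[\K(x) < \log \frac{1}{\mathcal{D}_n(x)} - \alpha\mright] < \frac{n^{b}}{2^{\alpha}}.\]
Subtracting this lower bound on $\K$ from the $\pK^t$ upper bound of the previous step, with probability at least $1 - n^b/2^\alpha$ over $x \sim \mathcal{D}_n$ one has
\[\cd^{t}(x) \;=\; \pK^{t}(x) - \K(x) \;\le\; \alpha + O(\log n),\]
and absorbing the $O(\log n)$ slack into the threshold gives exactly $\Prob_{x \sim \mathcal{D}_n}[\cd^t(x) > \alpha] \le 2^{-\alpha + O(\log n)}$, as required.

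I do not anticipate a genuine obstacle: the non-trivial ingredient is the pointwise $\pK^t$-coding theorem for samplable distributions, which already exists in the literature and need not be re-proved. The only points of care are coordinating the polynomial $\rho$ guaranteed by that theorem with the polynomial demanded by the present lemma's statement, and making sure Lemma~\ref{l:incompressible} is instantiated on single strings rather than on pairs --- both of which are bookkeeping rather than substantive obstacles.
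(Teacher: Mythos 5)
Your proof is correct, and it is essentially the argument behind the cited result: the paper itself gives no proof of \Cref{lemma:random string is computationally shallow} (it defers to \cite[Lemma 6.14]{HiraharaN23_focs_conf}), and that proof is exactly the combination you describe — the pointwise coding theorem for $\pK^t$ over samplable distributions \cite{LuOZ22_icalp_conf} giving $\pK^{t}(x)\le\log\frac{1}{\mathcal{D}_n(x)}+O(\log n)$ for $t\ge\rho(n)$, against the counting lower bound $\K(x)\ge\log\frac{1}{\mathcal{D}_n(x)}-\alpha$ holding except with probability $n^{b}2^{-\alpha}$. The bookkeeping (shifting the threshold by $O(\log n)$ and using that $\pK^t$ is non-increasing in $t$) is handled correctly.
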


    \begin{lemma}[Implicit in \cite{ImpagliazzoL90_focs_conf,ImpagliazzoL89_focs_conf}; see also \cite{HiraharaN23_focs_conf}]\label{t:next-bit-extrapolation}
        If almost everywhere \emph{(}resp. infinitely-often\emph{)} secure one-way functions do not exist, then for every samplable distribution family $\{\mathcal{D}_n\}_{n\in\N}$, where each $\mathcal{D}_n$ is over $\bin^{\ell(n)}$ for an efficiently computable $\ell(n)\le \poly(n)$, and for every polynomial $p$, there exists a polynomial-time randomized algorithm $\Ext$ such that for infinitely many \emph{(}resp. for all\emph{)} $n\in\N$,
        \[\Prob_{x\sim \mathcal{D}_n}\mleft[\forall i\in [n],\> \SD\mleft(\Ext\mleft(x_{[i-1]},1^{n}\mright), \Next(\mathcal{D}_n;x_{[i-1]})\mright)\le \frac{1}{p(n)}\mright]\ge 1-\frac{1}{p(n)},\]
        where $\SD(,)$ represents the total variation distance between two distributions.
    \end{lemma}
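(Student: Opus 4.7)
The plan is to invoke the Impagliazzo--Luby distributional inversion theorem: the non-existence of (infinitely-often, resp.\ almost-everywhere) secure one-way functions implies that for every polynomial-time function $f$ and every polynomial $q$, there is a polynomial-time randomized inverter $\calA$ with
\[
\SD\mleft( (f(U),\, \calA(f(U))),\; (f(U),\, U) \mright) \le 1/q(n)
\]
for infinitely many (resp.\ all) $n$. I would apply this to the polynomial-time function $f(1^n, i, r) := (1^n, i, S(r)_{[i-1]})$, where $S$ is the sampler for $\mathcal{D}_n$ and $r$ is its randomness (padding outputs to a common length so that $f$ is length-regular). The extrapolator $\Ext(y, 1^n)$ is then defined to run $\calA$ on $(1^n, |y|+1, y)$, parse its output as $(1^n, i, r')$, and return the bit $S(r')_i$.

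Next, I would apply the data-processing inequality, post-composing both sides of the inversion guarantee with the map $(1^n, i, r) \mapsto (i, S(r)_{[i-1]}, S(r)_i)$. On the ``ideal'' side this produces the distribution of $(I, x_{[I-1]}, x_I)$ for $I$ uniform on $[n]$ and $x \sim \mathcal{D}_n$; on the ``inverter'' side it produces $(I, x_{[I-1]}, \Ext(x_{[I-1]}, 1^n))$ in the same joint sense. Since both distributions agree on the first two coordinates, the standard identity expressing joint total variation as expected conditional total variation yields
\[
\Exp_{I \sim [n],\, x \sim \mathcal{D}_n}\mleft[ \SD\mleft(\Ext(x_{[I-1]}, 1^n),\; \Next(\mathcal{D}_n; x_{[I-1]})\mright) \mright] \le 1/q(n),
\]
equivalently $\Exp_x\bigl[\sum_{i \in [n]} \SD_i(x)\bigr] \le n/q(n)$, writing $\SD_i(x)$ for the inner total variation distance at index $i$.

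Finally, setting $q(n) := n \cdot p(n)^2$, a Markov step per $i$ followed by a union bound over $i \in [n]$ gives
\[
\Prob_{x \sim \mathcal{D}_n}\mleft[ \exists\, i \in [n]:\ \SD_i(x) > 1/p(n) \mright] \le p(n) \cdot n/q(n) = 1/p(n),
\]
which is exactly the statement of the lemma. The quantifier structure (infinitely-often versus almost-everywhere) propagates unchanged from the corresponding variant of the Impagliazzo--Luby theorem.

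The principal obstacle is bridging the gap between \emph{inversion accuracy}, which is phrased over the joint distribution of images and preimages, and the required \emph{per-prefix next-bit accuracy} that must hold uniformly in $i$. This is resolved in the middle step by combining data processing with the conditional-TV identity, converting image-side closeness into next-bit closeness averaged over the conditioning prefix; the concluding Markov-and-union-bound argument is then essentially automatic given our polynomial choice of $q$, provided we are careful to bake the index $i$ into the input to $f$ so that a single application of distributional inversion services all prefix lengths simultaneously.
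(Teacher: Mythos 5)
Your proposal is correct and follows exactly the route the paper intends: the paper gives no explicit proof of this lemma (it is cited as implicit in Impagliazzo--Luby and Hirahara--Nanashima), and its proof overview states that the lemma follows from the Impagliazzo--Luby distributional inverter applied to the prefix-sampling function, which is precisely your construction of $f$, $\Ext$, and the data-processing/conditional-TV/Markov chain of reductions. The only (standard, inessential) details left implicit are encoding a uniform index $i\in[n]$ from uniform bits and the level-by-level propagation of the infinitely-often versus almost-everywhere quantifiers through the Impagliazzo--Luby reduction.
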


    Particularly, we obtain the following.
	
	\begin{theorem}\label{t:extrapolation}
		If almost everywhere \emph{(}resp. infinitely-often\emph{)} secure one-way functions do not exist, then for every samplable distribution family $\{\mathcal{D}_n\}_{n\in\N}$, where each $\mathcal{D}_n$ is over $\bin^n\times\bin^n$, there exists a probabilistic polynomial-time algorithm $\mathsf{Ext}$  such that for all $\varepsilon^{-1},\delta^{-1}\in\N$ and for infinitely many \emph{(}resp. for all\emph{)} $n\in\Nat$, 
		\[\Prob_{y\sim \mathcal{D}^{(2)}_n}\mleft[\SD\mleft(\mathsf{Ext}(y;1^{\varepsilon^{-1}},1^{\delta^{-1}}),\mathcal{D}_n(\cdot\mid y)\mright)\le \varepsilon\mright]\ge 1-\delta,\]
		where $\mathcal{D}_n^{(2)}$ denotes the marginal distribution of the second element of $\mathcal{D}_n$.
	\end{theorem}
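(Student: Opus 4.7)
The plan is to derive \Cref{t:extrapolation} as a specialization of \Cref{t:next-bit-extrapolation} by viewing conditional sampling given $y$ as next-bit extrapolation starting from the prefix $y$. Concretely, define $\mathcal{D}'_n$ to be the samplable distribution over $\bool^{2n}$ that draws $(x,y)\sim \mathcal{D}_n$ and outputs $y\circ x$ (with $y$ placed first). This is polynomial-time samplable, and for any $y\in\supp(\mathcal{D}_n^{(2)})$ and any $i\in [n]$, $\Next(\mathcal{D}'_n;y\circ x_{[i-1]})$ is precisely the distribution of the $i$-th bit of a sample from $\mathcal{D}_n(\cdot\mid y)$ conditioned on its first $i-1$ bits equaling $x_{[i-1]}$.

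First, invoke \Cref{t:next-bit-extrapolation} on $\mathcal{D}'_n$ with a polynomial $q$ depending on $n,\varepsilon^{-1},\delta^{-1}$, obtaining a next-bits predictor $P$ of accuracy $1/q$. Then define $\mathsf{Ext}(y;1^{\varepsilon^{-1}},1^{\delta^{-1}})$ to iteratively invoke $P$ on the prefix $y\circ x_{[i-1]}$ for $i=1,\dots,n$, appending the sampled bit each time, and to output the resulting string $x\in\bool^n$. Here the plan relies on the standard fact, implicit in the Impagliazzo--Levin proof underlying \Cref{t:next-bit-extrapolation}, that $P$ can be made uniform in its accuracy parameter: a single randomized polynomial-time algorithm takes $1^q$ as an additional input and achieves TV accuracy $1/q$ with failure probability $1/q$ over a random prefix sampled from the underlying distribution.

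For the analysis, let $\tilde{D}_y$ denote the distribution over $\bool^n$ output by $\mathsf{Ext}(y;\cdot)$. A standard hybrid inequality for chain-rule samplers gives
\[\SD(\tilde{D}_y,\mathcal{D}_n(\cdot\mid y))\le \sum_{i=1}^n \Exp_{x_{[i-1]}\sim \mathcal{D}_n(\cdot\mid y)}\mleft[\SD\mleft(P(y\circ x_{[i-1]}),\Next(\mathcal{D}'_n;y\circ x_{[i-1]})\mright)\mright].\]
Set $\alpha(y):=\Prob_{x\sim \mathcal{D}_n(\cdot\mid y)}[\exists i\in [n],\SD(P(y\circ x_{[i-1]}),\Next(\mathcal{D}'_n;y\circ x_{[i-1]}))>1/q]$. \Cref{t:next-bit-extrapolation} yields $\Exp_y[\alpha(y)]\le 1/q$, and Markov gives $\Prob_y[\alpha(y)>\sqrt{1/q}]\le \sqrt{1/q}$. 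For any $y$ with $\alpha(y)\le \sqrt{1/q}$, each summand above is at most $\alpha(y)+1/q\le 2/\sqrt{q}$, so the total TV distance is at most $2n/\sqrt{q}$. Choosing $q=\Theta(\max(n^2\varepsilon^{-2},\delta^{-2}))$ makes the TV bound $\le \varepsilon$ and the failure probability over $y$ at most $\delta$. The i.o.\ versus a.e.\ distinction transfers directly, since the set of ``good'' $n$ is inherited from \Cref{t:next-bit-extrapolation} applied to $\mathcal{D}'_n$.

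The main obstacle is the ``uniform accuracy'' version of \Cref{t:next-bit-extrapolation}, namely that a single algorithm takes $1^q$ as input and achieves accuracy $1/q$, rather than one algorithm per polynomial $p$. This should fall out of the explicit Impagliazzo--Levin construction, since the distributional inverter obtained from the non-existence of one-way functions accepts the target inverse-polynomial loss as an input in unary; it is worth double-checking the proof in \cite{HiraharaN23_focs_conf} to confirm that this uniformity is preserved under the reduction used to deduce \Cref{t:next-bit-extrapolation}. Everything else is a one-line hybrid bound followed by Markov.
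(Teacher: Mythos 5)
Your proposal is correct and matches the route the paper intends: the paper presents \Cref{t:extrapolation} as an immediate consequence of \Cref{t:next-bit-extrapolation} (``Particularly, we obtain the following'') without writing out the derivation, and your sequential-sampling construction with the hybrid inequality plus Markov over $y$ is exactly that omitted argument. You also correctly flag the one point the paper glosses over --- that the predictor must be uniform in the accuracy parameter supplied in unary --- which indeed holds because the Impagliazzo--Levin distributional inverter takes its inversion-quality parameter as input.
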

\section{Coding for \texorpdfstring{$\rK^{\poly}$}{} Based on Next-bits Prediction}

In this section, we first present a meta-theorem showing that the approximation of next-bit probability yields a coding theorem for $\rK^{\poly}$ with an efficient encoder, where the encoding length can be worse than optimal depending on the number of \emph{light next-bits} defined below. Then, we prove \Cref{optimal coding-io,optimal coding-ae} as corollaries.

We introduce some notions to state the meta-theorem. First, we present the definition of next-bits prediction on a \emph{subset} of the support. For a distribution family $\mathcal{D}=\{\mathcal{D}_n\}_n$ and a subset $S\subseteq \supp(\mathcal{D})$, we use the notation $S_n$ to represent $S\cap\supp(\mathcal{D}_n)$ for each $n\in\N$ throughout the section.
\begin{definition}\label{def: nextbit approx}
	Let $\mathcal{D}=\{\mathcal{D}_n\}_n$ be a distribution family over $\bin^*$. For $S\subseteq \supp(\mathcal{D})$ and a polynomial $q$, the distribution $\mathcal{D}$ is said to be next-bits-predictable on $S$ with error parameter $q$ if there exists a randomized polynomial-time algorithm $P$ such that for every $n\in\N$, every $x\in S_n$, every $i\in [|x|]$, and every $b\in\bin$,
	\[\Prob_P\mleft[P(x_{[i-1]},b,1^n)\in [\mathcal{D}^*_n(b \mid x_{[i-1]})-1/q(n),\mathcal{D}^*_n(b \mid x_{[i-1]})+1/q(n)]\mright]\ge 1-1/q(n).\]
\end{definition}

Next, we introduce the key notion of light next-bits, which affects the bound on the length of encoding in the meta-theorem.
\begin{definition}[Light next-bit]\label{def: light next-bit}
For a distribution $\mathcal{D}$ over $\bin^*$, $\delta\in[0,1]$, $x\in\supp(\mathcal{D})$, and $i\in[|x|]$, we say that $b\in\bin$ is a $\delta$-light next-bit of $x_{[i-1]}$ (with respect to $\mathcal{D}$) if $\mathcal{D}^*(b\mid x_{[i-1]})\le \delta$. Moreover, we say that $x$ has a $\delta$-light next-bit if there exists $i\in[|x|]$ such that $x_i$ is a $\delta$-light next-bit of $x_{[i-1]}$.
\end{definition}

\begin{definition}[$\num_{\mathcal{D},\delta}$]
For a distribution $\mathcal{D}$ over $\bin^*$, $\delta\in[0,1]$, and $x\in\supp(\mathcal{D})$, we define $\num_{\mathcal{D},\delta}(x)$ as the number of $\delta$-light next-bits in $x$, i.e., 
\[\num_{\mathcal{D},\delta}(x) = |\{i:x_i\text{ is a $\delta$-light next-bit of $x_{[i-1]}$}\}|.\] 
For $j,j'\in[|x|]$ with $j<j'$, we also define $\num^{j,j'}_{\mathcal{D},\delta}(x)$ as
\[\num^{j,j'}_{\mathcal{D},\delta}(x) = |\{i\in[j,j']:x_i\text{ is a $\delta$-light next-bit of $x_{[i-1]}$}\}|.\] 
\end{definition}
We often omit the subscript ``$\mathcal{D}$'' from $\num_{\mathcal{D},\delta}$ and $\num^{j,j'}_{\mathcal{D},\delta}$ when $\mathcal{D}$ is trivially identified in context. 

The following property immediately follows from the definition.
\begin{proposition}
Let $\mathcal{D}$ be a distribution over $\bin^*$. For any $x\in\supp(\mathcal{D})$, the following hold:
\begin{itemize}
    \item For every $i,j\in[|x|]$ with $i<j$ and every $\delta,\delta'\in [0,1]$ with $\delta\le \delta'$, $\num^{i,j}_\delta(x)\le \num^{i,j}_{\delta'}(x)$.
    \item For every $\delta\in[0,1]$ and every $i,j,i'j'\in [|x|]$ such that $[i:j]\subseteq [i':j']$,  $\num^{i,j}_\delta(x)\le \num^{i',j'}_{\delta}(x)$.
\end{itemize}
\end{proposition}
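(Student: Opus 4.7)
The plan is to unpack the definition of $\num^{i,j}_{\mathcal{D},\delta}$ and observe that both claims reduce to the monotonicity of cardinality under set inclusion. Concretely, I would introduce, for each triple $(x,i,j,\delta)$, the index set
\[
L^{i,j}_\delta(x) \;:=\; \{\, k \in [i:j] \;:\; x_k \text{ is a } \delta\text{-light next-bit of } x_{[k-1]}\,\},
\]
so that $\num^{i,j}_\delta(x) = |L^{i,j}_\delta(x)|$ by definition. The task then reduces to showing $L^{i,j}_\delta(x) \subseteq L^{i,j}_{\delta'}(x)$ in the first case and $L^{i,j}_\delta(x) \subseteq L^{i',j'}_\delta(x)$ in the second.

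For the first item, I would fix $\delta \le \delta'$ and take any $k \in L^{i,j}_\delta(x)$. By definition, $\mathcal{D}^*(x_k \mid x_{[k-1]}) \le \delta$, and hence $\mathcal{D}^*(x_k \mid x_{[k-1]}) \le \delta'$ by transitivity of $\le$, so $x_k$ is a $\delta'$-light next-bit of $x_{[k-1]}$, giving $k \in L^{i,j}_{\delta'}(x)$. Monotonicity of cardinality under inclusion then yields the desired inequality.

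For the second item, assume $[i:j] \subseteq [i':j']$ and take $k \in L^{i,j}_\delta(x)$. Then $k \in [i:j] \subseteq [i':j']$ and the light-next-bit condition on $x_k$ depends only on $(x,k,\delta)$, not on the enclosing interval. Hence $k \in L^{i',j'}_\delta(x)$, and again we conclude by monotonicity of cardinality.

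There is essentially no obstacle here: the statement is a bookkeeping lemma recording two direct consequences of \Cref{def: light next-bit}, and both bullets are one-line verifications once the index-set formulation is introduced. The main thing to be careful about is simply that the predicate ``$x_k$ is a $\delta$-light next-bit of $x_{[k-1]}$'' is a property of $k$ alone (given $x$ and $\delta$), so neither the choice of window $[i:j]$ nor any ordering of the $\delta$'s interferes with the indicator, which is exactly what makes both monotonicity claims immediate.
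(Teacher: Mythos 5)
Your proof is correct and matches the paper's treatment: the paper states this proposition without proof, remarking only that it ``immediately follows from the definition,'' and your index-set formulation with monotonicity of cardinality under inclusion is exactly the intended one-line verification.
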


A sample $x$ has no $\delta$-light next-bit if and only if $\num_{\delta}(x) = 0$. It is easily observed that $x$ has no light next-bit with respect to a distribution $\mathcal{D}$ with high probability over the choice of $x\sim \mathcal{D}$.

\begin{proposition}\label{prop: large conditional}
For every $n\in\N$, $\delta\in[0,1]$ and every distribution $\mathcal{D}$ over $\bin^n$,
\[\Prob_{x\sim \mathcal{D}}\mleft[x \textnormal{ has a $\delta$-light next-bit with respect to }\mathcal{D}\mright]\le n\delta.\]
\end{proposition}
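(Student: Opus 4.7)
The plan is to prove the bound by a union bound over positions $i \in [n]$. For each $i$, set
\[
A_i \vcentcolon= \{x \in \supp(\mathcal{D}) : x_i \text{ is a $\delta$-light next-bit of } x_{[i-1]}\},
\]
so that $\{x \text{ has a $\delta$-light next-bit}\} = \bigcup_{i=1}^n A_i$. The goal reduces to establishing $\Prob_{x \sim \mathcal{D}}[A_i] \le \delta$ for each $i$.

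To control $\Prob_{x \sim \mathcal{D}}[A_i]$, I would condition on the prefix $x_{[i-1]}$. For any fixed $y \in \bin^{i-1}$ in the support of the marginal,
\[
\Prob_{x \sim \mathcal{D}}\mleft[x_i \text{ is $\delta$-light of } y \,\middle|\, x_{[i-1]} = y \mright] \;=\; \sum_{b \in \bin \,:\, \mathcal{D}^*(b \mid y) \le \delta} \mathcal{D}^*(b \mid y).
\]
Since $\mathcal{D}^*(0 \mid y) + \mathcal{D}^*(1 \mid y) = 1$, at most one of the two summands can be at most $\delta$ whenever $\delta < 1/2$, so the displayed sum is bounded by $\delta$; in the remaining range $\delta \ge 1/2$ the target inequality $n\delta$ is already at least $1$ for $n \ge 2$ and hence vacuous. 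Taking expectation over $x_{[i-1]}$ then yields $\Prob_{x \sim \mathcal{D}}[A_i] \le \delta$.

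A union bound over $i \in [n]$ gives
\[
\Prob_{x \sim \mathcal{D}}\mleft[x \text{ has a $\delta$-light next-bit}\mright] \;\le\; \sum_{i=1}^n \Prob_{x \sim \mathcal{D}}[A_i] \;\le\; n\delta,
\]
which is the desired conclusion. The argument is entirely elementary and I do not foresee any real obstacle; the only slightly subtle point is that the per-step contribution is $\delta$ rather than the naive $2\delta$ one might read off from the two possible values of $b$, and this tightening follows from the fact that the two conditional next-bit probabilities sum to one, so they cannot both be at most $\delta$ in the regime where the claim is not already trivial.
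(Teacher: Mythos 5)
Your proof is correct and follows essentially the same route as the paper's: decompose the event position by position via sequential sampling, bound the per-step probability of drawing a light bit by $\delta$, and apply a union bound. You are in fact slightly more careful than the paper's one-line argument, since you explicitly note that the per-step bound of $\delta$ (rather than $2\delta$) uses $\mathcal{D}^*(0\mid y)+\mathcal{D}^*(1\mid y)=1$ and that the regime $\delta\ge 1/2$ is handled by vacuity, a point the paper's proof passes over silently.
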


\begin{proof}
Sampling according to $\mathcal{D}$	can be performed sequentially as $x_i\sim \Next(\mathcal{D};x_{[i-1]})$ for $i=1,\ldots,n$ (in this order). For each $i$, the probability that a $\delta$-light next-bit is sampled is at most $\delta$ by the definition. Therefore, by the union bound, the probability that the sample has a $\delta$-light next-bit is at most $n\cdot \delta$.
\end{proof}

Now, we formally state the meta-theorem.

\begin{theorem}\label{thm: nbp to coding}
	For any distribution family $\mathcal{D}=\{\mathcal{D}_n\}$, where $\mathcal{D}_n$ is over $\bin^{\ell(n)}$ for an efficiently computable function $\ell(n)\le\poly(n)$, and any polynomial $q$, there exists a polynomial $p$ such that if $\mathcal{D}$ is next-bits-predictable on $S\subseteq\supp(\mathcal{D})$ with error parameter $p$, then for every $n\in\N$, every $x\in S_n$, and every $i\in[|x|]$ 
\[\rK^{p(n)}(x_{[i:\ell(n)]}\mid x_{[i-1]})\le -\log \mathcal{D}^*_n(x_{[i:\ell(n)]}\mid x_{[i-1]}) + \num^{i,\ell(n)}_{1/q(n)}(x)\cdot O(\log \ell(n)) + O(\log n\ell(n)q(n)),\]
where the hidden constants in $O(\cdot)$ are independent of $\ell$ and $q$. In particular,
\[\rK^{p(n)}(x)\le -\log \mathcal{D}_n(x) + \num_{1/q(n)}(x)\cdot O(\log \ell(n)) + O(\log n\ell(n)q(n)).\]
Moreover, there exists a polynomial-time randomized algorithm $\Enc$ that outputs, for given input $(x,1^n,i)$, a description of a polynomial-time \emph{(}randomized\emph{)} program that outputs $x_{[i:\ell(n)]}$ when $x_{[i-1]}$ is given, and the description length satisfies the upper bound on $\rK^{p(n)}(x_{[i:\ell(n)]}\mid x_{[i-1]})$, where the success probability of $\Enc$ is at least $1-1/q(n)$.
\end{theorem}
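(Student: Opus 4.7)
\medskip
\noindent
\textbf{Overall approach.} The plan is to encode $x_{[i:\ell(n)]}$ conditioned on $x_{[i-1]}$ by \emph{arithmetic coding} against the sequence of next-bit distributions, substituting the randomized next-bits predictor $P$ (which we invoke with accuracy parameter $1/p(n)$, for $p(n)$ a sufficiently large fixed polynomial in $n$, $\ell(n)$, $q(n)$) for the true probabilities. Two obstructions must be dealt with: (a) the output of $P$ is random, so encoder and decoder may disagree on the arithmetic interval --- this is addressed by the ``add noise and round'' pseudo-derandomization of \cite{GoldbergS91_siamcomp_journals,TrevisanVZ05_cc_journals}; and (b) the additive accuracy $1/p(n)$ of $P$ is useless when the true next-bit probability is smaller than, say, $2/q(n)$ --- these \emph{light} next-bits are instead embedded directly into the encoding via their position and value, and skipped by the arithmetic coder.

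\medskip
\noindent
\textbf{Encoder and length analysis.} The encoder first writes a shared-randomness header $\sigma$ of length $O(\log(n\ell(n)q(n)))$ specifying pairwise-independent noises $\{\eta_{j,b}:j\in[i:\ell(n)],b\in\{0,1\}\}$ and a rounding granularity $\gamma$. It then processes positions $j=i,\ldots,\ell(n)$ while maintaining an arithmetic-coding sub-interval of $[0,1)$. At each $j$ it queries $P(x_{[j-1]},b,1^n)$ for $b\in\{0,1\}$, adds $\eta_{j,b}$, and rounds to the nearest multiple of $\gamma$, producing $\widehat P_j(b)$. If $\widehat P_j(x_j)\le 2/q(n)$, position $j$ is declared \emph{light}: the pair $(j,x_j)$ is appended to an auxiliary list at cost $O(\log\ell(n))$ bits and the interval is left unchanged. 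Otherwise, a standard arithmetic-coding update is performed using $(\widehat P_j(0),\widehat P_j(1))$. With $1/p(n)\ll \gamma\ll 1/(q(n)\ell(n))$, a standard argument shows that no rounding boundary is crossed at any of the $2\ell(n)$ queries with probability $\ge 1-1/(2q(n))$ over $\sigma$ and the internal randomness of $P$; on this event, each $\widehat P_j(b)$ is a function of $\sigma$ alone, and for non-light positions $\widehat P_j(x_j)=(1\pm o(1/\ell(n)))\,\mathcal{D}_n^*(x_j\mid x_{[j-1]})$. Summing $-\log \widehat P_j(x_j)\le -\log\mathcal{D}_n^*(x_j\mid x_{[j-1]})+O(1/\ell(n))$ over the non-light $j$ bounds the arithmetic-coding payload by $\lceil -\log\prod_{j\text{ non-light}}\widehat P_j(x_j)\rceil+1\le -\log\mathcal{D}_n^*(x_{[i:\ell(n)]}\mid x_{[i-1]})+O(1)$ (dropping the light-position factors only shrinks the product, since $\mathcal{D}_n^*(\cdot)\le 1$). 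The auxiliary list contributes $\num^{i,\ell(n)}_{1/q(n)}(x)\cdot O(\log\ell(n))$ and $\sigma$ contributes $O(\log(n\ell(n)q(n)))$, matching the claimed bound.

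\medskip
\noindent
\textbf{Decoder and main obstacle.} The program $\Pi$ output by $\Enc$ consists of $\sigma$, the arithmetic-coded payload, and the light-position list. Given $x_{[i-1]}$, $\Pi$ walks through $j=i,\ldots,\ell(n)$ re-running $P$ with the same $\sigma$-derived noise and granularity; at each $j$ it recovers $\widehat P_j$ pseudo-deterministically, reads off $x_j$ from the light list if $j$ is marked, and otherwise performs one arithmetic-decoding step. By the same pseudo-determinism event (now re-derived over $\Pi$'s own fresh coins), $\Pi$ reproduces the encoder's trace with probability $\ge 1-1/(2q(n))\ge 2/3$, yielding a valid $\rK^{p(n)}$-witness. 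The delicate point --- and the main obstacle --- is the simultaneous calibration of $p(n)$, $\gamma$, and the light threshold $2/q(n)$: the threshold must sit far enough above the predictor error that the multiplicative error $\widehat P_j/\mathcal{D}_n^*$ at non-light positions telescopes to a total of $O(1)$ across all $\ell(n)$ steps (not $O(1)$ per step), $\gamma$ must be small enough to keep per-step rounding loss $O(1/\ell(n))$ yet large enough that the $2\ell(n)$-way union bound on boundary-crossings succeeds with slack $1/q(n)$, and the noises must be generable from an $O(\log(n\ell(n)q(n)))$-bit pairwise-independent source shared as part of the program. All three constraints are simultaneously satisfied by taking $p(n)$ to be a sufficiently large fixed polynomial in $n$, $\ell(n)$, and $q(n)$, after which the encoder's success probability $1-1/q(n)$, the decoder's success probability $2/3$, and the length bound all fall out of the analysis above.
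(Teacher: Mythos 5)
Your proposal follows essentially the same route as the paper: arithmetic coding driven by the next-bits predictor, pseudo-derandomized by adding a logarithmically-describable noise and rounding in the style of Goldberg--Sipser and Trevisan--Vadhan--Zuckerman, with light next-bits escaped into an explicit position/value list, and the final $\rK$ bound extracted by a union bound plus Markov over the encoder's and decoder's independent coins. Two small calibration points are worth fixing. First, you round $\widehat P_j(0)$ and $\widehat P_j(1)$ with separate noises, so they need not sum to $1$; if the sum exceeds $1$ the arithmetic-coding subintervals can spill past the parent interval and decoding becomes ambiguous --- the paper avoids this by rounding only the $b=0$ estimate and setting the $b=1$ estimate to one minus it, so the predictor always defines a genuine conditional distribution. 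Second, with light threshold $2/q(n)$ a declared-light position is only guaranteed to be roughly $(3/q(n))$-light, so your list is charged against $\num_{3/q(n)}$ rather than the claimed $\num_{1/q(n)}$; the paper sets the threshold relative to the amplified parameter $q'(n)=\ell(n)q(n)+1$ so that declared-light positions are $(3/q'(n))$-light, hence $(1/q(n))$-light. Both are one-line adjustments and do not affect the soundness of the overall argument.
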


In \Cref{sec: proof of wst coding}, we present the proof of \Cref{thm: nbp to coding}. In \Cref{sec:cor-to-meta-theorem}, we derive \Cref{optimal coding-io,optimal coding-ae} and the \emph{almost} optimal worst-case coding theorem and \emph{optimal} (average-case) coding theorem for next-bits-predictable source from \Cref{thm: nbp to coding}.

\subsection{Proof of \texorpdfstring{\Cref{thm: nbp to coding}}{}}\label{sec: proof of wst coding}

First, we make a next-bits predictor pseudo-deterministic with the help of \emph{short advice} that depends on the input. Note that the auxiliary advice can be selected with high probability from uniformly random sampling but is required to be selected for each encoded string.    

\begin{lemma}\label{next-bit predictor}
For every distribution family $\mathcal{D}=\{\mathcal{D}_n\}$, where $\mathcal{D}_n$ is over $\bin^{\ell(n)}$ for an efficiently computable function $\ell(n)\le\poly(n)$, and for every polynomial $q$, if $\mathcal{D}$ is next-bits-predictable on $S\subseteq\supp(\mathcal{D})$ with error parameter $32\ell(n)q(n)^3$, then there exists a polynomial-time randomized algorithm $\tilde{P}$ such that for every $n\in\N$ and every $x\in S_n$, with probability at least $1-1/q(n)$ over the choices of advice $\alpha_{x}\sim \{0,1,\ldots, 2\ell(n)q(n)-1\}\subseteq \N$ and $2\ell(n)$ independent random seeds $r_1,r'_1,\ldots,r_{\ell(n)},r'_{\ell(n)}$ for $\tilde{P}$, the following properties hold for all $i\in [\ell(n)]$ and all $b\in\bin$,
\begin{enumerate}
	\item\emph{(}$\tilde{P}$ is pseudo-deterministic.\emph{)} \begin{equation}\label{property 1}
		\tilde{P}(x_{[i-1]},b,\alpha_{x},1^{n};r_i) = \tilde{P}(x_{[i-1]},b,\alpha_{x},1^{n};r'_i);
	\end{equation}
	\item\emph{(}$\tilde{P}$ determines a distribution.\emph{)} \begin{equation}\label{property 2}\tilde{P}(x_{[i-1]},0,\alpha_{x},1^{n};r_i) + \tilde{P}(x_{[i-1]},1,\alpha_{x},1^{n};r'_i) = 1;\end{equation}
	\item\emph{(}$\tilde{P}$ is accurate.\emph{)} 
    \begin{equation}\label{property 3}\mathcal{D}^*_n(b\mid x_{[i-1]})-1/q(n)^2\le \tilde{P}(x_{[i-1]},b,\alpha_{x},1^{n};r_i)\le \mathcal{D}^*_n(b\mid x_{[i-1]})+1/q(n)^2.\end{equation}
    Particularly, if $b$ is not an $(1/q(n))$-light next-bit of $x_{[i-1]}$, then
	\begin{equation*}(1-1/q(n))\mathcal{D}^*_n(b\mid x_{[i-1]})\le \tilde{P}(x_{[i-1]},b,\alpha_{x},1^{n};r_i)\le (1+1/q(n))\mathcal{D}^*_n(b\mid x_{[i-1]}).\end{equation*}
\end{enumerate}
\end{lemma}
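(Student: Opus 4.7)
The plan is to pseudo-derandomize the given randomized next-bits predictor $P$ via the ``add noise and round'' technique of \citet{GoldbergS91_siamcomp_journals,TrevisanVZ05_cc_journals}: round the output of $P$ to a grid of width $\delta:=1/q(n)^2$, and randomize the grid with a small shift parameterized by the advice $\alpha_x$ so that, for a typical shift, the true value is far from every rounding boundary. To secure property (\ref{property 2}) without having to argue about the coupling of the calls for $b=0$ and $b=1$, I will only ever invoke $P$ on $b=0$ and \emph{define} $\tilde{P}(x_{[i-1]},1,\alpha,1^{n};r) := 1 - \tilde{P}(x_{[i-1]},0,\alpha,1^{n};r)$. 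Under this definition, (\ref{property 2}) becomes $\tilde{P}(x_{[i-1]},0,\alpha;r_i) = \tilde{P}(x_{[i-1]},0,\alpha;r'_i)$, i.e., just the $b=0$ case of (\ref{property 1}).

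Concretely, set $\varepsilon := 1/(32\ell(n)q(n)^3)$, $\delta := 1/q(n)^2$, and shift step $s := \delta/(2\ell(n)q(n)) = 1/(2\ell(n)q(n)^3)$, and let $G_\alpha := \{j\delta + \alpha s : j\in\mathbb{Z}\}$. On input $(x_{[i-1]},0,\alpha,1^n;r)$, $\tilde{P}$ calls the hypothesized next-bits predictor (instantiated with accuracy parameter $32\ell(n)q(n)^3$) to obtain a value $v \in [0,1]$ and outputs the nearest point of $G_\alpha$ to $v$, clipped to $[0,1]$; on input with $b=1$ it outputs one minus the $b=0$ output.

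For the analysis, fix $x \in S_n$ and write $p^*_i := \mathcal{D}^*_n(0 \mid x_{[i-1]})$. For each $i\in[\ell(n)]$ and each seed $r\in\{r_i,r'_i\}$, the predictor lies within $\varepsilon$ of $p^*_i$ except with probability $\varepsilon$; a union bound over the $2\ell(n)$ calls gives total accuracy-failure probability at most $2\ell(n)\varepsilon = 1/(16q(n)^3)$. Conditioned on accuracy, the two values for index $i$ both lie in an interval of length $2\varepsilon$ around $p^*_i$, so they round to the same point of $G_\alpha$ unless a midpoint of $G_\alpha$ falls in that interval. As $\alpha$ varies over $\{0,\ldots,2\ell(n)q(n)-1\}$, the midpoints of $G_\alpha$ reduced modulo $\delta$ form $2\ell(n)q(n)$ equally-spaced points of spacing $s$, and since $2\varepsilon/s = 1/8$ there is at most one ``bad'' $\alpha$ per index $i$. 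A union bound over the $\ell(n)$ indices bounds the bad-$\alpha$ probability by $\ell(n)/(2\ell(n)q(n)) = 1/(2q(n))$, so the total failure probability is at most $1/(2q(n)) + 1/(16q(n)^3) \le 1/q(n)$. Property (\ref{property 3}) is then immediate since $|\tilde{P} - \mathcal{D}^*_n| \le \varepsilon + \delta/2 \le 1/q(n)^2$, and the multiplicative form for bits that are not $(1/q)$-light follows from the lower bound $\mathcal{D}^*_n(b\mid x_{[i-1]}) > 1/q(n)$.

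The main subtlety (not conceptually deep but where the numbers must line up) is calibrating the accuracy demanded of $P$ so that the midpoint spacing $s$ of the randomized grid is simultaneously coarse enough (compared to $\varepsilon$) to make boundary collisions rare for a random shift, and fine enough (compared to $\delta$) that the grid error remains $O(1/q^2)$ while still admitting only $2\ell q$ shifts as dictated by the statement. The choice $\varepsilon = 1/(32\ell q^3)$ is tailored to give $2\varepsilon/s = 1/8$, which forces at most one bad shift per index and yields a clean $1/q$ bound after the union bound.
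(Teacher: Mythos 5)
Your proposal is correct and follows essentially the same route as the paper's proof: you call the predictor only on $b=0$ and define the $b=1$ output as one minus it (so that properties (\ref{property 2}) and (\ref{property 1}) collapse to the $b=0$ case), and you pseudo-derandomize by randomly shifting a rounding grid of width $\Theta(1/q^2)$ by one of $2\ell q$ offsets, which is equivalent to the paper's "add noise $\alpha\cdot 1/(8\ell q^3)$ then round to multiples of $1/(4q^2)$" with the same at-most-one-bad-shift-per-index union bound. The constants differ by small factors but all the required inequalities check out.
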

For convenience, we call the polynomial $q$ above a \emph{mordified error parameter}.

\begin{proof}
	Since $\mathcal{D}$ is next-bits-predictable on $S$, there exists a polynomial-time next-bits predictor $P$ satisfying the properties of \Cref{def: nextbit approx} with error parameter $q'(n):=32\ell(n)q(n)^3$.
	
	For given input $(x_{[i-1]},b,\alpha_{x},1^{n})$, where $\alpha_{x}\sim\{0,1,\ldots, 2\ell(n)q(n)-1\}\subseteq \N$, the algorithm $\tilde{P}$ first executes $v_0\gets P\mleft(x_{[i-1]},0,1^n\mright)$. Then, $\tilde{P}$ applies the technique of adding noise and rounding to be pseudo-deterministic. Here, the amount of noise is $\alpha_{x,\delta}\cdot 1/(8\ell(n)q(n)^3)$. Let $\tilde{v_0}$ be the nearest value to $v_0 + \alpha_{x}\cdot 1/(8\ell(n)q(n)^3)$ in multiples of $1/(4q(n)^2)$ in $[0,1]$ (i.e., $\tilde{v_0} = N\cdot 1/(4q(n)^2)$) for some $N\in\{0,1,\ldots,4q(n)^{2}\}$), where ties are broken by choosing the smaller one. 
	
	If the input $b$ is $0$, the algorithm $\tilde{P}$ outputs $\tilde{v_0}$; otherwise (i.e., if $b=1$), $\tilde{P}$ outputs $\tilde{v_1}:=1-\tilde{v_0}$. Note that $\tilde{P}$ uses its internal randomness only for executing $P$. 
	
	We show the three properties in the lemma. Recall that $\alpha_{x}\sim \{0,\ldots, 2\ell(n)q(n)-1\}$. For each $i\in[\ell(n)]$, we consider the execution of $\tilde{P}(x_{[i-1]},b,\alpha_{x},1^{n})$ with the global advice $\alpha_{x}$. Let $v_0$ be the value produced by $P\mleft(x_{[i-1]},0,1^n\mright)$ during the execution. 
	
	Suppose that $P$ does not fail in the sense that 
	\[\mleft|v_0-\mathcal{D}^*_n(0\mid x_{[i-1]})\mright|\le 1/q'(n) = 1/(32\ell(n)q(n)^3).\]
	Recall that the property of $P$ shows that the event above occurs with probability at least $1-1/q'(n)$ as long as $x\in S_n$.
	
	We define $I_{x,i}\subseteq[0,1]$ as 
	\[I_{x,i}:= [\mathcal{D}^*_n(0\mid x_{[i-1]})-1/(32\ell(n)q(n)^3),\mathcal{D}^*_n(0\mid x_{[i-1]})+1/(32\ell(n)q(n)^3)] \cap [0,1].\] 
	Then, $v_0\in I_{x,i}$ as long as $P$ is performed successfully. Note that $|I_{x,i}|\le 1/(16\ell(n)q(n)^3)$, and $I_{x,i}$ is independent of $\alpha_{x}$.
	
	Notice that (i) the noise $\alpha_{x}\cdot 1/(8\ell(n)q(n)^3)$ varies in $1/(8\ell(n)q(n)^3) \,(> |I_{x,i}|)$ increments, and (ii) the rounded value $\tilde{v_0}$ varies in $1/(4q(n)^2) \,(> (2\ell(n)q(n)-1)\cdot 1/(8\ell(n)q(n)^3))$ increments. Thus, the number of $\alpha_{x}$ for which there exist two values $v_0,v_0' \in I_{x,i}$ that are rounded to two distinct values is at most $1$.
	
	Since the same argument holds for all $i\in[\ell(n)]$, we have that with probability at least $1-\ell(n)/(2\ell(n)q(n))=1-1/(2q(n))$ over the choice of $\alpha_{x}$, for all $i\in[\ell(n)]$, the rounded value $\tilde{v_0}$ for given $x_{[i-1]}$ always takes the same value as long as $P$ is successfully executed. Below, we observe the three properties in the lemma under the events that (i) such a good value of $\alpha_{x}$ is selected, and (ii) all of the $2\ell(n)$ executions of $P$ are successfully performed. This completes the proof of the lemma because, by the union bound, the two events occur simultaneously with probability at least $1-1/(2q(n))-2\ell(n)/q'(n)\ge 1-1/q(n)$.
	
	The first property (\Cref{property 1}) has already verified because the rounded value $\tilde{v_0}$ always takes the same value as long as $P$ is successfully executed, and $\tilde{P}$ outputs either of $\tilde{v_0}$ and $\tilde{v_1}=1-\tilde{v_0}$ depending on $b$.  
	
	Next, we observe the second property. Let $\tilde{v_0}$ and $\tilde{v_1}$ be the values produced by $\tilde{P}$ given $b=0$ and $b=1$, respectively, for fixed randomness $r$. By the construction of $\tilde{P}$, they always satisfy $\tilde{v_0}+\tilde{v_1} = 1$. Let $\tilde{v_0}'$ and $\tilde{v_1}'$ be the values produced by $\tilde{P}$ given $b=0$ and $b=1$, respectively, for fixed randomness $r'$ different from $r$. The first property implies that $\tilde{v_0}=\tilde{v_0}'$, and it further implies that $\tilde{v_0}'+\tilde{v_1} = \tilde{v_0}+\tilde{v_1} = 1$. Therefore, \Cref{property 2} holds under the same events.
	
	Finally, we observe the third property and complete the proof. Recall that $\mleft|v_0-\mathcal{D}^*_n(0\mid x_{[i-1]})\mright|\le 1/(32\ell(n)q(n)^3)$ under the condition. Adding the noise $(\alpha_{x,\delta}\cdot 1/(8\ell(n)q(n)^3))$ and rounding to the multiples of $1/(4q(n)^2)$ only changes the value at most 
	\[2\ell(n)q(n)\cdot \frac{1}{8\ell(n)q(n)^3}+\frac{1}{4q(n)^2} = \frac{1}{2q(n)^2}.\]
	Thus, we have that $\mleft|\tilde{v_0}-\mathcal{D}^*_n(0\mid x_{[i-1]})\mright|\le 1/(32\ell(n)q(n)^3)+ 1/(2q(n)^2) \le 1/q(n)^2$. Notice that \[\mleft|\tilde{v_1}-\mathcal{D}^*_n(1\mid x_{[i-1]})\mright| = \mleft|\tilde{v_0}-\mathcal{D}^*_n(0\mid x_{[i-1]})\mright|\le 1/q(n)^2.\] 
	
	Therefore, in any case,
	\[\mleft|\tilde{v_b}-\mathcal{D}^*_n(b\mid x_{[i-1]})\mright|\le  \frac{1}{q(n)^2},\]
 and \Cref{property 3} holds because $P$ outputs $\tilde{v_b}$ for the given $b$.
 In addition, if $b$ is not an $(1/q(n))$-light next-bit of $x_{[i-1]}$, then
 \[\mleft|\tilde{v_b}-\mathcal{D}^*_n(b\mid x_{[i-1]})\mright|\le  \frac{1}{q(n)^2}\le \frac{1}{q(n)}\mathcal{D}^*_n(b\mid x_{[i-1]}).\]
 By rearranging the above,
	\[(1-1/q(n))\mathcal{D}^*_n(b\mid x_{[i-1]})\le \tilde{v_b} \le (1+1/q(n))\mathcal{D}^*_n(b\mid x_{[i-1]}),\]
 as desired.
\end{proof}

Next, we use the \emph{modified} next-bits predictor $\tilde{P}$ for the arithmetic encoding (i.e., Shannon--Fano--Elias coding) to obtain the coding theorem for $\rK^{\poly}$. 

Let $\mathcal{D}=\{\mathcal{D}_n\}$ be a distribution family that is next-bits-predictable on $S\subseteq \supp(\mathcal{D})$, where $\mathcal{D}_n$ is over $\bin^{\ell(n)}$. The encoding and decoding algorithms are the following, where $q(n)$ represents an arbitrary polynomial and $\tilde{P}$ represents the algorithm in \Cref{next-bit predictor} with \emph{modified} error parameter $q'(n):=\ell(n)q(n)+1$. We only consider the encoding of $x_{[k:\ell(n)]}$ given $x_{[k-1]}$ for $x\in S_n$ and $k\in[\ell(n)]$. 

Note that, at the end of each round $i$, the values of the variables $p_<$ and $p_=$ in the encoding and decoding algorithms are excepted to be the approximations of $\sum_{y< x_{[k:i]}}\mathcal{D}^*(y\mid x_{[k-1]})$ and $\mathcal{D}^*(x_{[k:i]}\mid x_{[k-1]})$, respectively. However, the algorithms ignore the round $i$ when $x_i$ is a light next-bit of $x_{i-1}$ (i.e., the next-bit probability of $x_i$ is regarded to be $1$) and embed $x_i$ to the encoding with the position $i$. The way of update is based on the following expressions:
\[\sum_{y< x_{[k:i]}}\mathcal{D}^*(y\mid x_{[k-1]}) = \begin{cases}
    \sum_{y< x_{[k:i-1]}}\mathcal{D}^*(y\mid x_{[k-1]}) &\text{if $x_i=0$}\\
    \sum_{y< x_{[k:i-1]}}\mathcal{D}^*(y\mid x_{[k-1]}) + \mathcal{D}^*(x_{[k:i-1]}\mid x_{[k-1]})\cdot\mathcal{D}^*(0\mid x_{[i-1]}) &\text{if $x_i=1$,}
\end{cases}\]
and
\[\mathcal{D}^*(x_{[k:i]}\mid x_{[k-1]})= \mathcal{D}^*(x_{[k:i-1]}\mid x_{[k-1]})\cdot \mathcal{D}^*(x_i\mid x_{[i-1]}).\]

\begin{algorithm}
	\caption{Enc$_q(x_{[k:\ell(n)]},n;x_{[k-1]})$}
	\SetKwInOut{Input}{Input}
	\Input{$x_{[k:\ell(n)]}\in\bin^{\ell(n)-k}$, $n\in\N$, and a conditional string $x_{[k-1]}\in\bin^{k-1}$, where $x\in S_n$.}
	Let $p_{<} := 0$ and $p_= := 1$\;
	Select $\alpha\sim \{0,\ldots, 2\ell(n)q'(n)-1\}$ uniformly at random\;
	Let $L$ be an empty list (where the element is expected to be in $[\ell(n)]\times\bin$)\;
	\For{$i:= k$ to $\ell(n)$}{
        Execute $q_i\gets \tilde{P}(x_{[i-1]},x_i,\alpha,1^{n})$\;
		\If{$q_i\le 2/q'(n)$}{\label{line:hev-comp} Add $(i,x_i)\in [\ell(n)]\times \bin $ to $L$ and go to the next loop\;}
		\lIf{$x_i=1$}{$p_< := p_< + p_= \cdot \tilde{P}(x_{[i-1]},0,\alpha,1^{n})$}
		$p_= := p_= \cdot q_i$\;
	}
	Let $v$ be the first $\lceil-\log p_=\rceil +1$ bits of $p_< + (p_=/2)$\;
	\textbf{return } $(v,L,\alpha,n,k)$\;
\end{algorithm}

\begin{algorithm}
	\caption{Dec$_q(v,L,\alpha,n;x_{[k-1]})$}
	\SetKwInOut{Input}{Input}
	\Input{an encoding $(v,\alpha_{x,\delta},n,\delta)$ and a conditional string $x_{[k-1]}\in\bin^{k-1}$.}
	Let $p_{<} := 0$ and $p_= := 1$\;
	Let $\tilde{x}:=x_{[k-1]}$\;
	\For{$i:= k$ to $\ell(n)$}{
		\If{$(i,b)\in L$ for some $b\in\bin$}
		{Update $\tilde{x}:=\tilde{x}\circ b$ and go to the next loop\;}
		Compute $q_0 = \tilde{P}(\tilde{x}_{[i-1]},0,\alpha,1^{n})$ and $q_1=1-q_0$\;
		\lIf{$v\ge p_< + p_=\cdot q_0$}{let $\tilde{x}:=\tilde{x}\circ 1$, $p_<:=p_<+p_=\cdot q_0$, and $p_= = p_= \cdot q_1$}
		\lElse{$\tilde{x}:=\tilde{x}\circ 0$ and $p_=:=p_=\cdot q_0$}
	}
	\textbf{return } $\tilde{x}_{[k:\ell(n)]}$\;
\end{algorithm}

It is easily observed that $\Enc_q$ and $\Dec_q$ halt in polynomial time in $n$ since $\tilde{P}$ is polynomial time. We show that the encoding and decoding algorithms above work with high probability over the choice of \emph{independent} random seeds and estimate the length of the encoding.

\begin{lemma}\label{lemma: arithmetic encoding}
For every distribution family $\mathcal{D}=\{\mathcal{D}_n\}$, where $\mathcal{D}_n$ is over $\bin^{\ell(n)}$ for an efficiently computable function $\ell(n)\le\poly(n)$, and every polynomial $q$, if $\mathcal{D}$ is next-bits-predictable on $S\subseteq\supp(\mathcal{D})$ with error parameter $32\ell(n)q'(n)^3 \,(= 32\ell(n)^4q(n)^3)$, then for every $n\in\N$, every $x\in S_n$, and every $k\in[|x|]$, it holds that
\[\Prob_{\Enc_q,\Dec_q}\mleft[\Dec_q(\Enc_q(x_{[k:\ell(n)]},n;x_{[k-1]});x_{[k-1]})=x_{[k:\ell(n)]}\mright]\ge 1-\frac{1}{q(n)\ell(n)};\]
and
\begin{align*}
&\quad\Prob_{\Enc_q}\mleft[|\Enc_q(x_{[k:\ell(n)]},n;x_{[k-1]})|\le -\log \mathcal{D}_n(x)+\num^{k,\ell(n)}_{1/q(n)}(x)\cdot O(\log \ell(n)) + O(\log n\ell(n)q(n))\mright]\\
&\ge 1-\frac{1}{q(n)\ell(n)},
\end{align*}
where the hidden constants in $O(\cdot)$ are universal independent of $\ell$ and $q$.

In particular, for a sufficiently large polynomial $p$,
\[\rK^{p(n)}(x_{[k:\ell(n)]}\mid x_{[k-1]})\le -\log \mathcal{D}_n(x)+\num^{k,\ell(n)}_{1/q(n)}(x)\cdot O(\log \ell(n)) + O(\log n\ell(n)q(n)),\]
and there exists a polynomial-time randomized algorithm $\Enc$ that outputs, for given input $(x,1^n,k)$, a description of a polynomial-time \emph{(}randomized\emph{)} program that outputs $x_{[k:\ell(n)]}$ when $x_{[k-1]}$ is given, and the description length satisfies the upper bound on $\rK^{p(n)}(x_{[k:\ell(n)]}\mid x_{[k-1]})$ above, where the success probability of $\Enc$ is at least $1-6/(q(n)\ell(n))$.
\end{lemma}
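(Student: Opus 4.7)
My plan is to instantiate Lemma~\ref{next-bit predictor} with modified error parameter $q'(n) := \ell(n) q(n) + 1$---for which the next-bits predictability with error parameter $32\ell(n) q'(n)^3$ assumed in the hypothesis is exactly the required input---and then analyse $\Enc_q$ and $\Dec_q$ as a Shannon--Fano--Elias arithmetic code driven by the pseudo-deterministic predictor $\tilde{P}$ thereby produced. Let $E$ be the event that, over the joint choice of the advice $\alpha$ and the $2\ell(n)$ random seeds supplied to $\tilde{P}$ by the two algorithms, properties~\eqref{property 1}--\eqref{property 3} of Lemma~\ref{next-bit predictor} hold for all $i \in [\ell(n)]$ and $b \in \bin$; by that lemma, $\Prob[E] \ge 1 - 1/q'(n) \ge 1 - 1/(q(n)\ell(n))$. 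Both claims of the current lemma will be proved by conditioning on $E$.

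Correctness under $E$ is immediate: although $\Enc_q$ and $\Dec_q$ query $\tilde{P}$ with independent seeds, property~\eqref{property 1} forces them to obtain identical values $q_i$ at every position, so the invariants $(p_<, p_=)$ stay synchronized and every position is treated the same way on both sides (embedded in $L$ or not). Embedded positions are recovered verbatim from $L$; for non-embedded positions the standard Shannon--Fano--Elias guarantee shows that the first $\lceil -\log p_= \rceil + 1$ bits of $p_< + p_=/2$ uniquely isolate the final subinterval $[p_<, p_< + p_=)$, from which $\Dec_q$ sequentially reconstructs the remaining bits.

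For the length bound, decompose the output as $(v, L, \alpha, n, k)$. Each embedded position has $q_i \le 2/q'(n)$, so by~\eqref{property 3} the true probability $\mathcal{D}_n^*(x_i \mid x_{[i-1]}) \le 2/q'(n) + 1/q'(n)^2 \le 1/q(n)$; hence $x_i$ is $(1/q(n))$-light and $|L| \le \num_{1/q(n)}^{k,\ell(n)}(x) \cdot O(\log \ell(n))$. Each non-embedded position has $q_i > 2/q'(n)$, which forces $\mathcal{D}_n^*(x_i \mid x_{[i-1]}) \ge 1/q'(n)$, and the multiplicative form of~\eqref{property 3} yields $-\log q_i \le -\log \mathcal{D}_n^*(x_i \mid x_{[i-1]}) + O(1/q'(n))$. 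Summing over at most $\ell(n)$ non-embedded positions and using $\ell(n)/q'(n) \le 1$, one obtains $-\log p_= \le -\sum_{\text{non-embedded}} \log \mathcal{D}_n^*(x_i \mid x_{[i-1]}) + O(1) \le -\log \mathcal{D}_n(x) + O(1)$, where the final inequality uses that dropping the embedded factors from the product $\mathcal{D}_n^*(x_{[k:\ell(n)]} \mid x_{[k-1]})$ can only enlarge it. The overhead from $\alpha \in \{0,\ldots,2\ell(n)q'(n)-1\}$, $n$, and $k$ contributes $O(\log n \ell(n) q(n))$, matching the claimed bound.

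For the ``moreover'' part, the bit-string $(v, L, \alpha, n, k)$ together with fixed code for $\Dec_q$ forms a polynomial-time randomized program that, on input $x_{[k-1]}$ and fresh random seeds, outputs $x_{[k:\ell(n)]}$ whenever its seeds fall in the good event for the hardcoded $\alpha$; by Lemma~\ref{next-bit predictor} this occurs with probability at least $1 - 1/q'(n) \ge 2/3$, which gives the $\rK^{p(n)}$ bound. A final union bound over the failure modes (bad $\alpha$, bad encoder seeds, and length exceeding the bound) yields success probability $\ge 1 - 6/(q(n)\ell(n))$ for $\Enc$. The main technical care lies in calibrating $q'(n)$: it must be large enough that the accumulated per-position multiplicative errors of $\tilde{P}$ sum to only $O(1)$ across $\ell(n)$ positions, yet keep the light-threshold $2/q'(n)$ sufficiently below $1/q(n)$ so that the embedded count can be charged against $\num_{1/q(n)}^{k,\ell(n)}(x)$.
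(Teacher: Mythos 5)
Your proposal is correct and follows essentially the same route as the paper's proof: instantiate \Cref{next-bit predictor} with modified error parameter $q'(n)=\ell(n)q(n)+1$, condition on its good event to get synchronized pseudo-deterministic predictions on both sides, charge the list $L$ to $\num^{k,\ell(n)}_{1/q(n)}(x)$ via the additive accuracy bound, and bound $-\log p_=$ by $-\log\mathcal{D}_n(x)+O(1)$ via the multiplicative accuracy bound on the heavy positions. The only spot to tighten is the ``moreover'' part: for a \emph{hardcoded} $\alpha$ the guarantee of \Cref{next-bit predictor} (which is averaged over $\alpha$) does not directly give the decoder a $2/3$ success probability over its fresh seeds alone, so one should, as the paper does, first take a union bound over the joint randomness and then apply Markov's inequality over the encoder's randomness to conclude that with probability $1-6/(q(n)\ell(n))$ the emitted program decodes correctly with probability at least $2/3$.
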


Note that \Cref{thm: nbp to coding} immediately follows from \Cref{lemma: arithmetic encoding}.

\begin{proof}
	\Cref{lemma: arithmetic encoding} follows from the correctness of the arithmetic encoding and \Cref{next-bit predictor}. 
	
	First, we observe that, in the execution of $\Dec_q(\Enc_q(x_{[k:\ell(n)]},n;x_{[k-1]});x_{[k-1]})$, the next-bits predictor $\tilde{P}$ is executed only on $(x_{[i-1]},b,\alpha,1^{n})$ for $i\in[\ell(n)]$ and $b\in\bin$ as long as the sequential decoding is performed along $x$ (i.e., $\tilde{x} = x_{[i]}$ for each stage $i$). Note that $\Enc_q$ and $\Dec_q$ do not share the randomness for executing $\tilde{P}$; thus, they may execute $\tilde{P}$ on the same input but using independent randomness. However, \Cref{next-bit predictor} shows that, with probability at least $1-1/(q(n)\ell(n))$ over the choice of $\alpha$ and randomness for $\tilde{P}$ (i.e., over the randomness for $\Enc_q$ and $\Dec_q$), all the executions of $\tilde{P}$ yield consistent values and determine the conditional distribution of each next bit. In this case, $\Dec_q(\Enc_q(x_{[k:\ell(n)]},n;x_{[k-1]});x_{[k-1]})$ performs the arithmetic encoding~\cite[cf.][Sections~5.9 and~13.3]{0016881_daglib_books} for the distribution induced by executing $\tilde{P}$ on each prefix of $x$ except for the positions $i$ on which the value of $q_i$ is less than $2/q'(n)$. Namely, under the good choices of $\alpha$ and randomness as indicated in \Cref{next-bit predictor}, the value $v$ produced by $\Enc_{q}$ satisfies that
    \[p_<^{\Enc} \le p_<^{\Enc}+p_=^{\Enc}/2 - 2^{-\lceil-\log p_=^{\Enc}\rceil -1} < v <p_<^{\Enc}+p_=^{\Enc},\]
    where $p_<^{\Enc}$ and $p_=^{\Enc}$ represent the values of variables $p_<$ and $p_=$ at the end of the execution of $\Enc_q$, respectively. In addition, for every round $i$ in the execution of $\Dec_q$,
 \[\begin{cases}
     v < p_<^{\Enc}+p_=^{\Enc} \le p_< +p_=\cdot q_0 &\text{if $x_i=0$}\\
     v \ge p_<^{\Enc} \ge p_< +p_=\cdot q_0 &\text{if $x_i=1$},
 \end{cases}\]
 where $p_<,p_=$, and $q_0$ represent the variables computed in $\Dec_q$. Thus, whenever $q_i>2/q'(n)$, $\Dec_q$ successfully decodes the $i$-th next bit. In the other case where $q_i\le 2/q'(n)$, the pair $(i,x_i)$ is contained in the list $L$, and $\Dec_q$ also successfully decodes the next bit. Therefore, we obtain that
	\[\Prob_{\Enc_q,\Dec_q}\mleft[\Dec_q(\Enc_q(x_{[k:\ell(n)]},n;x_{[k-1]});x_{[k-1]})=x_{[k:\ell(n)]}\mright]\ge 1-\frac{1}{q(n)\ell(n)}.\]
	
	Next, we evaluate the length of the encoding. Below we let $p_=$ represent the value of the variable $p_=$ at the end of the execution of $\Enc_q$. We also assume that the randomness for $\Enc_q$ (i.e., $\alpha$ and randomness for executing $\tilde{P}$) satisfies the condition of \Cref{next-bit predictor}. This event occurs with probability at least $1-1/(\ell(n)q(n))$.

  Let $L$ be the set of indices $i$ such that $q_i\le 2/q'(n)$ in the execution of $\Enc_q$, and let $H=[\ell(n)]\setminus L$. Recall that the variable $p_=$ is updated only when $i\in H$. We observe that for every $i\in L$, the $i$-th bit $x_i$ is a $3/q'(n)$-light next bit of $x_{[i-1]}$ because
  \[\mathcal{D}^*(x_i\mid x_{[i-1]})\le q_i + \frac{1}{q'(n)^2}\le \frac{2}{q'(n)} + \frac{1}{q'(n)^2} \le \frac{3}{q'(n)}.\]
  In addition, for every $i\in H$, the $i$-th bit $x_i$ is not an $(1/q'(n))$-light next bit of $x_{[i-1]}$ because
  \[\mathcal{D}^*(x_i\mid x_{[i-1]})\ge q_i - \frac{1}{q'(n)^2}> \frac{2}{q'(n)} - \frac{1}{q'(n)^2} \ge \frac{1}{q'(n)}.\]

 Without loss of generality, we assume that  $(\lceil-\log p_=\rceil +1) = O(\ell(n))$; otherwise, we can replace the encoding for $x$ with the canonical encoding of length $\ell(n)+O(1)$ by embedding $x$. In addition, we assume that $\ell(n)\ge 3$. Then, by the standard prefix-free encoding, the output $(v,L,\alpha,n,k)$ of $\Enc_q$ is represented in 
\begin{multline*}
	\lceil-\log p_=\rceil +1 +|H|\cdot O(\log \ell(n)) +O(\log(\lceil-\log p_=\rceil +1))+O(\log nq(n)\ell(n))\\
	=-\log p_=  + \num^{k,\ell(n)}_{3/q'(n)}(x) \cdot O(\log \ell(n))+ O(\log nq(n)\ell(n))\\
	\le  -\log p_= + \num^{k,\ell(n)}_{1/q(n)}(x)\cdot O(\log \ell(n)) + O(\log n\ell(n)q(n))\qquad \text{bits,} 
\end{multline*}
where all constants in $O(\cdot)$ notations are independent of $q$ and $\ell$, and we used 
\[
3/q'(n) \le 3/(\ell(n)q(n))\le 1/q(n) \text{\quad and\quad}\num^{k,\ell(n)}_{3/q'(n)}(x) \le \num^{k,\ell(n)}_{1/q(n)}(x).
\]

 \Cref{next-bit predictor} further shows that for every $i\in H$,
	\[(1-1/q'(n))\mathcal{D}^*_n(x_i\mid x_{[i-1]})\le \tilde{P}(x_{[i-1]},b,\alpha,1^{n},1^{\delta^{-1}})\le (1+1/q'(n))\mathcal{D}^*_n(x_i\mid x_{[i-1]})\]
 since $x_i$ is not an $(1/q'(n))$-light next bit of $x_{[i-1]}$ whenever $i\in H$.
	
Therefore, at the end of the execution of $\Enc_q$, the variable $p_=$ takes the value 
	\begin{align*}
		p_= &= \prod_{i\in H} \tilde{P}(x_{[i-1]},x_i,\alpha,1^{n}) \\
		&\ge  (1-1/(\ell(n)q(n)+1))^{|H|}\cdot \prod_{i\in H}\mathcal{D}_n^*(x_i\mid x_{[i-1]})\\
        &\ge  (1-1/(\ell(n)q(n)+1))^{\ell(n)}\cdot \prod_{i=1}^{\ell(n)}\mathcal{D}_n^*(x_i\mid x_{[i-1]})\\
		&= (1-1/(\ell(n)q(n)+1))^{\ell(n)}\cdot \mathcal{D}(x)\\
        &\ge e^{-1/q(n)}\cdot \mathcal{D}(x).
	\end{align*}
	Thus, we have 
	\begin{align*}
		-\log p_{=} \le -\log\mathcal{D}(x) +1/q(n) + \log e\le -\log\mathcal{D}(x) +3.
	\end{align*}
	Therefore, the length of the encoding is at most
	 \begin{multline*}
	     -\log p_= + \num^{k,\ell(n)}_{1/q(n)}(x)\cdot C\log \ell(n) + C\cdot\log n\ell(n)q(n) \\\le -\log\mathcal{D}(x) + \num^{k,\ell(n)}_{1/q(n)}(x)\cdot C \log\ell(n) + C\log n\ell(n)q(n) +3,
	 \end{multline*}
  where $C$ is a universal constant independent of $\ell$ and $q$.
	 
	The final statement on $\rK$ is based on the following probabilistic argument: By the union bound, with probability at least $1-2/(\ell(n)q(n))$ over the choice of randomness for $\Enc_q$ and $\Dec_q$, it holds that (i) the length of the output of $\Enc_q$ is at most $-\log\mathcal{D}(x) + \num^{k,\ell(n)}_{1/q(n)}(x)\cdot O(\log\ell(n)) + O(\log n\ell(n)q(n))$, and (ii) $\Dec_q(\Enc_q(x_{[k:\ell(n)]},n;x_{[k-1]});x_{[k-1]})=x_{[k:\ell(n)]}$. By Markov's inequality, with probability at least $1-6/(\ell(n)q(n))$ over the randomness for $\Enc_q$, (i) the length of the encoding satisfies the same bound, and (ii) $\Dec_q(\Enc_q(x_{[k:\ell(n)]},n;x_{[k-1]});x_{[k-1]})$ produces $x_{[k:\ell(n)]}$ with probability at least $2/3$ over the randomness for $\Dec_q$. Since $\Enc_q$ and $\Dec_q$ are polynomial-time algorithms and uniform, the statement on $\rK$ holds.
\end{proof}

\subsection{Corollaries}\label{sec:cor-to-meta-theorem}
We derive \Cref{optimal coding-io,optimal coding-ae} from \Cref{thm: nbp to coding} in \Cref{sec:condit-coding} and show \Cref{t:char-owf} in \Cref{sec:condit-coding}. In \Cref{section: coding for rand nbps}, we show the almost optimal \emph{worst-case} coding theorem and optimal (average-case) coding theorem for next-bits-predictable distributions. In \Cref{sec:uniform-HMS}, we show the uniform variant of the result of \cite{HaitnerMS23_innovations_conf} (stated in the nonuniform model or the uniform model with shared randomness) at the expense of arbitrarily small multiplicative error.

\subsubsection{Proofs of Conditional Coding Theorems}\label{sec:condit-coding}
First, we observe that, under the non-existence of one-way functions, every samplable distribution becomes next-bits predictable on a subset of large weight.  

\begin{lemma}\label{lemma:ae-ext-to-nba}
 If there is no almost everywhere one-way function, then for every samplable distribution $\mathcal{D}=\{\mathcal{D}_n\}$ and every polynomials $q(n)$ and $s(n)$, there exists a subset $S\subseteq \supp(\mathcal{D})$ such that \emph{(}i\emph{)} $\mathcal{D}$ is next-bits-predictable on $S$ with error parameter $q$ and \emph{(}ii\emph{)} for infinitely many $n\in\N$, $\Prob_{x\sim \mathcal{D}_n}[x\in S_n]\ge 1-1/s(n)$. 
\end{lemma}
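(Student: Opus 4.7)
The plan is to reduce the lemma to the next-bit extrapolation guarantee of Lemma~\ref{t:next-bit-extrapolation}, and then turn an extrapolator (which outputs a single next bit from a distribution close to $\Next(\mathcal{D}_n;x_{[i-1]})$ in total variation distance) into a predictor (which numerically approximates $\mathcal{D}^*_n(b\mid x_{[i-1]})$) by empirical estimation via repeated sampling. Concretely, I will first instantiate Lemma~\ref{t:next-bit-extrapolation} on the samplable family $\{\mathcal{D}_n\}$ with a sufficiently large polynomial $p$ (to be chosen below) to obtain a randomized polynomial-time algorithm $\Ext$ such that, for infinitely many $n$,
\[
\Prob_{x \sim \mathcal{D}_n}\!\mleft[\forall i\in[\ell(n)],\ \SD\!\mleft(\Ext(x_{[i-1]},1^n),\,\Next(\mathcal{D}_n;x_{[i-1]})\mright)\le \tfrac{1}{p(n)}\mright] \ge 1-\tfrac{1}{p(n)}.
\]
Let $S_n$ be the set of $x\in \supp(\mathcal{D}_n)$ for which the event inside the probability holds, and $S=\bigcup_n S_n$. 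For any $n$ in the infinite set promised above, $\Prob_{x\sim \mathcal{D}_n}[x\in S_n]\ge 1-1/p(n)$, so it suffices to take $p(n)\ge s(n)$ to satisfy condition~(ii).

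For the predictor, define $P(x_{[i-1]},b,1^n)$ to run $\Ext(x_{[i-1]},1^n)$ independently $N(n)$ times with fresh randomness and return the empirical frequency of the output bit $b$. For any fixed $x_{[i-1]}$ and $b$, let $\mu := \Prob[\Ext(x_{[i-1]},1^n)=b]$. A standard Chernoff/Hoeffding bound gives
\[
\Prob_P\!\mleft[\,|P(x_{[i-1]},b,1^n)-\mu|\ge \tfrac{1}{2q(n)}\mright] \le 2\exp\!\mleft(-\tfrac{N(n)}{2q(n)^2}\mright),
\]
which is at most $1/q(n)$ whenever $N(n)=\Theta(q(n)^2\log q(n))$. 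When $x\in S_n$, the total variation guarantee yields $|\mu - \mathcal{D}^*_n(b\mid x_{[i-1]})|\le 1/p(n)$, so choosing $p(n)\ge \max\{2q(n),s(n)\}$ gives
\[
|P(x_{[i-1]},b,1^n)-\mathcal{D}^*_n(b\mid x_{[i-1]})|\le \tfrac{1}{2q(n)}+\tfrac{1}{p(n)}\le \tfrac{1}{q(n)}
\]
with probability at least $1-1/q(n)$, verifying condition~(i). Since $P$ runs $\Ext$ a polynomial number of times, it is a randomized polynomial-time algorithm.

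The only place requiring care is the order of quantifiers: Definition~\ref{def: nextbit approx} demands that for \emph{every} $x\in S_n$ and \emph{every} index $i$ and bit $b$, the predictor's probability of succeeding is at least $1-1/q(n)$; fortunately, the success probability of $P$ on input $(x_{[i-1]},b,1^n)$ depends only on its own internal randomness and on the fixed quantity $\mu$, so the above analysis is genuinely uniform over $x\in S_n$, $i$, and $b$. The one potential subtlety—arguably the main obstacle—is that the extrapolator $\Ext$ of Lemma~\ref{t:next-bit-extrapolation} is guaranteed to be a good approximation to $\Next(\mathcal{D}_n;x_{[i-1]})$ only in total variation, which is precisely what is needed to pass from sampling-based estimation to the numerical accuracy demanded by Definition~\ref{def: nextbit approx}; this is what makes the Chernoff-plus-triangle-inequality argument go through cleanly, and no further ``derandomization'' of $\Ext$ is required at this stage.
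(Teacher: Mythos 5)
Your proposal is correct and follows exactly the paper's own (much terser) argument: instantiate Lemma~\ref{t:next-bit-extrapolation} for almost-everywhere one-way functions with a sufficiently large polynomial, take $S_n$ to be the set of samples where the total-variation guarantee holds for all prefixes, and convert the extrapolator into a next-bits predictor by standard empirical estimation of the output-bit frequency. The Chernoff-plus-triangle-inequality details you supply are the intended (and correct) instantiation of what the paper calls ``the standard empirical estimation.''
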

\begin{proof}
    The lemma follows from \Cref{t:next-bit-extrapolation} for almost everywhere one-way functions and the standard empirical estimation of the probability that the algorithm $\Ext$ outputs each bit.
\end{proof}

We also obtain the lemma for the infinitely-often security in the same way. 

\begin{lemma}\label{lemma:io-ext-to-nba}
 If there is no infinitely-often one-way function, then for every samplable distribution $\mathcal{D}=\{\mathcal{D}_n\}$ and every polynomials $q(n)$ and $s(n)$, there exists a subset $S\subseteq \supp(\mathcal{D})$ such that \emph{(}i\emph{)} $\mathcal{D}$ is next-bits-predictable on $S$ with error parameter $q$ and \emph{(}ii\emph{)} for all $n\in\N$, $\Prob_{x\sim \mathcal{D}_n}[x\in S_n]\ge 1-1/s(n)$. 
\end{lemma}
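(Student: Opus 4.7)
The plan is to mirror the proof of \Cref{lemma:ae-ext-to-nba}, but invoke the ``infinitely-often'' (i.e., for all $n$) branch of \Cref{t:next-bit-extrapolation} rather than the ``almost-everywhere'' (i.e., for infinitely many $n$) branch. The assumption that no infinitely-often one-way function exists is strictly stronger and so yields a uniform-in-$n$ extrapolation guarantee. The only non-trivial work is converting the $\Ext$ algorithm from \Cref{t:next-bit-extrapolation}, which merely \emph{samples} from an approximation of $\Next(\mathcal{D}_n; x_{[i-1]})$, into an algorithm $P$ that \emph{outputs an explicit numerical estimate} of the next-bit probability, as required by \Cref{def: nextbit approx}.

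First, pick an auxiliary polynomial $p$ large enough so that $p(n) \ge 2q(n)s(n)n$ (and so that the forthcoming Chernoff slack also fits). Invoke \Cref{t:next-bit-extrapolation} under the non-existence of infinitely-often one-way functions to obtain a probabilistic polynomial-time algorithm $\Ext$ such that for every $n\in\N$,
\[
\Prob_{x\sim\mathcal{D}_n}\mleft[\forall i\in[|x|],\; \SD\mleft(\Ext(x_{[i-1]},1^n),\,\Next(\mathcal{D}_n;x_{[i-1]})\mright)\le 1/p(n)\mright]\ge 1-1/p(n).
\]
Define $S_n$ to be precisely the set of $x\in\supp(\mathcal{D}_n)$ at which the event inside the outer probability holds. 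Then by construction, $\Prob_{x\sim \mathcal{D}_n}[x\in S_n]\ge 1-1/p(n)\ge 1-1/s(n)$ for all $n$, giving property~(ii).

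To establish property~(i), construct $P$ as follows: on input $(x_{[i-1]}, b, 1^n)$, run $\Ext(x_{[i-1]}, 1^n)$ independently $m = \Theta(q(n)^2\log(q(n)n))$ times, and output the empirical fraction $\hat{p}_b$ of runs whose output equals $b$. By a standard Hoeffding bound, with probability at least $1-1/(2q(n))$ over the internal randomness of $P$, we have $|\hat{p}_b - \Prob[\Ext(x_{[i-1]},1^n)=b]|\le 1/(2q(n))$. For every $x\in S_n$ and every $i\in[|x|]$, the $\SD$ guarantee of $\Ext$ at the prefix $x_{[i-1]}$ implies $|\Prob[\Ext(x_{[i-1]},1^n)=b]-\mathcal{D}^*_n(b\mid x_{[i-1]})|\le 1/p(n)\le 1/(2q(n))$. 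Combining these two bounds by the triangle inequality yields $|\hat{p}_b-\mathcal{D}^*_n(b\mid x_{[i-1]})|\le 1/q(n)$ with probability at least $1-1/q(n)$, which is exactly the next-bits-prediction condition of \Cref{def: nextbit approx} on $S$ with error parameter $q$.

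The only potential obstacle is a quantifier mismatch: the extrapolation theorem gives a guarantee holding with high probability over $x\sim \mathcal{D}_n$, whereas \Cref{def: nextbit approx} demands a pointwise guarantee for every $x\in S_n$ and every prefix index $i$. This is handled precisely by folding the ``for all $i$'' quantifier inside the event defining $S_n$: once $x\in S_n$, the $\SD$ estimate holds simultaneously at every prefix $x_{[i-1]}$ of $x$, and the empirical-estimation step then runs pointwise on each $(x_{[i-1]}, b)$. Consequently, no further randomness over $x$ is needed inside $P$, and the only error incurred is the Hoeffding concentration error, which the polynomial parameters comfortably absorb.
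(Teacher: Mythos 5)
Your proposal is correct and follows exactly the route the paper intends: invoke the ``for all $n$'' branch of \Cref{t:next-bit-extrapolation} under the non-existence of infinitely-often one-way functions, define $S_n$ as the set of $x$ for which the uniform-over-$i$ total-variation guarantee holds, and convert the sampler $\Ext$ into a numerical next-bit estimator by standard empirical estimation plus a triangle inequality. The paper's own proof is a one-line reference to this same argument, so your write-up simply makes explicit the details (choice of $p$, Hoeffding bound, quantifier bookkeeping) that the paper leaves implicit.
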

\begin{proof}
    The proof is the same as \Cref{lemma:ae-ext-to-nba} except we use \Cref{t:next-bit-extrapolation} for infinitely-often one-way functions.
\end{proof}

Now, we prove (\Cref{i:OWF-io-noOWF} $\implies$ \Cref{i:OWF-io-efficient}) in \Cref{t:OWF-io}, which is restated as follows.
\begin{theorem}\label{optimal coding-io}
	If there is no one-way function, then for every samplable distribution $\mathcal{D}=\{\mathcal{D}_n\}$ supported over $\bool^n\times\bool^n$ and every polynomial $q$, there exists a polynomial $p$ such that for infinitely many $n\in\N$, 
	\[\Prob_{(x,y)\sim\mathcal{D}_n}\mleft[\rK^{p(n)}(x\mid y)\le \log \frac{1}{\mathcal{D}_n(x\mid y)}  + \log p(n) \mright]\ge1-\frac{1}{q(n)}.\]
    Moreover, there exists an efficient algorithm $\Enc$ that outputs, for given $(x,y)\sim \mathcal{D}_n$, a description of a $p(n)$-time program $\Pi$ of length at most $-\log\D_n(x\mid y)+\log p(n)$ with probability at least $1-1/q(n)$ over the choice of $(x,y)\sim \mathcal{D}_n$ and randomness for $\Enc$, such that $\Pi$ outputs $x$ for given $y$ and randomness $r\sim\bin^{p(n)}$ with probability at least $2/3$ over the choice of $r$.
\end{theorem}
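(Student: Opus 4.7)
The plan is to reduce to the meta-theorem \Cref{thm: nbp to coding} after packaging the conditional distribution into a single long string. View the given distribution as a distribution on longer strings: let $\{\mathcal{D}'_n\}$ be the samplable family over $\{0,1\}^{2n}$ where $\mathcal{D}'_n$ samples $(x,y) \sim \mathcal{D}_n$ and outputs the concatenation $z = y \circ x$. Then encoding $x$ given $y$ corresponds to encoding the suffix $z_{[n+1:2n]}$ conditioned on the prefix $z_{[n]}$; the conditional next-bit probabilities of $\mathcal{D}'_n$ at positions $n+1,\ldots,2n$ coincide with those of $\mathcal{D}_n(\cdot \mid y)$, so in particular a $\delta$-light next-bit of $z$ at position $i > n$ with respect to $\mathcal{D}'_n$ is exactly a $\delta$-light next-bit of $x_{[i-n]}$ with respect to the conditional distribution $\mathcal{D}_n(\cdot \mid y)$.

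Set $\tilde{q}(n) := 3nq(n)$ and apply \Cref{thm: nbp to coding} to $\mathcal{D}'$ with parameter $\tilde{q}$ to obtain a polynomial $p_1$ that bounds both the running time of the encoded program and the accuracy required of the next-bits predictor. Now invoke \Cref{lemma:ae-ext-to-nba} on $\mathcal{D}'$ with error parameter $p_1$ and subset-weight polynomial $s(n) := 3q(n)$; the non-existence of one-way functions yields a subset $S \subseteq \supp(\mathcal{D}')$ on which $\mathcal{D}'$ is next-bits-predictable with error parameter $p_1$, such that for infinitely many $n$, $\Prob_{z \sim \mathcal{D}'_n}[z \in S_n] \ge 1 - 1/s(n)$. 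For such $n$, \Cref{thm: nbp to coding} supplies a polynomial-time encoder $\Enc$ that, on input $(z, 1^n, n+1)$ with $z \in S_n$, outputs with probability at least $1 - 1/\tilde{q}(n)$ over its randomness a description of a $p_1(n)$-time randomized program which, given $y$, outputs $x$ with probability $\ge 2/3$, of length at most
\[
-\log \mathcal{D}_n(x \mid y) + \num^{n+1,2n}_{1/\tilde{q}(n)}(z) \cdot O(\log n) + O(\log n q(n)).
\]

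By \Cref{prop: large conditional} applied to the distribution $\mathcal{D}_n(\cdot \mid y)$ over $\{0,1\}^n$, for every fixed $y$ the probability over $x \sim \mathcal{D}_n(\cdot \mid y)$ that $\num^{n+1,2n}_{1/\tilde{q}(n)}(y \circ x) > 0$ is at most $n/\tilde{q}(n) = 1/(3q(n))$, and averaging over $y$ preserves this bound. A union bound over the three failure events---namely $z \notin S_n$, $\num > 0$, and $\Enc$ failing on a good $z$---gives total failure probability at most $1/(3q(n)) + 1/(3q(n)) + 1/\tilde{q}(n) \le 1/q(n)$. Choose $p(n) := \max(p_1(n), (nq(n))^{c_0})$ for a constant $c_0$ large enough that $\log p(n)$ absorbs the additive $O(\log n q(n))$ term; the stated $\rK^{p(n)}$ bound and the existence of the efficient encoder then follow. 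The principal work is already done inside \Cref{thm: nbp to coding} and \Cref{lemma:ae-ext-to-nba}; the only subtlety here is composing the polynomial parameters so that subset-membership, absence of light next-bits, and encoder success hold simultaneously with probability at least $1 - 1/q(n)$, which is precisely why the factor $n$ must be inserted when converting from the $\num$-threshold $1/\tilde{q}(n)$ back to the desired overall error $1/q(n)$.
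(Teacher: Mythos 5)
Your proposal is correct and follows essentially the same route as the paper's proof: concatenate $y\circ x$ into a single samplable distribution over $\bool^{2n}$, apply \Cref{thm: nbp to coding} with an error parameter of the form $\Theta(nq(n))$, obtain the next-bits-predictable subset from \Cref{lemma:ae-ext-to-nba}, kill the $\num$ term via \Cref{prop: large conditional}, and union-bound. The only differences are immaterial constant choices in the parameters and that you apply \Cref{prop: large conditional} conditionally on $y$ and average, whereas the paper applies it to the joint distribution directly.
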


\begin{proof}
    Let $\mathcal{D}=\{\mathcal{D}_n\}$ be an arbitrary samplable distribution, where each $\mathcal{D}_n$ is supported over $\bin^n\times\bin^n$. Let $\bar{\mathcal{D}}=\{\bar{\mathcal{D}}_n\}$ be another samplable distribution defined as $\bar{\mathcal{D}}_n\equiv\mathcal{D}^{(2)}_n\circ \mathcal{D}^{(1)}_n$. Namely, $\bar{\mathcal{D}}_n$ is distributed over $\bin^{2n}$.

    Let $q$ be an arbitrary polynomial. We apply \Cref{thm: nbp to coding} for $\bar{\mathcal{D}}$ and a polynomial $4q(n)n$. Then, there exists a polynomial $p$ such that if $\bar{\mathcal{D}}$ is next-bits-predictable on $S\subseteq\supp(\bar{\mathcal{D}})$ with error parameter $p$, then for every $n\in\N$, every $y\circ x\in S_n$ with $|y|=|x|=n$, 
    \begin{align*}
        \rK^{p(n)}(x \mid y) &\le -\log \bar{\mathcal{D}}^*_n(x \mid y) + (\num^{n,2n}_{\bar{\mathcal{D}}_n,1/4q(n)n}(y\circ x)+1)\cdot O(\log nq(n))\\
        &\le -\log \mathcal{D}_n(x \mid y) + (\num_{\bar{\mathcal{D}}_n,1/4q(n)n}(y\circ x)+1)\cdot O(\log nq(n)).
    \end{align*}
    
    Suppose that there is no almost everywhere one-way function. By \Cref{lemma:ae-ext-to-nba}, there exists a subset $S\in \supp(\bar{\mathcal{D}})$ such that (i) $\bar{\mathcal{D}}$ is next-bits-predictable on $S$ with error parameter $p$ and (ii) for infinitely many $n\in\N$, $\Prob_{y\circ x\sim \bar{\mathcal{D}}_n}[y\circ x\in S_n]\ge 1-1/(2q(n))$. Below, we fix such an $n$ arbitrarily. 

    By \Cref{prop: large conditional}, $\Prob_{y\circ x\sim \bar{\mathcal{D}}_n}[\num_{1/4q(n)n}(y\circ x) = 0]\ge 1-1/(2q(n))$. Thus, by the union bound,
    \[\Prob_{(x,y)\sim\mathcal{D}_n}\mleft[\num_{\bar{\mathcal{D}}_n,1/4q(n)n}(y\circ x) = 0 \text{ and } y\circ x\in S_n\mright]\ge 1-\frac{1}{q(n)}.\]
    For any $(x,y)$ satisfying the event above, we obtain that 
    \begin{align*}
        \rK^{p(n)}(x \mid y)
        &\le -\log \mathcal{D}_n(x \mid y) + (\num_{\bar{\mathcal{D}}_n,1/4q(n)n}(y\circ x)+1)\cdot O(\log nq(n))\\
        &\le -\log \mathcal{D}_n(x \mid y) + O(\log nq(n)).
    \end{align*}
    By selecting a large enough polynomial $p'$, we conclude that for infinitely many $n\in\N$, 
    \[\Prob_{(x,y)\sim\mathcal{D}_n}\mleft[\rK^{p'(n)}(x \mid y)\le -\log \mathcal{D}_n(x \mid y) + \log p'(n)\mright]\ge 1-\frac{1}{q(n)},\]
    where the existence of the efficient encoder follows from that of \Cref{thm: nbp to coding}.
\end{proof}

Next, we show  (\Cref{i:OWF-ae-noOWF} $\implies$ \Cref{i:OWF-ae-Coding}) in \Cref{t:OWF-ae} in almost the same way. 
\begin{theorem}\label{optimal coding-ae}
	If there is no infinitely-often one-way function, then for every samplable distribution $\mathcal{D}=\{\mathcal{D}_n\}$ supported over $\bool^n\times\bool^n$, there exists a polynomial $p$ such that for all $n,k\in\N$, 
	\[\Prob_{(x,y)\sim\mathcal{D}_n}\mleft[\rK^{p(n,k)}(x\mid y)\le \log \frac{1}{\mathcal{D}_n(x\mid y)}  + \log p(n,k) \mright]\ge1-\frac{1}{k}.\]
\end{theorem}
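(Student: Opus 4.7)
The plan is to run the argument of \Cref{optimal coding-io} essentially verbatim, replacing \Cref{lemma:ae-ext-to-nba} by its almost-everywhere counterpart \Cref{lemma:io-ext-to-nba} and carrying the extra parameter $k$ through the polynomials. The only structural change is that \Cref{lemma:io-ext-to-nba} delivers next-bits predictability on a heavy subset for \emph{all} $n$, which is exactly what upgrades the ``for infinitely many $n$'' of \Cref{optimal coding-io} to the ``for all $n$'' demanded here.

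Concretely, I would first form $\bar{\mathcal{D}}=\{\bar{\mathcal{D}}_n\}$ with $\bar{\mathcal{D}}_n\equiv \mathcal{D}_n^{(2)}\circ \mathcal{D}_n^{(1)}$, which is samplable over $\bin^{2n}$. Fix $k\in\N$ and apply \Cref{thm: nbp to coding} to $\bar{\mathcal{D}}$ with user-chosen error polynomial $Q_k(n)\vcentcolon= 4kn$; inspection of the proof of \Cref{lemma: arithmetic encoding} shows that the resulting time bound $p_0(n,k)$ is polynomial in both variables, and that next-bits-predictability of $\bar{\mathcal{D}}$ on some $S\subseteq\supp(\bar{\mathcal{D}})$ with error parameter $p_0$ implies, for every $y\circ x\in S_n$,
\[
\rK^{p_0(n,k)}(x\mid y)\le -\log \mathcal{D}_n(x\mid y) + \bigl(\num^{n,2n}_{\bar{\mathcal{D}}_n,1/(4kn)}(y\circ x)+1\bigr)\cdot O(\log nk).
\]
Under the non-existence of infinitely-often one-way functions, \Cref{lemma:io-ext-to-nba} applied with this $p_0$ and accuracy parameter $2k$ produces a set $S^{(k)}$ on which $\bar{\mathcal{D}}$ is next-bits-predictable with parameter $p_0$ and with $\Prob_{\bar{\mathcal{D}}_n}[S^{(k)}_n]\ge 1-1/(2k)$ for every $n$. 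Moreover, \Cref{prop: large conditional} applied to $\bar{\mathcal{D}}_n$ on $\bin^{2n}$ with $\delta=1/(4kn)$ bounds the mass of strings that carry a $\delta$-light next-bit by $2n\delta = 1/(2k)$, so a single union bound yields
\[
\Prob_{(x,y)\sim \mathcal{D}_n}\bigl[\,y\circ x\in S^{(k)}_n\;\text{and}\;\num_{\bar{\mathcal{D}}_n,1/(4kn)}(y\circ x)=0\,\bigr]\ge 1-\frac{1}{k}.
\]
On this event the Kolmogorov bound collapses to $\rK^{p_0(n,k)}(x\mid y)\le -\log \mathcal{D}_n(x\mid y)+O(\log nk)$, and I would conclude by choosing $p(n,k)$ to be a sufficiently large polynomial in $n$ and $k$ so that $\log p(n,k)$ absorbs both $p_0(n,k)$ and the additive $O(\log nk)$ slack.

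The only real obstacle is cosmetic: \Cref{thm: nbp to coding} is stated for a user-chosen error parameter that is a polynomial in $n$ alone, whereas here the natural error polynomial $Q_k(n)=4kn$ depends on the auxiliary parameter $k$ as well. Verifying that the resulting time-bound $p_0$ remains polynomial in $(n,k)$ is immediate from the explicit polynomial dependencies tracked in the proofs of \Cref{next-bit predictor} and \Cref{lemma: arithmetic encoding}, and similarly the next-bits predictor supplied by \Cref{lemma:io-ext-to-nba} runs in time polynomial in $k$ once $k$ is given in unary. Beyond this routine bookkeeping, the argument is a literal transcription of the one for \Cref{optimal coding-io}, and I expect no further conceptual difficulty.
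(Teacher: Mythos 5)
Your proposal is correct and follows essentially the same route as the paper: apply \Cref{thm: nbp to coding} to $\bar{\mathcal{D}}_n\equiv\mathcal{D}_n^{(2)}\circ\mathcal{D}_n^{(1)}$, invoke \Cref{lemma:io-ext-to-nba} in place of \Cref{lemma:ae-ext-to-nba} to get predictability for all $n$, and kill the light-next-bit term via \Cref{prop: large conditional}. The one bookkeeping issue you flag (the error polynomial depending on $k$) is handled in the paper by re-indexing the auxiliary family as $\bar{\mathcal{D}}_{\lrang{n,k}}$ via a pairing function, so that $k$ is absorbed into the index and the meta-theorem applies verbatim with a polynomial in $\lrang{n,k}\le\poly(n,k)$; your explicit parameter-tracking achieves the same thing.
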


\begin{proof}
Let $\mathcal{D}=\{\mathcal{D}_n\}$ be an arbitrary samplable distribution. We define another samplable distribution $\bar{\mathcal{D}}=\{\bar{\mathcal{D}}_{\lrang{n,k}}\}_{n,k\in\N}$ as  $\bar{\mathcal{D}}_{\lrang{n,k}}\equiv \mathcal{D}^{(2)}_n\circ \mathcal{D}^{(1)}_n$ for each $n,k\in\N$, where $\lrang{,}$ is the (standard) efficiently computable and efficiently invertible one-to-one pairing function satisfying that $\max\{n,k\}\le \lrang{n,k}\le \poly(n,k)$.

We apply \Cref{thm: nbp to coding} for $\bar{\mathcal{D}}$ and a polynomial $4n\cdot n$. We use the same argument as the proof of \Cref{optimal coding-io} except we use \Cref{lemma:io-ext-to-nba} instead of \Cref{lemma:ae-ext-to-nba}. Then, we can show that there exists a polynomial $p$ such that for \emph{all} $n,k\in\N$ (because of \Cref{lemma:io-ext-to-nba}),
\[\Prob_{(x,y)\sim\mathcal{D}_n}\mleft[\rK^{p(\lrang{n,k})}(x \mid y)\le -\log \mathcal{D}_n(x \mid y) + \log p(\lrang{n,k})\mright]\ge 1-\frac{1}{\lrang{n,k}}\ge 1-\frac{1}{k},\]
which implies \Cref{optimal coding-ae} because $\lrang{n,k}\le \poly(n,k)$. Again, the existence of the efficient encoder follows from that of \Cref{thm: nbp to coding}.
\end{proof}

\begin{proof}[Proof of \Cref{t:char-owf}]
(\Cref{i:char-owf:optimal} $\implies$ \Cref{i:char-owf:nontrivial}) trivially holds. (\Cref{i:char-owf:nontrivial} $\implies$ \Cref{i:char-owf:no-owf}) has been proved in \cite[][Proposition~3.2]{TrevisanVZ05_cc_journals} based on Levin's observation. Thus, it suffices to show (\Cref{i:char-owf:no-owf} $\implies$ \Cref{i:char-owf:optimal}).

\Cref{optimal coding-ae} (in the formulation using $(\Enc,\Dec)$ of \Cref{lemma: arithmetic encoding}) shows that, under the non-existence of infinitely-often one-way functions, for every samplable distribution $\mathcal{D}=\{\mathcal{D}_n\}$ (without loss of generality, we assume that each $\mathcal{D}_n$ is over $\bin^{\ell(n)}$ for an efficiently computable function $\ell(n)\le \poly(n)$ by padding), there exists a pair of polynomial-time computable functions $\Enc$ and $\Dec$ such that for every $n\in\N$,
\[\Prob_{x\sim\mathcal{D}_n,\Dec,\Enc}[\Dec(\Enc(x,1^n))=x \text{ and } |\Enc(x,1^n)|\le -\log\mathcal{D}(x)+O(\log n)]\ge 1-\frac{1}{4\ell(n)^2}.\]
By Markov's inequality, we have
\begin{align*}\label{eq:good-x}
    &\quad\Prob_{x}\mleft[\Prob_{\Enc,\Dec}[\Dec(\Enc(x,1^n))=x\text{ and } |\Enc(x,1^n)|\le -\log\mathcal{D}(x)+O(\log n)]\ge 1-\frac{1}{2\ell(n)}\mright]\\
    &\ge 1-\frac{1}{2\ell(n)}.
    \numberthis
\end{align*}
We consider a modified efficient encoder $\Enc'$ that, for a given $x\sim\mathcal{D}_n$ and $1^n$, performs the empirical estimation of the probability that $\Dec(\Enc(x,1^n))=x$ and $|\Enc(x,1^n)|\le -\log\mathcal{D}(x)+O(\log n)$ hold with additive accuracy error $1/(8\ell(n))$ and with failure probability at most $2^{-n}$. If the estimated probability $\tilde{p}$ is at least $1-\frac{3}{4\ell(n)}$, the encoder $\Enc'$ sends $0\Enc(x,1^n)$ (executed with fresh random seeds); otherwise, $\Enc'$ sends $1x$. We also define a modified efficient decoder $\Dec'$ that outputs for given encoding $e'$, if $e'$ takes the form of $0e$, it outputs $\Dec(e)$; otherwise if $e'$ takes the form of $1x$, it outputs $x$.

We verify the \emph{worst-case} correctness of $(\Enc',\Dec')$. For every $n\in\N$ and every $x\in\supp(\mathcal{D}_n)$, if $x$ passes the empirical test in $\Enc'$ under the condition that the empirical estimation is performed successfully, the probability that $\Dec(\Enc(x,1^n))=x$ holds is at least $1-3/(4\ell(n))-1/(8\ell(n)) = 1-7/(8\ell(n))$. Thus, the failure probability that $\Dec'(\Enc'(x,1^m))\neq x$ given this condition is at most $7/(8\ell(n))$. By contrast, if $x$ does not pass the test, $\Dec'(\Enc'(x,1^m))= x$ holds with probability $1$ given this event. Thus, for every $x\in\supp(\mathcal{D}_n)$, it holds that
\[\Prob_{\Enc',\Dec'}[\Dec'(\Enc'(x,1^n))=x]\ge 1-\frac{7}{8\ell(n)}-2^{-n}\ge \frac{2}{3}.\]

We also evaluate the expected length of the encoding. For every $x\in\supp(\mathcal{D}_n)$, if $x$ satisfies the event in \Cref{eq:good-x}, the probability that $x$ passes the empirical test is at least $1-2^{-n}$ (in this case, $\Enc'$ outputs $\Enc(x,1^n)$). Thus, the expected length of the encoding under this condition that $x$ satisfies the event in \Cref{eq:good-x} is at most 
\[2^{-n}\cdot(\ell(n)+1) + 1\cdot |\Enc(x,1^n)|\le -\log \mathcal{D}(x)+ O(\log n).\]
By contrast, if $x$ does not satisfy the event in \Cref{eq:good-x}, the length of the outcome of $\Enc'(x,1^n)$ is always at most $\ell(n)+1$.
Thus, we conclude that
\[\Exp_{x\sim \mathcal{D}_n}[|\Enc'(x,1^n)|]\le \Exp_{x\sim \mathcal{D}_n}[-\log \mathcal{D}_n(x)] + O(\log n) + \frac{\ell(n)+1}{2\ell(n)}= \Shannon(\mathcal{D}_n)+O(\log n).\]
This completes the proof of \Cref{t:char-owf}.
\end{proof}

\subsubsection{Coding Theorems for Next-Bits-Predictable Source}\label{section: coding for rand nbps}
We show that the next-bits prediction on the whole support yields the \emph{optimal} (average-case) coding theorem.
\begin{theorem}\label{t:opt-avg-coding-approx}
    If a distribution family $\mathcal{D}=\{\mathcal{D}_n\}$, where each $\mathcal{D}_n$ is over $\bin^n$, is next-bits-predictable on $\supp(\mathcal{D})$ with arbitrary polynomial error parameter, then for every polynomial $q$, there exists a polynomial $p$ such that for every $n\in\N$, 
    \[\Prob_{x\sim\mathcal{D}_n}\mleft[\rK^{p(n)}(x)\le -\log\mathcal{D}_n(x) + \log p(n)\mright]\ge 1-\frac{1}{q(n)}.\]

    In particular, there exists a pair $(\Enc, \Dec)$ of randomized polynomial-time algorithms such that for every $n \in \Nat$,
\[
 \Exp_{x \sim \mathcal{D}_n, \Enc} [ |\Enc(x,1^n)| ] \le \Shannon(\mathcal{D}_n) + \log p(n)
\]
and for every $x \in \supp(\mathcal{D}_n)$, \[\Prob_{\Enc, \Dec} \mleft[ \Dec(\Enc(x,1^n)) = x \mright] \ge \frac{2}{3}.
\]
\end{theorem}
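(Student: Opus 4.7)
The plan is to apply the meta-theorem (Theorem~\ref{thm: nbp to coding}) directly with $S = \supp(\mathcal{D})$ and to use Proposition~\ref{prop: large conditional} to show that for a next-bits-predictable distribution defined on the entire support, the ``light next-bit penalty'' in the meta-theorem vanishes with high probability over $x\sim \mathcal{D}_n$. Concretely, for the first claim, I would apply Theorem~\ref{thm: nbp to coding} with error parameter $q'(n) := 2n\cdot q(n)$ to obtain a polynomial $p$ such that every $x \in \supp(\mathcal{D}_n)$ satisfies
\[
\rK^{p(n)}(x) \le -\log \mathcal{D}_n(x) + \num_{1/q'(n)}(x) \cdot O(\log n) + O(\log n q'(n)).
\]
Proposition~\ref{prop: large conditional} gives $\Pr_{x \sim \mathcal{D}_n}[\num_{1/q'(n)}(x) \ne 0] \le n / q'(n) = 1/(2q(n))$, so on the complementary event the correction term disappears and the bound simplifies to $-\log \mathcal{D}_n(x) + O(\log nq(n)) \le -\log \mathcal{D}_n(x) + \log p''(n)$ for a sufficiently large polynomial $p''$, yielding the first statement.

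For the ``in particular'' clause, I would invoke the efficient-encoder version of Theorem~\ref{thm: nbp to coding} with a polynomial $q_1(n)$ chosen to be larger than the runtime of the resulting encoder $\Enc^*$ (this is possible because the meta-theorem's encoder runs in a fixed polynomial of $n$ and $q_1$). Then with probability $\ge 1 - O(1/\poly(n))$ over $x\sim\mathcal{D}_n$ and $\Enc^*$'s randomness, $\Enc^*$ outputs a randomized program $\Pi$ of length $\le -\log \mathcal{D}_n(x) + O(\log n)$ that prints $x$ with probability $\ge 2/3$. Following the self-testing mechanism used in the proof of Theorem~\ref{t:char-owf}, I would convert $\Enc^*$ into a worst-case-correct scheme: first standard-amplify $\Pi$ to $\Pi'$ of length $|\Pi| + O(\log n)$ with success probability $\ge 1 - n^{-3}$ whenever $\Pi$'s success was $\ge 2/3$ (by running $\Pi$ on $O(\log n)$ fresh seeds and taking the mode); then empirically estimate the probability that $\Pi'$ outputs $x$ by running it on $\Theta(n)$ fresh seeds; output $0 \circ \Pi'$ if the empirical frequency exceeds $0.9$, and output $1 \circ x$ otherwise. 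The decoder runs $\Pi'$ if the first bit is $0$, otherwise returns the literal.

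Worst-case correctness: when the literal branch is taken, decoding is exact; when the program branch is taken, the empirical test passed, so a Chernoff bound ensures the true success probability of $\Pi'$ is $\ge 0.8$ except with probability $e^{-\Omega(n)}$, giving overall $\Pr[\Dec(\Enc(x,1^n)) = x] \ge 2/3$ for every $x$. For the expected length, on the ``good event'' (which has probability $\ge 1 - O(1/\poly(n))$ by Proposition~\ref{prop: large conditional} and the meta-theorem), $\Pi'$ has success $\ge 1 - n^{-3}$ so the empirical test passes except with probability $e^{-\Omega(n)}$, and the encoding has length $\le -\log \mathcal{D}_n(x) + O(\log n)$; off the good event the encoding length is bounded by the polynomial runtime of $\Enc^*$, contributing at most $O(1)$ in expectation once $q_1$ is chosen large enough relative to that runtime. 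Summing gives $\Exp[|\Enc(x,1^n)|] \le \Shannon(\mathcal{D}_n) + O(\log n) \le \Shannon(\mathcal{D}_n) + \log p(n)$ for a sufficiently large polynomial $p$. The main subtlety in this plan is not the light-next-bit analysis (which is immediate) but the parameter choice for the self-test: $q_1$ must be polynomially larger than the meta-theorem's encoder runtime so that the rare failure branch contributes only a constant to the expected encoding length.
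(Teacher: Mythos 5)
Your proof of the first claim is correct and is essentially the paper's own argument: apply \Cref{thm: nbp to coding} with lightness threshold $1/(2nq(n))$ (the paper uses $1/(nq(n))$) and eliminate the $\num$ term via \Cref{prop: large conditional}.

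For the ``in particular'' clause, however, there is a genuine gap in your expected-length analysis, and it sits exactly where you flag the ``main subtlety.'' Off the good event (probability at most $1/q_1(n)$ over $\Enc^*$'s randomness), your encoder can output $0\circ\Pi'$ where $\Pi'$ is whatever string $\Enc^*$ produced, provided it passes the empirical success test --- e.g.\ a needlessly long program that hardcodes $x$. Its length is bounded only by the running time $T(n)$ of $\Enc^*$, so the failure branch contributes up to $T(n)/q_1(n)$ to the expectation. You propose to make this $O(1)$ by choosing $q_1$ larger than the encoder's runtime, but that is circular: $T$ is a polynomial that depends on $q_1$ (the next-bits predictor must be invoked with accuracy polynomial in $1/q_1$, the arithmetic is carried out at that precision, and the definition of next-bits predictability gives no fixed joint polynomial bound on the predictor's runtime as a function of $n$ and the accuracy parameter), so $T(n)/q_1(n)$ need not be small for any admissible choice of $q_1$. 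The fix is to cap the \emph{length} rather than shrink the failure probability: have the encoder fall back to the literal $1\circ x$ whenever $|\Pi'|>n$, so the bad branch contributes at most $(n+O(1))\cdot(1/q_1(n))=O(1)$ already for $q_1(n)=\Theta(n)$. This length-capping is exactly what the paper does, and the paper's route is also simpler overall: \Cref{lemma: arithmetic encoding} already supplies a pair $(\Enc_q,\Dec_q)$ whose decoding succeeds with probability at least $1-1/(q(n)\ell(n))\ge 2/3$ for \emph{every} $x$ in the support, so no amplification or self-test is needed at all --- only the cap on the encoding length.
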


\begin{proof}
    We apply \Cref{thm: nbp to coding} for $\mathcal{D}$ and a polynomial $nq(n)$. Then, there exists a polynomial $p$ such that for every $n\in\N$ and every $x\in\supp(\mathcal{D}_n)$,
    \[\rK^{p(n)}(x)\le -\log \mathcal{D}_n(x) + \num_{1/(nq(n))}(x)\cdot O(\log n) + \log p(n).\]
    By \Cref{prop: large conditional}, it holds that for every $n\in\N$,
    \[\Prob_{x\sim\mathcal{D}_n}\mleft[\num_{1/(nq(n))}(x)>0\mright]\le \frac{n}{nq(n)}= \frac{1}{q(n)}.\]
    From the two expressions above, we obtain 
    \[\Prob_{x\sim\mathcal{D}_n}\mleft[\rK^{p(n)}(x)\le -\log\mathcal{D}_n(x) + \log p(n)\mright]\ge 1-\frac{1}{q(n)}.\]
    In particular, \Cref{lemma: arithmetic encoding} shows that there exists a pair $(\Enc, \Dec)$ of randomized polynomial-time algorithms such that for every $n \in \Nat$,
    \[\Prob_{x\sim\mathcal{D}_n,\Enc}\mleft[|\Enc(x,1^n)|\le -\log\mathcal{D}_n(x) + \log p(n)\mright]\ge 1-\frac{1}{2n}.\]
    and for every $x \in \supp(\mathcal{D}_n)$, \[\Prob_{\Enc, \Dec} \mleft[ \Dec(\Enc(x,1^n)) = x \mright] \ge \frac{2}{3}.\]
    Without loss of generality, we assume that $\Enc$ always outputs the encoding of length at most $2n$ for given $x\in\bin^n$ and $1^n$; otherwise, we can replace the encoding to the canonical one into which embedded $x$. Thus, the expected length of the encoding is bounded as follows:
    \[\Exp_{x\sim\mathcal{D}_n,\Enc}[|\Enc(x,1^n)|]\le \Exp_{x\sim \mathcal{D}_n}[-\log \mathcal{D}_n(x)] + O(\log n) + \frac{2n}{2n} = \Shannon(\mathcal{D}_n) + O(\log n),\]
    as desired.
\end{proof}

Furthermore, the same class of next-bits-predictable distributions admits the almost optimal worst-case coding theorem in the following sense. 
\begin{theorem}\label{t:almost-opt-wst-coding}
    If a distribution family $\mathcal{D}=\{\mathcal{D}_n\}$, where each $\mathcal{D}_n$ is over $\bin^n$, is next-bits-predictable on $\supp(\mathcal{D})$ with arbitrary polynomial error parameter, then for every $\epsilon>0$, there exists a polynomial $p$ such that for every $n\in\N$ and every $x\in \supp(\mathcal{D}_n)$, 
    \[\rK^{p(n)}(x)\le -(1+\epsilon)\log\mathcal{D}_n(x) + \log p(n).\]
\end{theorem}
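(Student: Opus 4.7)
The plan is to derive \Cref{t:almost-opt-wst-coding} as a direct consequence of the meta-theorem \Cref{thm: nbp to coding}. The average-case version \Cref{t:opt-avg-coding-approx} used \Cref{prop: large conditional} to argue that a typical $x \sim \mathcal{D}_n$ has no light next-bits, so the ``extra'' cost $\num_{1/q(n)}(x)\cdot O(\log n)$ vanishes. For the worst-case statement, this trick is not available — some strings in $\supp(\mathcal{D}_n)$ may have many light next-bits. However, a string with many light next-bits is necessarily very improbable under $\mathcal{D}_n$, so the extra additive cost can be absorbed into an $\epsilon$ multiplicative slack on the main term $-\log \mathcal{D}_n(x)$. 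This is the qualitative reason why one loses the factor $(1+\epsilon)$ but nothing else.

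Concretely, the first step is to instantiate \Cref{thm: nbp to coding} with a polynomial $q$ to be chosen, obtaining a polynomial $p_0$ and the pointwise bound
\[
\rK^{p_0(n)}(x) \;\le\; -\log \mathcal{D}_n(x) \;+\; \num_{1/q(n)}(x)\cdot c\log n \;+\; c\log\bigl(n\,q(n)\bigr),
\]
for every $n$ and every $x \in \supp(\mathcal{D}_n)$, where $c$ is the universal constant hidden in the $O(\cdot)$ notation of \Cref{thm: nbp to coding} (here $\ell(n)=n$). The hypothesis that $\mathcal{D}$ is next-bits-predictable on $\supp(\mathcal{D})$ with \emph{arbitrary} polynomial error parameter guarantees that the next-bits-predictability assumption of the meta-theorem, with the error parameter $p_0$ dictated by $q$, is satisfied.

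The second step is the key counting inequality: every $(1/q(n))$-light next-bit of $x$ contributes at least $\log q(n)$ to the total $-\log \mathcal{D}_n(x) = \sum_{i=1}^n -\log \mathcal{D}^*_n(x_i \mid x_{[i-1]})$, since a light next-bit has next-bit probability at most $1/q(n)$. Hence
\[
\num_{1/q(n)}(x) \;\le\; \frac{-\log \mathcal{D}_n(x)}{\log q(n)},
\]
and substituting this gives
\[
\rK^{p_0(n)}(x) \;\le\; \Bigl(1 + \tfrac{c\log n}{\log q(n)}\Bigr)\cdot\bigl(-\log \mathcal{D}_n(x)\bigr) \;+\; c\log\bigl(n\,q(n)\bigr).
\]

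Finally, choose $q(n) = n^{\lceil c/\epsilon\rceil}$, so that $\log q(n) = \lceil c/\epsilon\rceil \log n$ and the multiplicative slack $c\log n/\log q(n)$ is at most $\epsilon$. The additive term $c\log(n\,q(n))$ becomes $O_\epsilon(\log n)$, which is dominated by $\log p(n)$ for a sufficiently large polynomial $p$ (depending on $\epsilon$). This yields exactly the claimed bound $\rK^{p(n)}(x) \le -(1+\epsilon)\log \mathcal{D}_n(x) + \log p(n)$ for all $x \in \supp(\mathcal{D}_n)$. There is no genuine obstacle here beyond choosing $q$ correctly; the only subtlety is being careful that the universal constant $c$ in \Cref{thm: nbp to coding} is independent of $q$, which is explicitly guaranteed by the statement of the meta-theorem and is what permits the choice of $q$ to depend on $\epsilon$.
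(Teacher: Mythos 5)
Your proposal is correct and follows essentially the same route as the paper: the paper also instantiates the meta-theorem with $q(n)=n^k$ for $k$ a large constant depending on $\epsilon$, and uses exactly your counting inequality ($\mathcal{D}_n(x)\le (1/q(n))^{\num_{1/q(n)}(x)}$, hence $\num_{1/q(n)}(x)\log n \le -\tfrac{1}{k}\log\mathcal{D}_n(x)$) to absorb the light-next-bit overhead into the $(1+\epsilon)$ multiplicative factor. No gaps.
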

Note that the term ``almost'' optimal is due to the arbitrarily small constant $\epsilon>0$ above.

\begin{proof}
    Let $k\in\N$ be a sufficiently large constant (with respect to $\epsilon^{-1}$) specified later.  We apply \Cref{thm: nbp to coding} for $\mathcal{D}$ and a polynomial $q(n)=n^{k}$. Then, there exists a polynomial $p$ such that for every $n\in\N$ and every $x\in\supp(\mathcal{D}_n)$,
    \[\rK^{p(n)}(x)\le -\log \mathcal{D}_n(x) + C\cdot \num_{n^{-k}}(x)\log n + O(\log n),\]
    where $C>0$ is a universal constant independent of $k$.

    For every $x\in\supp(\mathcal{D}_n)$, we can observe that 
    \[\mathcal{D}_n(x) =\prod_{i=1}^n \mathcal{D}^*(x_i\mid x_{[i-1]}) \le (n^{-k})^{\num_{n^{-k}}(x)} = n^{-k\num_{n^{-k}}(x)}.\]
    By rearranging the above,
    \[\num_{n^{-k}}(x)\log n\le -\frac{1}{k}\log\mathcal{D}_n(x).\]

    Therefore, by selecting $k$ to be sufficiently large so that $C/k \le \epsilon$, we have
    \begin{align*}
        \rK^{p(n)}(x) &\le -\log \mathcal{D}_n(x) + C\cdot \num_{n^{-k}}(x)\log n + O(\log n)\\
        &\le -\mleft(1+\frac{C}{k}\mright)\log \mathcal{D}_n(x) + O(\log n)\\
        &\le -(1+\epsilon)\log \mathcal{D}_n(x) + O(\log n),
    \end{align*}
    as desired.
\end{proof}

\subsubsection{Towards the Uniform Version of the Haitner--Mazor--Silbak Theorem}\label{sec:uniform-HMS}
Recently, \citet{HaitnerMS23_innovations_conf} presented the clear relationship between incompressibility and next-bit pseudoentropy in the \emph{nonuniform} computational model. They further mentioned that the result holds in the uniform computational model when \emph{shared randomness is available} between the encoder and decoder \cite[see][Remark~6]{HaitnerMS23_innovations_conf}. Note that the shared randomness is used for executing a distinguisher, and polynomially many shared random bits are required in general.  We extend the relationship to the \emph{uniform} computational model without the usage of the shared randomness only at the expense of \emph{a small multiplicative loss}. 

First, we review the main result of \cite{HaitnerMS23_innovations_conf}. For this, we recall the notions of incompressibility and next-bit pseudoentropy.

\begin{definition}[Imcompressibility]
    For $k\colon\N\to\N$, a distribution family $\mathcal{D}=\{\mathcal{D}_n\}$ is said to be nonuniformly $k$-incompressible if for every nonuniform polynoimal-time algorithms $\Enc$ and $\Dec$ such that $\Dec(\Enc(x,1^n))=x$ for all $x\in\supp(\mathcal{D}_n)$, it holds that for every large enough $n\in\N$, 
    \[\Exp_{x\sim \mathcal{D}_n}[|\Enc(x,1^n)|]\ge k(n).\]
\end{definition}

\begin{definition}[Pseudoentropy]
    Let $(X,B)=\{(X_n,B_n)\}_n$ be a family of joint distributions over strings. We say that $B$ has nonuniform-conditional-pseudoentropy (resp.\ uniform-conditional-pseudoentropy) $k\colon\N\to\N$ given $X$ if for every polynomial $p$, there exists a distribution family $C=\{C_n\}$ that jointly distributed with $\{X_n\}$ as satisfies the following:
    \begin{itemize}
        \item $\Shannon(C_n\mid x_n)\ge k(n)-1/p(n)$ for each $n\in\N$;
        \item $(X,B)$ and $(X,C)$ are computationally indistinguishable by nonuniform (resp.\ uniform) randomized polynomial-time algorithms.
    \end{itemize}
\end{definition}

\begin{definition}[Next-bit pseudoentropy]
    A distribution family $\mathcal{D}=\{\mathcal{D}_n\}$, where each $\mathcal{D}_n$ is over $\bin^{\ell(n)}$, is said to have nonuniform-next-bit-pseudoentropy (resp.\ uniform-next-bit-pseudoentropy) $k\colon\N\to\N$ if a distribution family $\{(\mathcal{D}_n)_{I_n}\}_n$, where each $I_n$ is the uniform distribution over $[\ell(n)]$ and $(\mathcal{D}_n)_{I_n}$ represents the $I_n$-th bit of $\mathcal{D}_n$, has nonuniform-conditional-pseudoentropy (resp.\ uniform-conditional-pseudoentropy) $k(n)/\ell(n)$ given $\{(\mathcal{D}_n)_{[I_n-1]}\}_n$.
\end{definition}

One of the main theorems of \cite{HaitnerMS23_innovations_conf} is stated as follows:
\begin{theorem}[{\cite[][Lemma~1]{HaitnerMS23_innovations_conf}}]
For a distribution family $\mathcal{D}=\{\mathcal{D}_n\}$, if $\mathcal{D}$ is nonuniformly $k(n)$-incompressible, then $\mathcal{D}$ has nonuniform-next-bit-pseudoentropy $k(n) - 2$.
\end{theorem}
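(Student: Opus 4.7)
The plan is to prove the contrapositive. Assuming $\mathcal{D}$ has nonuniform-next-bit-pseudoentropy strictly less than $k(n)-2$, I will construct a nonuniform polynomial-time encoder/decoder pair whose expected encoding length is strictly less than $k(n)$, contradicting $k(n)$-incompressibility. The construction parallels, in the nonuniform setting, the coding scheme of \Cref{sec: proof of wst coding}: first, convert low pseudoentropy into a polynomial-size next-bit predictor; then, feed that predictor into Shannon--Fano--Elias arithmetic coding.

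The first step invokes the duality of \citet{VadhanZ12_stoc_conf} between conditional pseudoentropy and KL-hardness. Applying this duality bit by bit along the prefixes of a sample from $\mathcal{D}_n$ turns the failure of pseudoentropy at level $k(n)-2$ into a nonuniform polynomial-size next-bit predictor $P$ for $\mathcal{D}$ whose induced product distribution, defined by $\mathcal{D}^P(x) := \prod_{i=1}^{\ell(n)} P(x_i \mid x_{[i-1]})$, satisfies
\[
\Exp_{x \sim \mathcal{D}_n}\bigl[-\log \mathcal{D}^P(x)\bigr] \;=\; \Shannon(\mathcal{D}_n) + \KL(\mathcal{D}_n \,\|\, \mathcal{D}^P) \;<\; k(n) - 2.
\]
The second step uses $\mathcal{D}^P$ as the target distribution for arithmetic encoding: the encoder outputs the first $\lceil -\log \mathcal{D}^P(x) \rceil + 1$ bits of $F_{\mathcal{D}^P}(x) + \mathcal{D}^P(x)/2$, where $F_{\mathcal{D}^P}$ is the cumulative distribution function, and the decoder simulates the sequential computation of $F_{\mathcal{D}^P}$ by invoking the circuit $P$ on each prefix. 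Since $P$ has polynomial size, both algorithms run in nonuniform polynomial time, and the standard arithmetic-coding bound gives expected encoding length at most $\Exp_{x \sim \mathcal{D}_n}\bigl[-\log \mathcal{D}^P(x)\bigr] + 2 < k(n)$, contradicting incompressibility.

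The main obstacle is that the predictor produced by Vadhan--Zheng only approximates the true next-bit probabilities, whereas a clean invocation of arithmetic coding requires an exact, well-defined distribution. I would resolve this by rounding the outputs of $P$ to a rational grid of precision $1/\poly(\ell(n))$ and renormalizing, so that $\mathcal{D}^P$ becomes a genuine distribution whose cross-entropy exceeds the Vadhan--Zheng value by only an additive $o(1)$ term; the strict gap between the assumed pseudoentropy and $k(n)-2$ then absorbs this loss. In the nonuniform setting this rounding is trivially hard-wired into the circuit, which is precisely why the statement is about \emph{nonuniform} incompressibility; the corresponding uniform theorem requires the ``adding noise and rounding'' machinery of \Cref{next-bit predictor}, which is exactly where the present paper's shared-randomness subtleties come into play.
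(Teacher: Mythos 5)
Your proposal is correct and follows essentially the same route as the source: the paper itself only cites this result from Haitner--Mazor--Silbak, but the argument it sketches in the proof overview (and reuses for its uniform variant via \Cref{l:HMS-tech}, \Cref{l:wrong-code}, and arithmetic coding) is exactly your two-step plan of converting the pseudoentropy failure into a KL-bounded next-bit predictor via Vadhan--Zheng and then running Shannon--Fano--Elias coding against $\mathcal{D}^P$. Your closing remark correctly identifies the rounding/derandomization of the predictor as the step that is trivial nonuniformly but is the crux of the uniform version.
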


In this section, we show the uniform variant with a multiplicative loss in the parameter $k(n)$.
First, we present the notion of randomized compression in the uniform computational model, which was formally studied in~\cite{TrevisanVZ05_cc_journals}.

\begin{definition}[Randomized compression]
    We say that a pair $(\Enc,\Dec)$ of randomized algorithms compresses a distribution family $\mathcal{D}=\{\mathcal{D}_n\}_{n\in\N}$ to length $k\colon\N\to\N$ with decoding error $\delta\colon\N\to[0,1]$ if it satisfies that for infinitely many $n\in\N$,%
    \footnote{In this work, we consider a randomized compression only on infinitely many $n\in\N$ to discuss the uniform version of \emph{almost everywhere} imcompressibility}
    \begin{itemize}
        \item for any $x\in\supp(\mathcal{D}_n)$, $\Prob_{\Enc,\Dec}[\Dec(\Enc(x,1^n))=x]\ge 1-\delta(n)$;
        \item $\Exp_{x\sim\mathcal{D}_n,\Enc}[|\Enc(x)|]\le k(n)$.
    \end{itemize}
    We say that $\mathcal{D}$ is randomly compressible to length $m$ with decoding error $\delta$ if there exists a pair of randomized polynomial-time algorithms that compresses $\mathcal{D}$ to length $m$ with decoding error $\delta$. Moreover, we say that $\mathcal{D}$ is randomly incompressible to length $m$ with decoding error $\delta$ if not randomly compressible to length $m$ with decoding error $\delta$.
\end{definition}

It is known that the decoding error can be exponentially reduced.
\begin{lemma}[{\cite[][Lemma 2.11]{TrevisanVZ05_cc_journals}}]\label{l:reducing-error}
    For a distribution family $\mathcal{D}=\{\mathcal{D}_n\}$, where each $\mathcal{D}_n$ is over $\bin^{\ell(n)}$, if $\mathcal{D}$ is randomly compressible to length $m$ with decoding error $\delta$, then it is also randomly compressible to the length $m+3\delta\cdot\ell(n)+2$ with decoding error $2^{-n}$.
\end{lemma}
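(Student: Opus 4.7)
My plan is to construct $(\Enc',\Dec')$ from $(\Enc,\Dec)$ via a test-and-fallback paradigm combined with decoder amplification by repetition. The key observation is a Markov inequality: for every $x\in\supp(\mathcal{D}_n)$, write $q_e:=\Prob_\Dec[\Dec(e)=x]$, so that the hypothesis gives $\Exp_\Enc[q_{\Enc(x,1^n)}]\ge 1-\delta(n)$. Applying Markov to $1-q_e\in[0,1]$ yields $\Prob_\Enc[q_{\Enc(x,1^n)}\ge 2/3]\ge 1-3\delta(n)$, so for each $x$, with probability at least $1-3\delta(n)$ the encoder produces a \emph{good} encoding, namely one on which the decoder succeeds with probability at least $2/3$. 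This is precisely the source of the coefficient $3$ in the target length.

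The new encoder $\Enc'(x,1^n)$ runs $\Enc(x,1^n)$ to obtain $e$, then uses $\Theta(n)$ fresh independent trials of $\Dec(e)$ to estimate $q_e$ to additive accuracy $1/12$; if the estimate is at least $2/3-1/12$ it outputs $(0,e)$, and otherwise it outputs $(1,x)$. The new decoder $\Dec'$ on input $(0,e)$ runs $\Dec(e)$ independently $\Theta(n)$ times and outputs the plurality value, and on input $(1,x)$ it outputs $x$. Standard Chernoff bounds ensure that, except with probability $2^{-\Omega(n)}$, the test accepts every $e$ with $q_e\ge 2/3$ and rejects every $e$ with $q_e\le 1/2$; accepted encodings therefore satisfy $q_e>1/2+\Omega(1)$, so plurality decoding recovers $x$ with probability at least $1-2^{-n}$.

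For the length, let $A_x$ denote the event that $\Enc'$ accepts on input $x$. Splitting by $A_x$ and using $\Prob_\Enc[\overline{A_x}]\le 3\delta(n)+2^{-\Omega(n)}$,
\[
\Exp[|\Enc'(x,1^n)|]\le 1+\Exp_{x,\Enc}[|\Enc(x,1^n)|]+(3\delta(n)+2^{-\Omega(n)})\ell(n)\le m(n)+3\delta(n)\ell(n)+2
\]
for all large enough $n$, since we may assume $\ell(n)\le 2^n$ and absorb the $2^{-\Omega(n)}\ell(n)$ term into the additive constant $2$. The ``infinitely many $n$'' conclusion transfers directly from the hypothesis on $(\Enc,\Dec)$.

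The main obstacle is calibrating three thresholds simultaneously so that the coefficient matches exactly: the Markov threshold fixing ``good'' at $q_e\ge 2/3$ (this choice is forced because it is what Markov converts into the multiplier $3$ on $\delta$), the test acceptance threshold slightly below $2/3$ (so good encodings are almost never rejected by the empirical estimator), and the plurality decoder (which needs accepted encodings to have $q_e$ bounded away from $1/2$). Loosening the good threshold ruins the factor $3$, while tightening it causes the test to reject some good encodings and recoup the loss elsewhere. Taking test slack $1/12$ together with a sufficiently large constant in the $\Theta(n)$ repetitions handles all three Chernoff failure events at once and yields the bound as stated.
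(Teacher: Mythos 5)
The paper does not actually prove this lemma; it cites it verbatim from Trevisan--Vadhan--Zuckerman, and your argument is the standard one behind that citation: Markov's inequality turns expected decoding error $\delta$ into a $3\delta$ chance of a ``bad'' encoding (this is exactly where the factor $3$ comes from), an empirical test detects bad encodings and falls back to sending $x$ raw at cost $\ell(n)$, and repetition with plurality voting amplifies the decoder on good encodings. The structure, the length accounting (the extra flag bit plus the absorbed $2^{-\Omega(n)}\ell(n)$ term, which is harmless here since the paper restricts $\ell(n)\le \poly(n)$), and the transfer of the infinitely-often quantifier are all correct.

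One calibration constant is off, and as written the amplification step can fail at the boundary. With estimator accuracy $1/12$ and acceptance threshold $2/3-1/12=7/12$, an accepted encoding $e$ is only guaranteed to satisfy $q_e\ge 7/12-1/12=1/2$, not $q_e>1/2+\Omega(1)$ as you assert. If $q_e=1/2$ and the remaining mass of $\Dec(e)$ is concentrated on a single wrong string, plurality over $\Theta(n)$ trials recovers $x$ with probability only about $1/2$, not $1-2^{-n}$. The fix is immediate: take accuracy strictly below $1/12$, say $1/24$, with threshold $2/3-1/24$, so that acceptance forces $q_e\ge 2/3-1/12=7/12$, which is bounded away from $1/2$ and makes the majority/plurality vote concentrate. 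With that one adjustment the proof is complete.
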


Now, we state the main result in this section. 
\begin{theorem}\label{t:HMS-uniform}
For a distribution family $\mathcal{D}=\{\mathcal{D}_n\}$ and every constant $\epsilon>0$, if $\mathcal{D}$ is randomly $k(n)$-incompressible with decoding error $2^{-n}$, then $\mathcal{D}$ has uniform-next-bit-pseudoentropy $(1-\epsilon)k(n) - O(\log n)$.
\end{theorem}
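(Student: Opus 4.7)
We prove the contrapositive: assume $\mathcal{D}$ lacks uniform-next-bit-pseudoentropy $k'(n) := (1-\epsilon)k(n) - C\log n$ for a constant $C$ to be fixed below, and exhibit a randomized polynomial-time compression scheme with expected length strictly less than $k(n)$ and decoding error at most $2^{-n}$, contradicting the $k(n)$-incompressibility hypothesis. The proof hinges on \Cref{t:almost-opt-wst-coding}: its $(1+\epsilon')$ multiplicative blow-up in the worst-case encoding length is exactly what the $(1-\epsilon)$ factor in the pseudoentropy conclusion is designed to absorb. I will choose $\epsilon' := \epsilon/(2(1-\epsilon))$ so that $(1+\epsilon')(1-\epsilon) < 1$.

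The first step is to invoke the uniform analogue of the Vadhan--Zheng characterization of next-bit pseudoentropy by efficient $\KL$-predictors (cf.\ the overview of the Haitner--Mazor--Silbak proof in \Cref{subsection:optcode}): the failure of uniform-next-bit-pseudoentropy at level $k'(n)$ yields a uniform randomized polynomial-time predictor $\bar{P}$ whose sequentially-induced distribution $\mathcal{D}_{\bar{P}}$ has small cross-entropy, $\Exp_{x \sim \mathcal{D}_n}\mleft[-\log \mathcal{D}_{\bar{P}}(x)\mright] \le k'(n) + o(1)$. I then smooth $\bar{P}$ by mixing with a fair coin at weight $\gamma_n = n^{-c}$ for a large constant $c$ and take an empirical mean over $\poly(n)$ independent runs, calling the result $P'$. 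Smoothing forces every conditional probability under $\mathcal{D}_{P'}$ to be at least $\gamma_n/2$, so $\mathcal{D}_{P'}$ has full support $\bin^{\ell(n)} \supseteq \supp(\mathcal{D}_n)$ and no $(1/q(n))$-light next bits for any polynomial $q(n) \le 2n^c$, while the cross-entropy is perturbed by only $O(\gamma_n \ell(n)) = o(1)$; the empirical averaging makes $P'$ a valid next-bits predictor for $\mathcal{D}_{P'}$ on its full support with arbitrary polynomial error parameter in the sense of \Cref{def: nextbit approx}.

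I then apply \Cref{t:almost-opt-wst-coding} to $\mathcal{D}_{P'}$ with parameter $\epsilon'$ to obtain a randomized polynomial-time pair $(\Enc_0, \Dec_0)$ with worst-case bound $|\Enc_0(x, 1^n)| \le (1+\epsilon') \log(1/\mathcal{D}_{P'}(x)) + O(\log n)$ for every $x \in \bin^{\ell(n)}$ and decoder success at least $2/3$. Averaging over $x \sim \mathcal{D}_n$ bounds the expected encoding length by $(1+\epsilon')(1-\epsilon) k(n) - (1+\epsilon') C \log n + O(\log n)$, which for $C$ chosen large enough (depending only on $\epsilon$) is at most $k(n) - \Omega(k(n))$ provided $k(n) = \omega(\log n)$ (else the theorem is vacuous). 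To boost the decoder error from $1/3$ to $2^{-n}$, I first amplify $\Dec_0$ by standard majority-voting repetition to error $1/\ell(n)^2$ (with $O(\log n)$ bits of encoder-side overhead to commit to a robust encoding) and then apply \Cref{l:reducing-error} at additive cost $3\ell(n)/\ell(n)^2 + 2 = O(1)$. The resulting scheme violates $k(n)$-incompressibility, completing the contradiction. The main technical hurdle is the uniform Vadhan--Zheng duality in the first step: the standard nonuniform proof uses a min-max argument, which must be replaced by a Massey-style extraction of a uniform predictor from any uniform distinguisher, losing only $1/\poly(n)$ in log-loss which the $o(1)$ slack absorbs.
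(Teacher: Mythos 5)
Your proposal is correct and follows essentially the same route as the paper: the failure of uniform next-bit pseudoentropy yields a uniform randomized $\KL$-predictor (the paper imports this directly as \Cref{l:HMS-tech} from \cite{HaitnerMS23_innovations_conf}, so no new Massey-style argument is needed), the induced distribution $\mathcal{D}^P$ is next-bits-predictable by empirical estimation, \Cref{t:almost-opt-wst-coding} plus the cross-entropy identity (\Cref{l:wrong-code}) bounds the expected encoding length by $(1+\epsilon')$ times the cross-entropy, and \Cref{l:reducing-error} drives the decoding error to $2^{-n}$. Your smoothing step (mixing with a fair coin) is harmless but unnecessary, since the $(1+\epsilon')$ multiplicative slack in \Cref{t:almost-opt-wst-coding} already absorbs the light next-bit contributions and $\mathcal{D}^P$ has full support because $P$ outputs strictly positive values.
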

A large part of the proof follows from that of \cite{HaitnerMS23_innovations_conf}, so we strongly recommend the reader to refer to the prior work first. Below, we extract the relevant points to our work. 

We introduce some notions following from \cite{VadhanZ12_stoc_conf, HaitnerMS23_innovations_conf}. For a function $p\colon \bin^*\times \bin\to\mathbb{R}_{>0}$, we define a conditional probability $C_p(\cdot|\cdot)$ as for each $b\in\bin$ and $x\in\bin^*$,
\[C_p(b\mid x) = \frac{p(x,b)}{p(x,0)+p(x,1)}.\]
For a randomized algorithm $P$ that maps $(x,b)\in \bin^*\times\bin$ to a real positive value, we extend the notion above as
\[C_P(b\mid x) = \Exp_{r}\mleft[\frac{P(x,b;r)}{P(x,0;r)+P(x,1;r)}\mright].\]
Notice that
\[C_P(0\mid x) + C_P(1\mid x) = \Exp_{r}\mleft[\frac{P(x,0;r)+P(x,1;r)}{P(x,0;r)+P(x,1;r)}\mright]= 1.\]
Furthermore, for each $m\in\N$, we define a distribution $\mathcal{D}_m^P$ over $\bin^m$ as for each $x\in\bin^m$,
\[\mathcal{D}_m^P(x) = \prod_{i=1}^m C_P(x_i\mid x_{[i-1]}).\]

The prior work~\cite{HaitnerMS23_innovations_conf} showed the following technical lemma.
\begin{lemma}[{\cite[][Section~3.3]{HaitnerMS23_innovations_conf}} building upon \cite{VadhanZ12_stoc_conf}]\label{l:HMS-tech}
    If a distribution family $\mathcal{D}=\{\mathcal{D}_n\}$, where $\mathcal{D}_n$ is over $\bin^{\ell(n)}$, does not have uniform-next-bit pseudoentropy $k(n)$, then there exists a randomized polynomial-time algorithm $P$ that maps $(x,b)\in \bin^*\times\bin$ to a real positive value so that for infinitely many $n\in\N$,
    \[\KL(\mathcal{D}_n\| \mathcal{D}^P_{\ell(n)})\le k(n)-\Shannon(\mathcal{D}_n).\]
\end{lemma}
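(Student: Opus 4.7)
The plan is to prove the contrapositive: assume no randomized polynomial-time $P$ achieves $\KL(\mathcal{D}_n \| \mathcal{D}^P_{\ell(n)}) \le k(n) - \Shannon(\mathcal{D}_n)$ for infinitely many $n$, and deduce that $\mathcal{D}$ has uniform-next-bit pseudoentropy $k(n)$. The bridge between the two sides is the cross-entropy identity
\[
\KL(\mathcal{D}_n \,\|\, \mathcal{D}^P_{\ell(n)}) \;=\; \Exp_{x\sim\mathcal{D}_n}\!\left[-\log \mathcal{D}^P_{\ell(n)}(x)\right] \;-\; \Shannon(\mathcal{D}_n),
\]
so the contrapositive assumption becomes: for every efficient $P$, the ``log-loss'' of $P$ on $\mathcal{D}_n$ satisfies $\Exp_{x\sim\mathcal{D}_n}[-\log \mathcal{D}^P_{\ell(n)}(x)] > k(n)$. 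Using the product form $\mathcal{D}^P_{\ell(n)}(x)=\prod_{i=1}^{\ell(n)}C_P(x_i\mid x_{[i-1]})$ and linearity of expectation,
\[
\Exp_{x\sim\mathcal{D}_n}\!\left[-\log \mathcal{D}^P_{\ell(n)}(x)\right] \;=\; \ell(n)\cdot \Exp_{I_n,\, x\sim\mathcal{D}_n}\!\left[-\log C_P(x_{I_n}\mid x_{[I_n-1]})\right],
\]
so the assumption is equivalent to: every efficient predictor $P$ has per-position log-loss strictly greater than $k(n)/\ell(n)$ on $((\mathcal{D}_n)_{[I_n-1]},(\mathcal{D}_n)_{I_n})$.

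Next I would invoke the Vadhan--Zheng characterization of conditional pseudoentropy via KL-hard predictors: for a joint distribution $(X,B)$ and threshold $h$, the statement ``for every efficient randomized predictor $P$, $\Exp[-\log C_P(B\mid X)]\ge h-o(1)$'' is equivalent to the existence, for every polynomial slack $1/p$, of an auxiliary distribution $C$ jointly distributed with $X$ satisfying $\Shannon(C\mid X)\ge h-1/p$ and $(X,B)\approx_c (X,C)$. Instantiating this with $X=(\mathcal{D}_n)_{[I_n-1]}$, $B=(\mathcal{D}_n)_{I_n}$, and $h=k(n)/\ell(n)$, the per-position log-loss lower bound obtained in the previous paragraph translates directly into the existence of such a $C$ for every polynomial $p$, which is exactly the definition of uniform-next-bit pseudoentropy $k(n)$ for $\mathcal{D}$.

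The core difficulty is proving the Vadhan--Zheng equivalence in the \emph{uniform} model so that the recovered predictor $P$ is itself a uniform polynomial-time randomized algorithm. The standard route goes through a minimax / convex-duality argument: the set of auxiliary distributions $C$ with $\Shannon(C\mid X)\ge h-1/p$ is convex, and the hypothesis supplies a distinguisher against every such $C$; one then aggregates these distinguishers into a single predictor whose score function approximates $\Pr[B=b\mid X=x]$ well in KL. In the non-uniform setting this aggregation is immediate from von Neumann's theorem, but uniformity forces one to replace the static minimax step by a multiplicative-weights / online no-regret construction (in the spirit of Holenstein's uniform hardcore lemma or the uniform KL boosting in Vadhan--Zheng) that runs the would-be distinguishers as black boxes in polynomial time. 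The $1/p(n)$ slack built into the pseudoentropy definition is precisely what absorbs the approximation error of this boosting procedure; once the uniform KL equivalence is in hand, the chain-rule identities above close the argument.
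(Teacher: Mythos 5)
Your proposal is correct and follows essentially the same route as the source this lemma is imported from: the paper itself gives no proof of \Cref{l:HMS-tech} (it is cited from Haitner--Mazor--Silbak, Section~3.3, building on Vadhan--Zheng), and your reconstruction --- the cross-entropy identity $\KL(\mathcal{D}_n\|\mathcal{D}^P_{\ell(n)})=\Exp_{x\sim\mathcal{D}_n}[-\log\mathcal{D}^P_{\ell(n)}(x)]-\Shannon(\mathcal{D}_n)$ (the paper's \Cref{l:wrong-code}), the chain-rule decomposition of the log-loss over a uniformly random position $I_n$, and the Vadhan--Zheng uniform KL-hardness characterization of conditional pseudoentropy with the $1/p$ slack absorbing the uniform boosting error --- is exactly that argument. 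The only caveat is that, like the paper, you invoke rather than prove the uniform Vadhan--Zheng equivalence, which is where all the real work lies.
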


We also use the following well-known fact.
\begin{lemma}[{\cite[cf.][Theorem~5.4.3]{0016881_daglib_books}}]\label{l:wrong-code}
For every distributions $\mathcal{D}$ and $\mathcal{E}$ with $\KL(\mathcal{D}\|\mathcal{E})<\infty$,
\[\Exp_{x\sim \mathcal{D}}[-\log \mathcal{E}(x)]= \Shannon(\mathcal{D})+\KL(\mathcal{D}\|\mathcal{E}).\]
\end{lemma}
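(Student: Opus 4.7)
The plan is to prove the identity by straightforward manipulation of the definitions, with care taken only at the points where $\log \mathcal{E}(x)$ could be $-\infty$. Recall that $\KL(\mathcal{D}\|\mathcal{E}) = \sum_{x} \mathcal{D}(x) \log \frac{\mathcal{D}(x)}{\mathcal{E}(x)}$ and $\Shannon(\mathcal{D}) = -\sum_x \mathcal{D}(x) \log \mathcal{D}(x)$, using the standard convention $0 \log 0 = 0$.

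First I would observe that the assumption $\KL(\mathcal{D}\|\mathcal{E}) < \infty$ forces $\supp(\mathcal{D}) \subseteq \supp(\mathcal{E})$: otherwise there is some $x$ with $\mathcal{D}(x) > 0$ and $\mathcal{E}(x) = 0$, which makes the corresponding summand in $\KL(\mathcal{D}\|\mathcal{E})$ equal to $+\infty$. This containment guarantees that every term $-\log \mathcal{E}(x)$ appearing with positive weight $\mathcal{D}(x)$ in $\Exp_{x\sim\mathcal{D}}[-\log \mathcal{E}(x)]$ is a finite real number, so the left-hand side is well-defined (possibly $+\infty$, but not an indeterminate form).

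Next I would split the $\KL$ sum using $\log(a/b) = \log a - \log b$:
\[
\KL(\mathcal{D}\|\mathcal{E}) \;=\; \sum_{x \in \supp(\mathcal{D})} \mathcal{D}(x)\log \mathcal{D}(x) \;-\; \sum_{x \in \supp(\mathcal{D})} \mathcal{D}(x)\log \mathcal{E}(x) \;=\; -\Shannon(\mathcal{D}) \;+\; \Exp_{x\sim\mathcal{D}}[-\log \mathcal{E}(x)].
\]
The split is legitimate because, by the previous paragraph, $-\log \mathcal{E}(x)$ is a finite nonnegative-or-bounded-below quantity on $\supp(\mathcal{D})$ (since $\mathcal{E}(x) \le 1$), and $-\log \mathcal{D}(x)$ is likewise nonnegative, so the two sums on the right-hand side are each sums of terms bounded below and the arithmetic is unambiguous (both $\Shannon(\mathcal{D})$ and the cross-entropy may be $+\infty$, but in that case the identity reduces to $+\infty = +\infty$ or is recovered from $\KL(\mathcal{D}\|\mathcal{E}) < \infty$ forcing both to be finite simultaneously). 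Rearranging the displayed equation gives the claimed identity $\Exp_{x\sim\mathcal{D}}[-\log \mathcal{E}(x)] = \Shannon(\mathcal{D}) + \KL(\mathcal{D}\|\mathcal{E})$.

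There is essentially no obstacle; the only subtlety is the bookkeeping around infinite values, which is handled by the observation that $\KL(\mathcal{D}\|\mathcal{E}) < \infty$ implies $\Shannon(\mathcal{D}) < \infty$ (since $\KL \ge 0$ and the cross-entropy sum dominates the entropy on any countable support where $\mathcal{E}(x) \le 1$), so all three quantities in the identity are finite simultaneously and the termwise manipulation is valid. Since this is a well-known classical fact from information theory (cited in the excerpt as Theorem 5.4.3 of the Cover--Thomas textbook), I would in fact simply cite the reference rather than reprove it in the paper.
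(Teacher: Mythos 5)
Your derivation is correct and coincides with what the paper does: the paper offers no proof at all, citing Theorem~5.4.3 of Cover--Thomas, and your expansion $\KL(\mathcal{D}\|\mathcal{E})=\sum_x \mathcal{D}(x)\log\mathcal{D}(x)-\sum_x\mathcal{D}(x)\log\mathcal{E}(x)$ followed by rearrangement is exactly the standard textbook argument (and in the paper's application the distributions have finite support, so all quantities are trivially finite). The only blemish is your closing side-claim that $\KL(\mathcal{D}\|\mathcal{E})<\infty$ forces $\Shannon(\mathcal{D})<\infty$, which is false over countably infinite supports (take $\mathcal{E}=\mathcal{D}$ with infinite entropy, so $\KL=0$ yet $\Shannon(\mathcal{D})=\infty$); this does not damage the lemma, since in that case both sides equal $+\infty$, a case your parenthetical already covers.
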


Now, we prove \Cref{t:HMS-uniform} based on \Cref{t:almost-opt-wst-coding} and \Cref{l:HMS-tech,l:wrong-code}.

\begin{proof}[Proof of \Cref{t:HMS-uniform}]
    Let $k(n)$ be an arbitrary polynomial. Let $\mathcal{D}=\{\mathcal{D}_n\}$ be a distribution family, where each $\mathcal{D}_n$ is over $\bin^{\ell(n)}$. By \Cref{l:HMS-tech}, if $\mathcal{D}$ does not have uniform-next-bit pseudoentropy $k(n)$, then there exists a randomized polynomial-time algorithm $P$ that maps $(x,b)\in \bin^*\times\bin$ to a real positive value so that for infinitely many $n\in\N$,
    \begin{equation}\label{eq:KL-bound}
        \KL(\mathcal{D}_n\| \mathcal{D}^P_{\ell(n)})\le k(n)-\Shannon(\mathcal{D}_n).
    \end{equation}

    We observe that for each $n\in\N$, $x\in \supp(\mathcal{D}_{\ell(n))}^P)$, $b\in\bin$, and $i\in[\ell(n)]$, the conditional probability $\mathcal{D}_{\ell(n))}^{P*}(b\mid x_{[i-1]})= C_P(b\mid x_{[i-1]})$ is predictable with additive error $1/p(n)$, where $p$ is an arbitrarily large polynomial, by the empirical estimation of the quantity 
    \[\Exp_{r}\mleft[\frac{P(x,b;r)}{P(x,0;r)+P(x,1;r)}\mright].\]
    Note that the approximation halts in polynomial time in $n$ and $p(n)$.

    Therefore, by \Cref{t:almost-opt-wst-coding} (in the form of \Cref{lemma: arithmetic encoding}), there exists a pair $(\Enc,\Dec)$ of randomized polynomial-time algorithms such that for every $n\in\N$ and every $x\in \supp(\mathcal{D}_{\ell(n)}^P)$, it holds that 
    \begin{equation}\label{eq:decoding-error}
        \Prob_{\Enc,\Dec}[\Dec(\Enc(x,1^n))=x]\ge 1-\frac{1}{3\ell(n)}
    \end{equation}
    and
    \[|\Enc(x,1^n)|\le -(1+\epsilon)\log \mathcal{D}_{\ell(n)}^P(x) + O(\log n).\]

     Thus, we have that for every $n$ satisfying \Cref{eq:KL-bound},
    \begin{align}
        \Exp_{\Enc,x\sim \mathcal{D}_n}\mleft[|\Enc(x,1^n))|\mright] &\le  (1+\epsilon)\Exp_{x\sim \mathcal{D}_n}[-\log \mathcal{D}_{\ell(n)}^P(x)] + O(\log n)\notag\\
        &\le (1+\epsilon)\mleft(\Shannon(\mathcal{D}_n) + \KL(\mathcal{D}_n\| \mathcal{D}_{\ell(n)}^P)\mright) + O(\log n)\notag\\
        &\le (1+\epsilon)\cdot k(n) + O(\log n),\label{eq:exp-length-bound}
    \end{align}
    where the second inequality follows from \Cref{l:wrong-code}, and the last inequality follows from \Cref{eq:KL-bound}.

    Thus, from \Cref{eq:decoding-error,eq:exp-length-bound}, $\mathcal{D}$ is randomly compressible to length $(1+\epsilon)\cdot k(n) + O(\log n)$ with decoding error $1/(3\ell(n))$. By \Cref{l:reducing-error}, $\mathcal{D}$ is also randomly compressible with decoding error $2^{-n}$ to the length \[(1+\epsilon)\cdot k(n) + O(\log n)+\frac{3\ell(n)}{3\ell(n)}+2 = (1+\epsilon)\cdot k(n) + O(\log n).\]
    By retaking $k(n)$ to be $(1/1+\epsilon)(k(n)-O(\log n)) = (1-\epsilon/(1+\epsilon))(k(n)-O(\log n))$, we obtain the theorem.
\end{proof}
\section{One-Way Functions, Conditional Coding and Symmetry of Information for \texorpdfstring{$\rK^\poly$}{PDFstring}}

In this section, we prove \Cref{t:OWF-io} and \Cref{t:OWF-ae}.
We first show \Cref{t:OWF-io}, which is restated below.

\OWFIO*
\begin{proof}
	(\Cref{i:OWF-io-noOWF} $\implies$ \Cref{i:OWF-io-Coding}) follows directly from \Cref{optimal coding-io}.
	
	We then show the following implications in subsequent sections.
	\begin{itemize}
		\item \Cref{i:OWF-io-Coding} $\implies$ \Cref{i:OWF-io-SoI} (\Cref{l:avg-coding-soi-io} in \Cref{sec:avg-coding-soi-io}).
		\item \Cref{i:OWF-io-SoI} $\implies$ \Cref{i:OWF-io-noOWF} (\Cref{l:avg-soi-owf-io} in \Cref{sec:avg-soi-owf-io}).
	\end{itemize}
	This will complete the proof of \Cref{t:OWF-io}.
\end{proof}

\subsection{Average-Case Symmetry of Information from Conditional Coding}\label{sec:avg-coding-soi-io}

\begin{lemma}[\Cref{i:OWF-io-Coding} $\implies$ \Cref{i:OWF-io-SoI} in \Cref{t:OWF-io}]\label{l:avg-coding-soi-io}
	If average-case conditional coding holds for $\rK^t$, then average-case symmetry of information also holds.
\end{lemma}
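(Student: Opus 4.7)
The plan is to derive symmetry of information from the elementary chain rule for information content,
\[
\log\frac{1}{\D_n(x\mid y)} \;=\; \log\frac{1}{\D_n(x,y)} \;-\; \log\frac{1}{\D_n^{(2)}(y)},
\]
together with three ingredients: (i) a conditional coding upper bound for $\rK^t(x\mid y)$ in terms of $\log(1/\D_n(x\mid y))$; (ii) a coding upper bound for $\rK^t(y)$ in terms of $\log(1/\D_n^{(2)}(y))$; and (iii) an incompressibility lower bound $\rK^t(x,y) \gtrsim \log(1/\D_n(x,y))$.

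First, I would apply the assumed conditional coding property (\Cref{i:OWF-io-Coding}) directly to $\{\D_n\}$ with error parameter $3q$, obtaining a polynomial $p_1$ such that, for infinitely many $n$,
\[
\Prob_{(x,y)\sim\D_n}\mleft[\rK^{p_1(n)}(x\mid y)\le \log\tfrac{1}{\D_n(x\mid y)}+\log p_1(n)\mright]\ge 1-\tfrac{1}{3q(n)}.
\]
To encode the marginal $y$, I would introduce an auxiliary polynomial-time samplable distribution $\mathcal{E}_n$ on $\bin^n\times\bin^n$ that samples $y\sim\D_n^{(2)}$ and outputs $(y,0^n)$. Since $\mathcal{E}_n(y\mid 0^n)=\D_n^{(2)}(y)$, the same assumption applied to $\{\mathcal{E}_n\}$ yields $\rK^{p_2(n)}(y\mid 0^n)\le\log(1/\D_n^{(2)}(y))+\log p_2(n)$ w.h.p., and since an $\rK^t$-program for $y$ given $0^n$ can be turned into one for $y$ at an additive $O(1)$ cost (by hard-coding $0^n$), we obtain the desired bound on $\rK^t(y)$. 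To ensure that both coding invocations succeed on the \emph{same} infinite set of $n$'s, I would package $\D_n$ and $\mathcal{E}_n$ into a single samplable distribution with a leading tag bit and invoke the coding assumption only once, paying an $O(1)$ overhead in encoding length.

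Second, for the lower bound on $\rK^t(x,y)$, I would invoke \Cref{l:incompressible} applied to the padded distribution sampling $((x,y),0^n)$ with $(x,y)\sim\D_n$ (viewed as a family on $\bin^{2n}\times\bin^{2n}$) and choose $\alpha=O(\log nq(n))$. Combining with \Cref{f:K_small_than_rK} yields
\[
\Prob_{(x,y)\sim\D_n}\mleft[\rK^t(x,y)\ge\K(x,y)\ge\log\tfrac{1}{\D_n(x,y)}-O(\log nq(n))\mright]\ge 1-\tfrac{1}{3q(n)},
\]
and crucially this holds for \emph{all} $n$, so no quantifier alignment is needed for this piece.

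Finally, a union bound over the three events ensures that, for the same infinitely many $n$'s, simultaneously all three bounds hold with probability at least $1-1/q(n)$ over $(x,y)\sim\D_n$. Substituting via the chain rule gives
\[
\rK^t(x\mid y)\le \rK^t(x,y)-\rK^t(y)+O\bigl(\log(nq(n)p_1(n)p_2(n))\bigr),
\]
and choosing $p$ to be any polynomial that dominates $p_1,p_2$ and the total additive slack ensures $O(\log(\cdots))\le\log t$ for every $t\ge p(n)$, completing the proof. I do not anticipate a genuine technical obstacle: the only real nuisance is aligning the ``infinitely often'' quantifiers across the two coding invocations, which is handled cleanly by the tag-and-merge trick described above.
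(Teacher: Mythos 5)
Your proof is correct, but it takes a genuinely different route from the paper's in the key step. The paper first upgrades the conditional coding hypothesis (via \Cref{l:incompressible} and \Cref{f:K_small_than_rK}) to the distribution-free statement $\rK^{\poly}(x\mid y)\le \K(x\mid y)+\log p(n)$ on average (\Cref{optimal coding-io K-version}), and likewise for the marginal $y$; it then invokes the classical \emph{time-unbounded} symmetry of information for $\K$ (Kolmogorov--Levin) to decompose $\K(x\mid y)\le\K(x,y)-\K(y)+O(\log n)$, and finally converts $\K(x,y)$ back to $\rK^t(x,y)$ via $\K\le\rK^t$. You instead bypass symmetry of information for $\K$ entirely and use the exact chain rule $\log\frac{1}{\D_n(x\mid y)}=\log\frac{1}{\D_n(x,y)}-\log\frac{1}{\D_n^{(2)}(y)}$, converting each term to a complexity by coding (for the two upper bounds) and by the incompressibility counting lemma plus \Cref{f:K_small_than_rK} (for the lower bound on $\rK^t(x,y)$); this is arguably more self-contained. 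Both arguments share the $(y,0^n)$ auxiliary distribution for encoding the marginal and a merging trick to align the infinitely-often quantifiers; note that your merging genuinely needs the tag to sit in the \emph{conditioned-upon} coordinate (so that the merged distribution's conditional probabilities restrict to $\D_n(x\mid y)$ and $\D_n^{(2)}(y)$ exactly), whereas the paper's plain uniform mixture suffices for it precisely because its intermediate bound is stated in terms of $\K(x\mid y)$ and never references the mixture's probabilities. Two harmless quibbles: removing the conditioning on $0^n$ costs $O(\log n)$ rather than $O(1)$ (one must specify $n$), and the padded application of \Cref{l:incompressible} loses an additional $O(1)$ from $\K(x,y)\ge\K((x,y)\mid 0^{2n})-O(1)$; both are absorbed by the $\log t$ slack.
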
 

We first need the following lemma.

\begin{lemma}\label{optimal coding-io K-version}
	If one-way functions do not exist, then for every samplable distribution family $\mathcal{D}=\{\mathcal{D}_n\}$ supported over $\bool^n\times\bool^n$ and every polynomial $q$, there exists a polynomial $p$ such that for infinitely many $n\in\N$, 
	\[\Prob_{(x,y)\sim\mathcal{D}_n}\mleft[\rK^{p(n)}(x\mid y)\le \K(x\mid y)  + \log p(n) \mright]\ge1-\frac{1}{q(n)}.\]
\end{lemma}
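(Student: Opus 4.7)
The plan is to combine the efficient conditional coding theorem (\Cref{optimal coding-io}, which gives \Cref{i:OWF-io-noOWF} $\implies$ \Cref{i:OWF-io-efficient}) with the standard incompressibility bound for (resource-unbounded) Kolmogorov complexity (\Cref{l:incompressible}) via a union bound. The conditional coding theorem upper-bounds $\rK^{p(n)}(x\mid y)$ by $-\log\mathcal{D}_n(x\mid y)+\log p(n)$ with high probability over $(x,y)\sim\mathcal{D}_n$, while incompressibility upper-bounds $-\log\mathcal{D}_n(x\mid y)$ by $\K(x\mid y)+O(\log n)$ with high probability; chaining these two bounds yields the theorem.

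Concretely, given a polynomial $q$, first apply \Cref{optimal coding-io} to $\mathcal{D}$ with the polynomial $2q$. Under the non-existence of one-way functions, this yields a polynomial $p_0$ such that for infinitely many $n$,
\[
\Prob_{(x,y)\sim\mathcal{D}_n}\mleft[\rK^{p_0(n)}(x\mid y)\leq \log\frac{1}{\mathcal{D}_n(x\mid y)}+\log p_0(n)\mright]\geq 1-\frac{1}{2q(n)}.
\]
Next, apply \Cref{l:incompressible} pointwise in $y$ with $\alpha\vcentcolon= b\log n + \log(2q(n))+O(1)$, where $b$ is the universal constant from the lemma. Averaging over $y\sim\mathcal{D}_n^{(2)}$, we obtain for every $n$,
\[
\Prob_{(x,y)\sim\mathcal{D}_n}\mleft[\K(x\mid y)\geq \log\frac{1}{\mathcal{D}_n(x\mid y)}-\alpha\mright]\geq 1-\frac{n^b}{2^{\alpha}}\geq 1-\frac{1}{2q(n)}.
\]

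A union bound over the two events shows that for infinitely many $n$, with probability at least $1-1/q(n)$ over $(x,y)\sim\mathcal{D}_n$ we simultaneously have
\[
\rK^{p_0(n)}(x\mid y)\leq \log\frac{1}{\mathcal{D}_n(x\mid y)}+\log p_0(n)\quad\text{and}\quad\log\frac{1}{\mathcal{D}_n(x\mid y)}\leq \K(x\mid y)+\alpha.
\]
Chaining these inequalities gives $\rK^{p_0(n)}(x\mid y)\leq \K(x\mid y)+\log p_0(n)+\alpha$. Since $\alpha+\log p_0(n)=O(\log n)$, we may choose a sufficiently large polynomial $p$ with $p(n)\geq p_0(n)$ and $\log p(n)\geq \log p_0(n)+\alpha$; then $\rK^{p(n)}(x\mid y)\leq \rK^{p_0(n)}(x\mid y)\leq \K(x\mid y)+\log p(n)$, as required.

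There is no substantive obstacle here; the proof is essentially bookkeeping, combining an already-established coding theorem with the classical incompressibility lemma. The only point of care is to choose the parameter $\alpha$ large enough (logarithmic in $n$ and $q(n)$) to make the failure probability of incompressibility at most $1/(2q(n))$, and to absorb the resulting additive $O(\log n)$ overhead into a single $\log p(n)$ term by inflating the polynomial $p$.
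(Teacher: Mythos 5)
Your proposal is correct and matches the paper's own proof essentially verbatim: both apply \Cref{optimal coding-io} with error parameter $2q$, invoke \Cref{l:incompressible} with a logarithmic $\alpha$ to bound $\log\frac{1}{\mathcal{D}_n(x\mid y)}$ by $\K(x\mid y)+O(\log nq(n))$, take a union bound, and absorb the additive overhead by inflating $p$. No issues.
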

\begin{proof}
	Let $\mathcal{D}=\{\mathcal{D}_n\}$ be a polynomial-time samplable distribution family and $q$ be a polynomial.
	
	Assuming one-way functions do not exist, by \Cref{optimal coding-io}, we have that there exists some $p'$ such that for infinitely many $n\in\N$, 
	\[
		\Prob_{(x,y)\sim\mathcal{D}_n}\mleft[\rK^{p'(n)}(x\mid y)\le \log \frac{1}{\D_n(x\mid y)}  + \log p'(n) \mright]\ge1-\frac{1}{2q(n)}.
	\]
	
	Also, by \Cref{l:incompressible}, we get that for every $n$, with probability at least $1-(2q(n))$ over $(x,y)\sim\D_n$, we have
	\[
	\mathsf{K}(x\mid y) \geq  \log\frac{1}{\mathcal{D}_n(x\mid y)}-O(\log q(n)).
	\]
	By a union bound, we get that for infinitely many $n$, with probability at least $1-q(n)$ over $(x,y)\sim\D_n$,
	\[
		\rK^{p'(n)}(x\mid y)\le \K(x\mid y)  + \log p'(n) +O(\log q(n)).
	\]
	The lemma follows by letting $p$ be a sufficient large polynomial.
\end{proof}

We are now ready to show \Cref{l:avg-coding-soi-io}.

\begin{proof}[Proof of \Cref{l:avg-coding-soi-io}]
	Let $\mathcal{D}=\{\mathcal{D}_n\}$ be a polynomial-time samplable distribution family and $q$ be a polynomial.
	
	We show the following claim.
	
	\begin{claim}\label{c:good-n}
		There exists a polynomial $p'$ such that for infinitely many $n$, both the following hold with probability at least $1-1/q(n)$ over $(x,y)\sim\D_n$.
		\begin{enumerate}
			\item\label{i:good-n-1} $\rK^{p'(n)} (x\mid y)\leq \log \K(x\mid y)+\log p'(n)$.
			\item\label{i:good-n-2} $\rK^{p'(n)} (y)\leq \K(y) +\log p'(n)+O(\log n)$.
		\end{enumerate}
	\end{claim}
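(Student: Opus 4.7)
The plan is to apply \Cref{optimal coding-io K-version} just once, to a single samplable distribution that simultaneously encodes both desired inequalities. The naive approach of applying \Cref{optimal coding-io K-version} twice---once to $\D_n$ for item~\ref{i:good-n-1}, and once to the $y$-marginal of $\D_n$ for item~\ref{i:good-n-2}---would run into trouble because each application only guarantees its bound on an \emph{infinite} set of input lengths, and two such sets need not intersect. Packaging both requirements into one distribution forces the theorem's ``good set'' of input lengths to witness both items simultaneously.

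Concretely, I will define a polynomial-time samplable family $\tilde{\mathcal{D}} = \{\tilde{\mathcal{D}}_m\}$ over $\bool^m \times \bool^m$, where $m = n+1$: draw $(x,y) \sim \D_n$ together with a uniform tag bit $b \in \bool$; if $b=0$, output $((0\circ x),\,(0\circ y))$, and if $b=1$, output $((1\circ y),\,(1\circ 0^n))$. Applying \Cref{optimal coding-io K-version} to $\tilde{\mathcal{D}}$ with the polynomial $4 q(m-1)$ yields a polynomial $\tilde{p}$ such that, for infinitely many $m=n+1$,
\[
\Prob_{(\tilde{x}, \tilde{y}) \sim \tilde{\mathcal{D}}_m}\mleft[\rK^{\tilde{p}(m)}(\tilde{x}\mid\tilde{y}) \leq \K(\tilde{x}\mid\tilde{y}) + \log \tilde{p}(m)\mright] \geq 1 - \frac{1}{4q(n)}.
\]
Conditioning on $b=0$ (which occurs with probability $1/2$), the failure probability over $(x,y)\sim\D_n$ is at most $1/(2q(n))$. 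Since stripping the flag bit from the $\rK$-witness program costs only $O(1)$ bits and $\K(0x \mid 0y) = \K(x \mid y) + O(1)$, this case delivers item~\ref{i:good-n-1} after absorbing constants into a slightly larger polynomial $p'$. Conditioning on $b=1$, a symmetric argument handles item~\ref{i:good-n-2}; the only subtlety is that converting an $\rK$-program which prints $1\circ y$ on input $1\circ 0^n$ into one that prints $y$ with no input requires hardcoding $n$ so as to construct the padding $1\circ 0^n$ internally, yielding the additive $O(\log n)$ overhead stated in the claim (using also $\K(1\circ y\mid 1\circ 0^n) = \K(y) + O(\log n)$). A union bound over the two cases gives both items simultaneously with probability at least $1-1/q(n)$.

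The only real obstacle is the infinitely-often synchronization issue described above; once resolved via the combined-distribution construction, the rest is routine accounting for the flag and padding bits together with the corresponding $O(1)$ or $O(\log n)$ slack in both $\rK^{\tilde{p}(m)}$ and $\K$.
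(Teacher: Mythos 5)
Your proposal is correct and follows essentially the same route as the paper: the paper likewise packages both requirements into a single samplable distribution (the uniform mixture of $\D_n$ and the distribution outputting $(y,0^n)$ for $y\sim\D_n^{(2)}$), applies \Cref{optimal coding-io K-version} once to synchronize the infinitely-often sets of good $n$, and finishes with the same $O(\log n)$ accounting for converting $\rK^{p'}(y\mid 0^n)$ into $\rK^{p'}(y)$ plus a union bound. Your explicit tag bit is only a cosmetic variant of the paper's mixture.
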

	\begin{proof}[Proof of \Cref{c:good-n}]\renewcommand\qedsymbol{$\diamond$}
 	Let $\D'\vcentcolon=\{\mathcal{D}'_n\}$ be the polynomial-time samplable distribution family where each $\mathcal{D}'_n$ is sampled by first sampling $y\sim \D_n^{(2)}$ and outputs $(y,0^n)$, where $\D_n^{(2)}$ is the marginal distribution of $\D_n$ on the second half.
		
		Finally, let $\mathcal{E}$ be the uniform mixture of  $\D$ and $\D'$.
		
		Suppose one-way functions do not exist. Then by \Cref{optimal coding-io K-version}, there exists a polynomial $p'$ such that for infinitely many $n\in\N$, 
		\[\Prob_{(x,y)\sim\mathcal{E}_n}\mleft[\rK^{p'(n)}(x\mid y)> \K(x\mid y)  + \log p'(n) \mright]\leq \frac{1}{4q(n)}.\]		
		Since $\mathcal{E}_n$ samples $\D_n$ with probability $1/2$,
		\begin{equation}\label{eq:D-fails}
			\Prob_{(x,y)\sim\mathcal{D}_n}\mleft[\rK^{p'(n)}(x\mid y)> \K(x\mid y)  + \log p'(n) \mright]\leq \frac{1}{2q(n)}.
		\end{equation}
		
		Similarly, we get
		\begin{equation}\label{eq:d_2}
			\Prob_{(a,b)\sim\mathcal{D}'_n}\mleft[\rK^{p'(n)}(a\mid b)> \K(a\mid b)  + \log p'(n) \mright]\leq \frac{1}{2q(n)}.		
		\end{equation}
		Note that the \Cref{eq:d_2} essentially means
		\[
		\Prob_{y\sim\mathcal{D}^{(2)}_n}\mleft[\rK^{p'(n)}(y\mid 0^n)> \K(y\mid 0^n)  + \log p'(n) \mright]\leq \frac{1}{2q(n)}.		
		\]
		Finally, note that $\rK^{p'(n)}(y)\leq \rK^{p'(n)}(y\mid 0^n)+O(\log n)$ and $\K(y)\geq \K(y\mid 0^n)$. Therefore, the above implies
		\begin{equation}\label{eq:D-2-fails}
		\Prob_{y\sim\mathcal{D}^{(2)}_n}\mleft[\rK^{p'(n)}(y)> \K(y)  + \log p'(n) +O(\log n) \mright]\leq \frac{1}{2q(n)}.		
		\end{equation}
		The claim follows by taking a union bound over Equations (\ref{eq:D-fails}) and (\ref{eq:D-2-fails}).
	\end{proof}
	
	Fix any $n$ and $(x,y)$ such that both the conditions stated in \Cref{c:good-n} hold. Then we have
	\begin{align*}
		\mathsf{rK}^{p'(n)}(x\mid y) &\leq \K(x\mid y) +  \log p'(n,k) \tag{by \Cref{i:good-n-1} of \Cref{c:good-n}}\\
		&\leq \K(x,y) -\K(y) +O(\log n) +  \log p'(n) \tag{by Symmetry of Information for $\K$}\\
		&\leq \K^{t}(x,y) - \rK^{p'(n)}(y) + O(\log n) + 2\log p'(n). \tag{by \Cref{i:good-n-2} of \Cref{c:good-n}}
	\end{align*}
	By letting $p$ be a sufficiently large polynomial, we get that for every $t\geq p(n)$,
	\[
	\rK^{t}(x \mid y) \leq \rK^t(x,y) - \rK^{t}(y) +  \log t,
	\]
	as desired.
\end{proof}

\subsection{Inverting One-Way Functions from Average-Case Symmetry of Information}\label{sec:avg-soi-owf-io}
\begin{lemma}[\Cref{i:OWF-io-SoI} $\implies$ \Cref{i:OWF-io-noOWF} in \Cref{t:OWF-io}]\label{l:avg-soi-owf-io} 	If average-case symmetry of information holds for $\rK^t$, then one-way functions do not exist.
\end{lemma}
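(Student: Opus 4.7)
The plan is to prove the contrapositive: assuming a one-way function $f$ exists, I will exhibit a polynomial-time samplable distribution family on $\{0,1\}^n \times \{0,1\}^n$ that violates average-case symmetry of information for $\rK^t$. By a standard padding reduction I may assume $f\colon \{0,1\}^n \to \{0,1\}^n$ is length-preserving, and I take $\mathcal{D}_n$ to sample $z \sim U_n$ and output $(z, f(z))$, so the marginal of the second coordinate is $Y := f(U_n)$ and, conditional on $y$, $z$ is uniform over $f^{-1}(y)$.

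First, I will combine the SoI inequality with two elementary bounds to derive a clean upper bound on $\rK^t(z \mid y)$. The trivial program ``print $z$, then apply $f$'' gives $\rK^t(z, y) \leq n + O(\log n)$ as soon as $t$ exceeds the running time of $f$. Using \Cref{f:K_small_than_rK} and a standard incompressibility bound for the marginal $Y$ (either a direct Kraft-style counting, or an application of \Cref{l:incompressible} to the distribution $(y, 0^n)$ with $y \sim Y$ followed by the $O(\log n)$ conversion between $\K(y)$ and $\K(y \mid 0^n)$), I obtain $\rK^t(y) \geq \log(1/\Pr_Y[y]) - O(\log n) = n - \log|f^{-1}(y)| - O(\log n)$ with probability at least $1 - n^{-c}$ over $y \sim Y$. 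Plugging both bounds into the SoI inequality gives, for a $1 - 1/q(n) - n^{-c}$ fraction of $(z,y) \sim \mathcal{D}_n$,
\[
    \rK^t(z \mid y) \;\leq\; \log|f^{-1}(y)| + O(\log t).
\]

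Next, I will convert this into a PPT inverter for $f$. The algorithm $A(y)$ enumerates lengths $\ell = 0, 1, \ldots, n$, and for each $\ell$ it samples $N = \poly(n, t)$ independent uniformly random strings $\Pi \in \{0,1\}^\ell$, runs each $\Pi$ on input $y$ with fresh internal randomness for $t$ steps, and returns any observed $z'$ with $f(z') = y$. To analyze $A$, fix a typical good $y$ and set $\ell^\star := \lceil \log|f^{-1}(y)| + O(\log t) \rceil$; the bound above guarantees at least $(1-\delta)|f^{-1}(y)|$ preimages admit witnessing programs of length $\leq \ell^\star$, and since distinct preimages require distinct witnesses while the total number of length-$\leq \ell^\star$ programs is $O(|f^{-1}(y)| \cdot \poly(t))$, the weighted pigeonhole
\[
    \max_{\ell \leq \ell^\star} \frac{u_\ell}{2^\ell} \;\geq\; \frac{\sum_\ell u_\ell}{\sum_\ell 2^\ell} \;=\; \Omega\!\left(\frac{1}{\poly(t)}\right)
\]
produces a length at which an inverse-polynomial fraction of uniformly random programs are useful. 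Thus $O(1)$ inner repetitions per sampled $\Pi$ plus $N$ outer samples at each length drive the failure probability on good $y$ to $o(1)$; averaging over $(z,y) \sim \mathcal{D}_n$ gives a PPT that inverts $f$ with probability $1 - o(1)$ on random $z \sim U_n$ for infinitely many $n$, contradicting one-wayness.

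The main obstacle I anticipate is precisely this counting in the inverter analysis: the $\rK^t$-witnesses guaranteed by SoI are spread across lengths $0, \ldots, \ell^\star$ and could a priori concentrate at a single length where $u_\ell / 2^\ell$ is individually tiny, so one must rely on the weighted-average lower bound rather than on any per-length estimate, and one must remember that ``useful'' means ``outputs a specific preimage of $y$ with probability $\geq 2/3$ over internal randomness,'' which is exactly what is detected by sampling a random $\Pi$ and running it with $O(1)$ fresh seeds. Secondary technicalities are the padding reduction to length-preserving $f$, and the bookkeeping combining the additive $\log t$ slack from the SoI hypothesis with the $O(\log n)$ slack from incompressibility into the single $O(\log t)$ error term in the final bound; both should be routine once $t$ is chosen as a sufficiently large polynomial in $n$.
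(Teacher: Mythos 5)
Your proposal is correct and follows essentially the same route as the paper: the same distribution $(z,f(z))$, the same key consequence of SoI that $\rK^t(z\mid f(z)) \le \log|f^{-1}(f(z))| + O(\log t)$ on most inputs (the paper derives it via a Longpr\'e--Watanabe-style bound $\K(f(x)) \ge \K(x) - \log|f^{-1}(f(x))| - O(\log n)$ plus incompressibility of $x$, which is equivalent to your direct incompressibility bound on the marginal $f(U_n)$), and the same inverter that samples short random programs and runs them on $y$. The only cosmetic difference is in the inverter's analysis: the paper picks a uniformly random length $s\sim[2n]$ and sums the disjoint events of outputting distinct preimages to get success probability $1/\poly(n)$ in a single trial, whereas you enumerate lengths and use the mediant inequality before amplifying; both are fine.
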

\begin{proof}
	The proof uses ideas from \cite{LongpreW95_iandc_journals}.
	
	Let $f\colon\bool^n\to\bool^n$ be any function that is supposed to be infinitely-often secure. Let $q$ be any polynomial, we show that, for infinitely many $n$, we can invert $f$ with high probability $1-1/q(n)$ over $x\sim\bool^n$ in time $\poly(n)$.
	
	We first observe the following.	
	\begin{claim}[{\cite[Lemma 3.5]{LongpreW95_iandc_journals}}]\label{c:LW}
		For every $n$ and $x\in\bool^n$, we have
		\[
		\K(f(x))\geq \K(x) -\log |f^{-1}(f(x))| - O(\log n).
		\]
	\end{claim}
	\begin{proof}[Proof of \Cref{c:LW}]\renewcommand\qedsymbol{$\diamond$}
		First of all, for every $x\in\bool^n$, we have
		\begin{equation}\label{e:eq-1}
			\K(x\mid f(x)) \leq \log |f^{-1}(f(x))| + O(\log n).
		\end{equation}
		To see this, note that given $f(x)$, we can specify $x$ using the index of $x$ in the set $f^{-1}(f(x))$, which takes $\leq \log |f^{-1}(f(x))|$ bits. 
		Then we have
		\begin{align*}
			\K(x) &\leq \K(x\mid f(x)) +\K(f(x))\\
			&\leq \log |f^{-1}(f(x))| +\K(f(x))+ O(\log n) \tag{by \Cref{e:eq-1}},
		\end{align*}
		as desired.
	\end{proof}
	
	Next, assuming that average-case symmetry of information hods for $\rK^\poly$ (\Cref{i:OWF-io-Coding} in \Cref{t:OWF-io}), we show the following. 
	\begin{claim}\label{c:upper}
		There is a polynomial $p$ such that for infinitely many $n$, with probability at least $1-1/q(n)^2$ over $x\sim\bool^n$, we have
		\[
		\rK^{p(n)}(x\mid f(x))\leq \log |f^{ -1}(f(x))|+ \log p(n).
		\]
	\end{claim}
	\begin{proof}[Proof of \Cref{c:upper}]\renewcommand\qedsymbol{$\diamond$}
		Consider the polynomial-time samplable distribution family $\{\D_n\}$ where each $\D_n$ samples $x\sim\bool^n$ and outputs $(x,f(x))$.
		
		By the assumption that average-case symmetry of information hods for $\rK^\poly$ (\Cref{i:OWF-io-Coding} in \Cref{t:OWF-io}), there exists a polynomial $p'$ such that for infinitely many $n$, with probability at least $1-1/(2q(n)^2)$ over $x\sim\bool^n$, we have
		\begin{align*}\label{e:pKt-SoI}
			\mathsf{rK}^{p'(n)}(x\mid f(x))
			&\leq \mathsf{rK}^{p'(n)}(x,f(x))- \mathsf{rK}^{p'(n)}(f(x)) + \log p'(n)\\
			&\leq \mathsf{rK}^{p'(n)/2}(x) - \mathsf{rK}^{p'(n)}(f(x)) + \log p'(n) + O(\log n)\\
			&\leq \mathsf{rK}^{p'(n)/2}(x) - \mathsf{K}(f(x)) + \log p'(n) + O(\log n)\\
			&\leq \mathsf{rK}^{p'(n)/2}(x) - \mleft(\K(x) -\log |f^{-1}(f(x))| - O(\log n)\mright) + \log p'(n) + O(\log n)\\
			&\leq \mathsf{rK}^{p'(n)/2}(x)- \K(x) +\log |f^{-1}(f(x))| + \log p'(n) + O(\log n),\numberthis   			
		\end{align*}
		where the second inequality uses the fact that given $x$ we can compute $f(x)$ efficiently, and the second last inequality follows from \Cref{c:LW}.
		
		Note that by a counting argument, for every $n$, with probability at least $1-1/(2q(n)^2)$ over $x\sim\bool^n$, we have
		\[
		\K(x) \geq n- O(\log q(n)),
		\]
		which yields
		\begin{equation}\label{eq:rK-K}
			\mathsf{rK}^{p'(n)/2}(x)- \K(x)\leq O(\log q(n)).
		\end{equation}
		By Plugging \Cref{eq:rK-K} into \Cref{e:pKt-SoI} and by a union bound, we get that, for infinitely many $n$, with probability at least $1-1/q(n)^2$ over $x\sim\bool^n$
		\[
		\mathsf{rK}^{p'(n)}(x\mid f(x)) \leq \log |f^{-1}(f(x))| +  \log p'(n) + O(\log q(n)).
		\]
		The claim follows by letting $p$ be a sufficiently large polynomial.
	\end{proof}
	
	In what follows, we fix $n$ so that the condition stated in \Cref{c:upper} holds.
	
	Next, we observe the following equivalent way of sampling $(x,f(x))$ while $x$ is uniformly at random: We first sample $y:=f(z)$ for a uniformly random $z$ and then sample $x\sim f^{-1}(y)$. By an averaging argument, \Cref{c:upper} yields that with probability at least $1-1/q(n)$ over $y$ sampled this way, for at least $1-1/q(n)$ fraction of the $x\in f^{-1}(y)$, we have
	\begin{equation}\label{eq:good-small_pK}
		\rK^{p(n)}(x\mid y)\leq \log |f^{ -1}(y)|+ \log p(n).          
	\end{equation}
	
	Consider any \emph{good} $y$ such that \Cref{eq:good-small_pK} holds for at least $1-1/q(n)$ fraction of the $x\in f^{-1}(y)$. Also, let $S_y$ be the set of $x\in f^{-1}(y)$ such that \Cref{eq:good-small_pK} holds. Note that
	\[
	|S_y|\geq (1-1/q(n))\cdot |f^{-1}(y)|.
	\]
	
	Consider the following procedure $\mathsf{A}$ that takes $n$ and $y$ as input and does the following.
	\begin{enumerate}
		\item Pick $s\sim[2n]$,
		\item Pick $\Pi \sim \{0,1\}^s$,
		\item View $\Pi$ as a \emph{randomized} program, run $U(\Pi,y)$ for $p(n)$ steps, and return its output.
	\end{enumerate}
	
	It is easy to see from the definition of $\rK^t$ that for \emph{every} $x\in S_y$ (which satisfies \Cref{eq:good-small_pK}), the above procedure $\mathsf{A}$ outputs $x$ with probability at least
	\[
	\frac{1}{O(n)}\cdot \frac{1}{2^{\log |f^{ -1}(y)|+ \log p(n)}} \cdot \frac{2}{3}\geq \frac{1}{|f^{ -1}(y)|}\cdot \frac{1}{p(n)^2}.
	\]
	Since the above holds for every $x\in S_y$, we get that the probability that $\mathsf{A}(1^n,y)$ outputs \emph{some} $x\in S_y$ is at least
	\[
	|S_y|\cdot \frac{1}{|f^{ -1}(y)|}\cdot \frac{1}{p(n)^2}\geq \frac{1}{\poly(n)}.
	\]
	In other words, with probability at least $1-1/k$ over $x\sim\{0,1\}^n$ (in which case $f(x)$ is good), $\mathsf{A}(1^n,f(x))$ outputs some pre-image of $f(x)$ with probability at least $1/\poly(n)$. This breaks the one-way-ness of $f$.
\end{proof}

\subsection{Characterizing Infinitely-Often One-Way Functions}
In this subsection, we show \Cref{t:OWF-ae}, which is restated below.

\OWFAE*
\begin{proof}
		(\Cref{i:OWF-ae-noOWF} $\implies$ \Cref{i:OWF-ae-Coding}) follows directly from \Cref{optimal coding-ae}.
		The proof of (\Cref{i:OWF-ae-Coding} $\implies$ \Cref{i:OWF-ae-SoI}) can be easily adapted from that of \Cref{l:avg-coding-soi-io}.
        Also, the proof of (\Cref{i:OWF-ae-SoI} $\implies$ \Cref{i:OWF-ae-noOWF}) can be easily adapted from that of \Cref{l:avg-soi-owf-io}.
		This shows the equivalence of \Cref{i:OWF-ae-noOWF}, \Cref{i:OWF-ae-Coding}, and \Cref{i:OWF-ae-SoI}.
		
		We then show the following implications in the rest of this subsection.
		 \begin{itemize}
		 	\item \Cref{i:OWF-ae-Coding} $\iff$ \Cref{i:OWF-ae-Coding-cd} (\Cref{l:coding-avg-to-worst} and \Cref{l:coding-worst-to-avg}).
		 	\item \Cref{i:OWF-ae-SoI} $\iff$ \Cref{i:OWF-ae-SoI-cd} (\Cref{l:soi-avg-to-worst} and \Cref{l:soi-worst-to-avg}).
		 \end{itemize}
		 This will complete the proof of \Cref{t:OWF-ae}.
\end{proof}

\begin{lemma}\label{l:coding-avg-to-worst} 
	We have \emph{(}\Cref{i:OWF-ae-Coding} $\implies$ \Cref{i:OWF-ae-Coding-cd}\emph{)} in \Cref{t:OWF-ae}.
\end{lemma}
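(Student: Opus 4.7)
I plan to reduce the worst-case bound with computational depth for computable distributions to the average-case bound for samplable distributions guaranteed by \Cref{i:OWF-ae-Coding}. Fix a computable $\D=\{\D_n\}$, $t\ge n$, $(x,y)\in\supp(\D_n)$, and let $\alpha:=\cd^t(x,y)=\pK^t(x,y)-\K(x,y)$. The coding theorem for $\K$ applied to $\D$ yields $\K(x,y)\le\log 1/\D_n(x,y)+O(1)$, so by definition of $\alpha$ we have $\pK^t(x,y)\le\log 1/\D_n(x,y)+\alpha+O(1)$. This inequality is the point of contact between the computable distribution $\D$ and the time-bounded probabilistic complexity used to define samplable covering distributions.

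I would then construct a polynomial-time samplable distribution $\mathcal{E}=\{\mathcal{E}_m\}$ on $\{0,1\}^n\times\{0,1\}^n$, where $m$ is a polynomial in $n+t$, that samples $(x',y')$ in two stages via the time-$m$ universal distribution: first draw $y'$ by picking a random length $s\le 2n$, a uniformly random randomized program $\Pi\in\{0,1\}^s$, and a random tape, then running $\Pi$ for $m$ steps; then draw $x'$ by the same procedure with $y'$ given as input. This gives $\mathcal{E}_m(x\mid y)\ge 2^{-\pK^m(x\mid y)-O(\log n)}$ for every $(x,y)$. Next, I would apply \Cref{i:OWF-ae-Coding} to $\mathcal{E}$ with a carefully chosen parameter $k$, ensuring the ``bad set'' of $\mathcal{E}_m$-mass at most $1/k$ cannot contain our target $(x,y)$. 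The resulting bound $\rK^{p(m,k)}(x\mid y)\le\log 1/\mathcal{E}_m(x\mid y)+\log p(m,k)$, combined with the target estimate $\pK^m(x\mid y)\le\log 1/\D_n(x\mid y)+\alpha+O(\log t)$, will deliver the desired inequality, with $p(m,k)$ absorbed into the time bound $(2^\alpha t)^c$ for a suitable constant $c$.

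The main obstacle I anticipate is the passage from joint quantities to conditional ones without paying an undesired $\log 1/\D_n(y)$ term: the naive route $\pK^m(x\mid y)\le\pK^m(x,y)+O(\log t)\le\log 1/\D_n(x,y)+\alpha+O(\log t)=\log 1/\D_n(x\mid y)+\log 1/\D_n(y)+\alpha+O(\log t)$ overshoots by $\log 1/\D_n(y)$, which can be as large as $n$. To close this gap I would invoke symmetry of information for $\pK^t$ (which is derivable from the hypothesis \Cref{i:OWF-ae-Coding} via an adaptation of \Cref{l:avg-coding-soi-io}, as indicated in the proof outline of \Cref{t:OWF-ae}) to obtain $\pK^m(x\mid y)\le\pK^m(x,y)-\pK^m(y)+O(\log m)$, then lower-bound $\pK^m(y)\ge\K(y)$ and combine with the unconditional symmetry of information for $\K$ ($\K(x,y)\ge\K(x\mid y)+\K(y)-O(\log n)$) together with the coding theorem for $\K$ on the conditional $\D_n(\cdot\mid y)$ to get $\pK^m(x\mid y)\le\K(x\mid y)+\alpha+O(\log t)\le\log 1/\D_n(x\mid y)+\alpha+O(\log t)$.

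A secondary subtlety is calibrating $k$ so that $p(m,k)\le(2^\alpha t)^c$ without acquiring a factor of $1/\D_n(x,y)$, which is unbounded. I expect to handle this by choosing $k$ polynomial in $2^\alpha\cdot t$ and observing that for pairs whose $\D$-mass falls below $1/k$, the computational depth $\alpha$ is automatically large enough to make the target inequality trivial via a direct encoding argument using the program for $\D$ together with the $(2^\alpha t)^c$ time budget (which suffices to simulate enough of $\D$'s CDF to extract $x$ via standard arithmetic decoding). This bootstrap between the ``high-mass'' regime, handled by the average-case bound on $\mathcal{E}$, and the ``low-mass'' regime, absorbed into $\alpha$, should give a uniform polynomial-time bound with the correct dependence.
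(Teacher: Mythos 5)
Your overall architecture --- build a universal samplable distribution $Q$ from random programs, apply the average-case conditional coding hypothesis to it, and relate its mass to $\K$ and to $\D_n$ --- matches the paper's, but the central mechanism for transferring an \emph{average-case} guarantee over $Q$ to a \emph{specific worst-case} pair $(x,y)$ is missing, and the substitute you propose does not work. \Cref{i:OWF-ae-Coding} only gives a bad set $B$ of $\mathcal{E}_m$-mass at most $1/k$; nothing a priori prevents your target $(x,y)$ from lying in $B$. To exclude it outright you would need $k>1/\mathcal{E}_m(x,y)\approx 2^{\pK^m(x,y)}$, making $p(m,k)$ exponential. Your fallback dichotomy is on the wrong quantity and its low-mass branch is unsound: whether $(x,y)\in B$ is governed by its $\mathcal{E}_m$-mass, not its $\D$-mass, so a pair with large $\D_n(x\mid y)$ (where the target inequality is not trivial) can still land in $B$; low $\D_n$-mass does not imply large computational depth (a pair with $\D_n(x,y)=2^{-n}$ can have $\cd^t(x,y)=0$); and since $\D$ is only assumed \emph{computable}, its conditional CDF need not be computable in time $(2^{\alpha}t)^c$ for any fixed $c$, so the ``direct encoding via arithmetic decoding of $\D$'' has no valid time bound. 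The paper closes exactly this gap by arguing the contrapositive with a sparsity-implies-small-$\K$ step: it sets $k:=2^{\alpha}t^d$, notes that the set $E$ of pairs violating the coding bound has $Q$-mass at most $1/k$, renormalizes $Q$ restricted to $E$ into a computable distribution $\mathcal{E}$ with $\mathcal{E}(x,y)=k\cdot Q(x,y)$, and applies the time-unbounded coding theorem (\Cref{t:coding}) to $\mathcal{E}$ to get $\K(x,y)\le \log\frac{1}{Q(x,y)}-\log k+O(\log t)\le \pK^{t}(x,y)-\log k+O(\log t)$, hence every pair in $E$ has $\cd^t(x,y)>\alpha$. That step is the idea your proposal lacks.

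A secondary problem is your detour through symmetry of information for $\pK^t$ to control $\pK^m(x\mid y)$: the average-case SoI you would derive from the hypothesis again holds only with high probability over a samplable distribution, so invoking it at a fixed worst-case pair requires precisely the average-to-worst-case transfer you are trying to establish (\Cref{l:soi-avg-to-worst} is itself proved by the depth argument above). The paper avoids the $\log\frac{1}{\D_n(y)}$ overshoot entirely by staying with conditional quantities throughout: it applies the \emph{conditional} coding hypothesis to $Q$ to get $\rK^{(tk)^d}(x\mid y)\le \log\frac{1}{Q(x\mid y)}+d\log(tk)$ outside a $1/(2k)$-mass set, then \Cref{l:incompressible} (applied conditionally) to get $\log\frac{1}{Q(x\mid y)}\le \K(x\mid y)+\log k+O(\log n)$ outside another $1/(2k)$-mass set, and finally the conditional coding theorem for $\K$ to get $\K(x\mid y)\le\log\frac{1}{\D_n(x\mid y)}+O(\log n)$ for all of $\supp(\D_n)$, so the marginal $\D_n(y)$ never enters.
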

\begin{proof}
	Fix $n,t\in\mathbb{N}$ such that $t\geq n$. Let $\{\mathcal{D}_n\}_{n \in\mathbb{N}}$ be any computable distribution family. Also, Let $\alpha$ be any integer and 	let $c>0$ be a constant to be specified later. It suffices to show that for any $(x,y)\in \bool^n \times \bool^n$, if $\mathsf{cd}^t(x,y)\leq \alpha$, then
	\begin{equation}\label{eq:worst-coding}
		\mathsf{rK}^{({2^{\alpha}\cdot t})^c}(x\mid y)\leq  \log \frac{1}{\D_n(x \mid y)} + c\cdot(\log t + \alpha).
	\end{equation}
	We will show the contrapositive. That is, if \Cref{eq:worst-coding} is false, then $\mathsf{cd}^t(x,y)> \alpha$.
	
	Let $d>0$ be a sufficiently large constant, and let
	\[
	k\vcentcolon= 2^{\alpha}\cdot t^d.
	\]
	
	We defined the following polynomial-time samplable distribution family $\{Q_{\langle n,t\rangle}\}_{n,t\in\mathbb{N}}$, where each $Q_{\langle n,t\rangle}$ does the following.
	\begin{enumerate}
		\item Pick $s\sim [2n]$.
		\item Pick $r\sim\bool^{t}$.
		\item Pick $\Pi \sim \bool^{s}$.
		\item Run $U(\Pi,r)$ for $t$ steps. If we obtain a pair $(x,y)\in\bool^{n}\times \bool^{n}$, output $(x,y)$. Otherwise output $(0^n,0^n)$\footnote{Here, we let $Q_{\langle n,t\rangle}$ output pairs of strings in $\bool^n\times \bool^n$. By using padding, we can also ensure that $Q_{\langle n,t\rangle}$ outputs pairs of strings in $\bool^{\langle n,t\rangle}\times \bool^{\langle n,t\rangle}$. This will not affect the correctness of the argument. We omit this technicality for simplicity of presentation.}.
	\end{enumerate}
	It is easy to see from the definition of $\pK^t$ that for every $(x,y)\in\bool^n\times\bool^n$,
	\begin{equation}\label{eq:universal-pK}
		Q_{\langle n,t\rangle}(x,y)\geq \frac{1}{O(n)}\cdot\frac{2}{3} \cdot \frac{1}{2^{\pK^{t}(x,y)}}.
	\end{equation}

	By applying \Cref{i:OWF-ae-Coding} of \Cref{t:OWF-ae} to $\{Q_{\langle n,t\rangle}\}$ and by letting $d$ be sufficiently large, we have
	\begin{equation}\label{eq:avg-coding}
		\Prob_{(x,y)\sim Q_{\langle n,t\rangle}} \mleft[\mathsf{rK}^{(tk)^d}(x\mid y)\leq \log \frac{1}{Q_{\langle n,t\rangle}(x\mid y)}+ d\cdot\log (tk) \mright] \geq 1-\frac{1}{2k}.
	\end{equation}
	Also, by \Cref{l:incompressible}, we have
	\begin{equation}\label{eq:K-large}
		\Prob_{(x,y)\sim Q_{\langle n,t\rangle}} \mleft[\K(x\mid y)> \log \frac{1}{Q_{\langle n,t\rangle}(x\mid y)}- \log k- O(\log n) \mright] \geq 1-\frac{1}{2k}.
	\end{equation}
	Moreover, by the coding theorem for (time-unbounded) Kolmogorov complexity (\Cref{t:coding}), we have that for every $(x,y)\in\supp(\D_n)$
	\begin{equation}\label{eq:K-coding}
		\K(x\mid y)\leq \log \frac{1}{\D_n(x\mid y)}+ O(\log n).
	\end{equation}
	By combining Equations (\ref{eq:avg-coding}), (\ref{eq:K-large}) and (\ref{eq:K-coding}), we get that 
	\[
	\Prob_{(x,y)\sim Q_{\langle n,t\rangle}} \mleft[\mathsf{rK}^{(tk)^d}(x\mid y)\leq \frac{1}{\D_n(x\mid y)}+ 2d\cdot\log (tk) \mright] \geq 1-\frac{1}{k}.
	\]
	
	Now, consider the set $E$ of $(x,y)$ such that
	\[
	\mathsf{rK}^{(tk)^d}(x\mid y)\leq \frac{1}{\D_n(x\mid y)}+ 2d\cdot\log (tk).
	\] 
	Note that by letting $c>d$ be a sufficiently large constant, for any $(x,y)$ such that \Cref{eq:worst-coding} is false, we get that $(x,y)\in E$. Therefore, it suffices to show that for every $(x,y)\in E$, we have $\mathsf{cd}^t(x,y)> \alpha$.
	
	First of all, we have
	\[
	\sum_{(x,y)\in E} Q_{\langle n,t\rangle}(x,y) \leq \frac{1}{k},
	\]
	which implies
	\[
	\sum_{(x,y)\in E} Q_{\langle n,t\rangle}(x,y)\cdot k\leq 1.
	\]
	We can then define a distribution $\mathcal{E}$ whose support is $E$ and $\mathcal{E}(x,y)=Q_{\langle n,t\rangle}(x,y)\cdot k$. Note that $\mathcal{E}$ is computable since $E$ is decidable.
	
	Applying the coding theorem (\Cref{t:coding}) on $\mathcal{E}$, we get that for every $(x,y)\in E$,
	\begin{align*}\label{eq:universal-depth}
		\K(x,y) &\leq \log \frac{1}{\mathcal{E}(x,y)} + O(\log t)\\
		&
		= \log \frac{1}{Q_{\langle n,t\rangle}(x,y)\cdot k} + O(\log t).\numberthis
	\end{align*}
	
	Finally, we get that for every $(x,y)\in E$,
	\begin{align*}
		\pK^{t}(x,y)-\K(x,y)&\geq 	\log \frac{1}{Q_{\langle n,t\rangle}(x,y)} -\K(x,y) -O(\log n)\tag{by \Cref{eq:universal-pK}}\\
		&\geq \log k -O(\log t)\tag{by \Cref{eq:universal-depth}}\\ 
		&> \alpha,
	\end{align*}
	as desired.
\end{proof}

\begin{lemma}\label{l:coding-worst-to-avg}
	We have \emph{(}\Cref{i:OWF-ae-Coding-cd} $\implies$ \Cref{i:OWF-ae-Coding}\emph{)} in \Cref{t:OWF-ae}.
\end{lemma}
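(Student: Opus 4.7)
The plan is to apply \Cref{i:OWF-ae-Coding-cd} directly to the given samplable distribution and then control the typical computational depth via \Cref{lemma:random string is computationally shallow}. First I would observe that every polynomial-time samplable distribution family $\D=\{\D_n\}$ is in particular \emph{computable} in the sense required by \Cref{i:OWF-ae-Coding-cd}: since the sampler runs in time $\poly(n)$ on $\poly(n)$ random bits, one can compute $\D_n(x,y)$ exactly by enumerating all $2^{\poly(n)}$ random strings and counting pre-images. Hence \Cref{i:OWF-ae-Coding-cd} applies to $\D$ with some universal constant $c>0$.

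Next, viewing each $\D_n$ as a distribution over $\bool^{2n}$ via concatenation, I would invoke \Cref{lemma:random string is computationally shallow} to obtain a polynomial $\rho$ such that for every $t\ge \rho(n)$ and every $\alpha\in\N$,
\[
\Prob_{(x,y)\sim \D_n}\mleft[\cd^t(x,y) > \alpha\mright] \leq 2^{-\alpha + O(\log n)}.
\]
Choosing $\alpha_0 := \log k + O(\log n)$ with a large enough hidden constant makes this probability at most $1/k$. So with probability at least $1-1/k$ over $(x,y)\sim \D_n$, the computational depth satisfies $\cd^t(x,y) \le \alpha_0$.

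Finally, fixing $t := \max(\rho(n), n)$ and applying \Cref{i:OWF-ae-Coding-cd} to any $(x,y) \in \supp(\D_n)$ with $\alpha := \cd^t(x,y) \leq \alpha_0$ gives
\[
\rK^{(2^{\alpha}\cdot t)^c}(x\mid y)\leq \log \frac{1}{\D_n(x\mid y)} + c\cdot(\log t + \alpha).
\]
Since $2^{\alpha_0}\cdot t = \poly(n,k)$ and $c(\log t + \alpha_0) = O(\log n + \log k)$, choosing $p(n,k)$ to be a sufficiently large polynomial gives $\rK^{p(n,k)}(x\mid y)\leq \log 1/\D_n(x\mid y) + \log p(n,k)$ for this typical set of $(x,y)$, establishing \Cref{i:OWF-ae-Coding}.

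I do not foresee a significant obstacle; the only small checkpoint is that \Cref{lemma:random string is computationally shallow}, stated for distributions over $\bool^n$, extends immediately to joint distributions over $\bool^n\times\bool^n$ via the natural identification with $\bool^{2n}$, since concatenation changes $\cd^t$ only by an additive $O(\log n)$ which is absorbed into $\alpha_0$. The entire argument is essentially a one-line application of the worst-case depth-parametrized coding to a typical sample.
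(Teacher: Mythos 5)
Your proposal is correct and follows essentially the same route as the paper: bound the computational depth of a typical sample by $O(\log nk)$ via \Cref{lemma:random string is computationally shallow}, then apply the depth-parametrized worst-case coding of \Cref{i:OWF-ae-Coding-cd}, absorbing $2^{O(\log nk)}\cdot t$ and $c(\log t+\alpha)$ into a polynomial $p(n,k)$. The extra checkpoints you flag (samplable implies computable by enumerating the sampler's randomness, and handling pairs as concatenated strings) are correct and are left implicit in the paper.
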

\begin{proof}
	Fix $n\in\mathbb{N}$ and any polynomial-time samplable distribution samplable distribution family $\{\mathcal{D}_n\}$.

	By \Cref{lemma:random string is computationally shallow}, there exists a polynomial $\rho$ such that
	\[
	\Prob_{(x,y) \sim \mathcal{D}_n} \mleft[ \mathsf{cd}^{\rho(n)}(x,y) \leq O(\log nk) \mright] \geq 1-\frac{1}{k}.
	\]
	Also, by the assumption that \Cref{i:OWF-ae-Coding-cd} in \Cref{t:OWF-ae} is true, there exists a constant $c>0$ such that for $t\vcentcolon=\rho(n)$ and all $(x,y)\in\supp(\D_n)$
	\[
	\mathsf{rK}^{({2^{\mathsf{cd}^t(x,y)}\cdot t})^c}(x\mid y)\leq \log \frac{1}{\D_n(x \mid y)} + c\cdot(\log t + \mathsf{cd}^t(x,y)).
	\]

	It follows that with probability at least $1-1/k$ over $(x,y)\sim\D_n$, 
	\[
	\mathsf{rK}^{(nk\rho(n))^{O(c)}}(x\mid y)\leq \log \frac{1}{\D_n(x \mid y)} + c\cdot(\log \rho(n) + O(\log nk)),
	\]
	which implies 
	\[
	\mathsf{rK}^{p(n,k)}(x\mid y)\leq \log \frac{1}{\D_n(x\mid y)}+ \log p(n,k),
	\]
	where $p$ is a polynomial.
\end{proof}

\begin{lemma}\label{l:soi-avg-to-worst}
	We have \emph{(}\Cref{i:OWF-ae-SoI} $\implies$ \Cref{i:OWF-ae-SoI-cd}\emph{)} in \Cref{t:OWF-ae}.
\end{lemma}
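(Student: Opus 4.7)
My plan is to argue by contrapositive, mirroring the structure of \Cref{l:coding-avg-to-worst}. Fix $n, t \in \N$ with $t \ge n$ and $(x, y) \in \{0,1\}^n \times \{0,1\}^n$; set $\alpha := \cd^t(x, y)$, and suppose toward contradiction that
\[
\rK^{(2^\alpha t)^c}(x \mid y) > \rK^t(x, y) - \rK^t(y) + c(\log t + \alpha)
\]
for the constant $c$ to be determined. The goal is then to derive $\cd^t(x,y) > \alpha$, the desired contradiction.

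The construction re-uses the universal polynomial-time samplable distribution $\{Q_{\langle n, t\rangle}\}$ from the proof of \Cref{l:coding-avg-to-worst}, which satisfies $Q_{\langle n, t\rangle}(x', y') \ge \tfrac{1}{O(n)} \cdot 2^{-\pK^t(x', y')}$. I would set $k := 2^\alpha t^d$ for a large constant $d$, and let $t^* := p(\langle n, t\rangle, k)$ be the polynomial time bound from \Cref{i:OWF-ae-SoI} applied to $\{Q_{\langle n, t\rangle}\}$. For $c$ chosen large enough in terms of $d$ and $p$, we have $t \le t^* \le (2^\alpha t)^c$. Applying \Cref{i:OWF-ae-SoI} at time $t^*$, the failing set
\[
E := \set{(x', y') : \rK^{t^*}(x' \mid y') > \rK^{t^*}(x', y') - \rK^{t^*}(y') + \log t^*}
\]
satisfies $Q_{\langle n, t\rangle}(E) \le 1/k$. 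Then, exactly as in \Cref{l:coding-avg-to-worst}, the computable sub-distribution $\mathcal{E}(\cdot) := k \cdot Q_{\langle n, t\rangle}(\cdot)|_E$ combined with the coding theorem (\Cref{t:coding}) and the bound $\pK^t(x', y') \le \log 1/Q_{\langle n, t\rangle}(x', y') + O(\log n)$ yields $\cd^t(x', y') \ge \log k - O(\log t) = \alpha + d \log t - O(\log t) > \alpha$ for every $(x', y') \in E$ (we use $\alpha \le n \le t$ to keep the description of $\mathcal{E}$ at $O(\log t)$ bits, and pick $d$ large). Hence it suffices to verify that the bad $(x, y)$ itself lies in $E$.

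By monotonicity of $\rK$ in its time parameter, the assumed violation yields $\rK^{t^*}(x \mid y) \ge \rK^{(2^\alpha t)^c}(x \mid y) > \rK^t(x, y) - \rK^t(y) + c(\log t + \alpha)$. Since $\rK^{t^*}(x, y) \le \rK^t(x, y)$ and $\log t^* \le c'(\log t + \alpha)$ for some $c' < c$, the membership of $(x, y)$ in $E$ reduces to the slack bound $\rK^t(y) - \rK^{t^*}(y) \le (c - c')(\log t + \alpha)$. The hard part will be controlling this slack uniformly in $y$: the trivial estimate $\rK^t(y) - \rK^{t^*}(y) \le \rK^t(y) \le n + O(1)$ closes the argument only in the regime where $(c - c')(\log t + \alpha)$ already dominates $n$. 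I plan to handle the remaining regime by refining $Q_{\langle n, t\rangle}$ so that its marginal on $y$ additionally carries mass $\ge \tfrac{1}{O(n)} \cdot 2^{-\pK^{t^*}(y)}$, thereby forcing those exceptional $y$'s for which $\rK^t(y)$ collapses substantially when the time budget is relaxed to $t^*$ into a companion failing set controllable by \Cref{t:coding} and \Cref{lemma:random string is computationally shallow}. With this modification absorbing the pathological $y$'s, the rest of the argument closes exactly as in \Cref{l:coding-avg-to-worst}.
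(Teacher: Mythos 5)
Your skeleton is exactly the adaptation the paper intends: argue the contrapositive, apply the average-case symmetry of information to the universal distribution $Q_{\langle n,t\rangle}$ with $k = 2^\alpha t^d$, and run the computable-subdistribution/coding-theorem argument of \Cref{l:coding-avg-to-worst} to conclude that every pair in the failing set has $\cd^t > \alpha$. You have also correctly isolated the point where the adaptation is \emph{not} routine: in \Cref{l:coding-avg-to-worst} the right-hand side $\log\frac{1}{\D_n(x\mid y)}$ carries no time bound, so failure of the worst-case inequality (whose left-hand side has the \emph{larger} time bound) immediately places $(x,y)$ in the failing set; here the right-hand side is $\rK^{t}(x,y)-\rK^{t}(y)$ at the \emph{original} time $t$, and while $\rK^{t}(x,y)\ge\rK^{t^*}(x,y)$ goes the right way, the negated term forces you to bound $\rK^{t}(y)-\rK^{t^*}(y)$, which is nonnegative and a priori as large as $n$. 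The paper's two-sentence sketch does not address this, so flagging it is to your credit.

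However, your proposed resolution of that step does not close the gap. The quantity you must control is $\rK^{t}(y)-\rK^{t^*}(y)\le\rK^{t}(y)-\K(y)$, and this is governed by the gap between $\rK^{t}$ and $\pK^{t}$ \emph{at the original time bound} $t$: writing $\rK^t(y)-\K(y)=\bigl(\rK^t(y)-\pK^t(y)\bigr)+\cd^t(y)$, the tools you invoke --- \Cref{t:coding}, \Cref{lemma:random string is computationally shallow}, and forcing $Q^{(2)}(y)\gtrsim 2^{-\pK^{t^*}(y)}$ --- all control $\K$ versus universal probability, i.e.\ the $\cd$ part, and say nothing about $\rK^t-\pK^t$. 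Every coding statement available in the paper (e.g.\ \Cref{i:OWF-ae-Coding-cd}, \Cref{optimal coding-ae}) upper-bounds $\rK$ only at an \emph{inflated} time bound of the form $\poly(2^{\alpha}t)$, never at time $t$ itself, so "exceptional $y$'s for which $\rK^t(y)$ collapses when the budget is relaxed to $t^*$" cannot be shown to form a small-mass set by these lemmas; indeed a single $y$ with $\pK^t(y)=O(\log n)$ but $\rK^t(y)=n$ is heavy under $Q^{(2)}$ yet maximally exceptional. A secondary problem is that even if exceptional $y$'s could be confined to a companion set of small $Q^{(2)}$-mass, the depth you would extract is (essentially) $\cd^t(y)$, whereas the hypothesis to be contradicted concerns $\cd^t(x,y)$, and $\cd^t(x,y)\ge\cd^{t+O(n)}(y)-\K(x\mid y)-O(\log n)$ is the best generic transfer, which is vacuous when $\K(x\mid y)$ is large. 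So the "remaining regime" is exactly where your argument, as written, stops being a proof; some additional idea that ties the $\rK^t$-versus-$\K$ gap of $y$ to the quantity $\rK^t(x,y)-\rK^t(y)$ (rather than to a computational-depth bound) is still needed.
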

\begin{proof}[Proof Sketch]
	The proof can be easily adapted from that of \Cref{l:coding-avg-to-worst}. The idea is to apply average-case symmetry of information (\Cref{i:OWF-ae-SoI} in in \Cref{t:OWF-ae}) to the distribution family $\{Q_{\langle n,t\rangle}\}_{n,t\in\mathbb{N}}$ as defined in the proof of \Cref{l:coding-avg-to-worst}. This allows us to say that the set of pairs of strings where symmetry of information fails must have large computational depth. We omit the details here.
\end{proof}

\begin{lemma}\label{l:soi-worst-to-avg}
	We have \emph{(}\Cref{i:OWF-ae-SoI-cd} $\implies$ \Cref{i:OWF-ae-SoI}\emph{)} in \Cref{t:OWF-ae}.
\end{lemma}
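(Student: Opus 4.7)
The plan is to proceed indirectly: I will show \Cref{i:OWF-ae-SoI-cd}$\implies$\Cref{i:OWF-ae-noOWF} and then conclude \Cref{i:OWF-ae-SoI} via the equivalence \Cref{i:OWF-ae-noOWF}$\iff$\Cref{i:OWF-ae-SoI} that has already been established at the top of the proof of \Cref{t:OWF-ae}. A direct mimicry of \Cref{l:coding-worst-to-avg} is inconvenient here because \Cref{i:OWF-ae-SoI-cd} only controls $\rK^{(2^{\alpha}t)^c}(x\mid y)$, whereas \Cref{i:OWF-ae-SoI} requires a bound on $\rK^{t}(x\mid y)$ with the same $t$ appearing on both sides of the inequality; inverting $f$ in the style of \Cref{l:avg-soi-owf-io} sidesteps this time-matching issue entirely.

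Fix a candidate infinitely-often one-way function $f\colon\bool^n\to\bool^n$ and a polynomial $q$. The samplable family $\mathcal{E}_n:=\{(x,f(x))\colon x\sim \bool^n\}$ falls under \Cref{lemma:random string is computationally shallow}, yielding a polynomial $\rho$ with $\Prob_{x\sim \bool^n}\mleft[\mathsf{cd}^{\rho(n)}(x,f(x))\le O(\log q(n))\mright]\ge 1-1/(3q(n))$ for every $n$. Applying \Cref{i:OWF-ae-SoI-cd} at $t:=\rho(n)$ to such a shallow pair with $\alpha\le O(\log q(n))$ gives
\[
\rK^{p(n)}(x\mid f(x))\le \rK^{\rho(n)}(x,f(x))-\rK^{\rho(n)}(f(x))+O(\log p(n)),
\]
where $p(n):=(q(n)\rho(n))^{O(c)}$.

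Next, upper-bound the right-hand side mimicking \Cref{l:avg-soi-owf-io}. Since $f(x)$ is polynomial-time computable from $x$, $\rK^{\rho(n)}(x,f(x))\le \rK^{\rho(n)/2}(x)+O(\log n)$; by \Cref{f:K_small_than_rK} and \Cref{c:LW}, $\rK^{\rho(n)}(f(x))\ge \K(f(x))\ge \K(x)-\log|f^{-1}(f(x))|-O(\log n)$. For uniformly random $x\sim\bool^n$, the trivial program gives $\rK^{\rho(n)/2}(x)\le n+O(1)$ and a standard counting argument gives $\K(x)\ge n-O(\log q(n))$ except with probability $1/(3q(n))$. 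A union bound over the two failure events yields
\[
\Prob_{x\sim \bool^n}\mleft[\rK^{p(n)}(x\mid f(x))\le \log|f^{-1}(f(x))|+O(\log p(n))\mright]\ge 1-\frac{1}{q(n)}.
\]
Finally, the random-program-sampling argument from \Cref{l:avg-soi-owf-io}---choose a uniform length $s\in[2n]$ and a uniform randomized program $\Pi\in\bool^s$, then run $\Pi$ on $f(x)$ for $p(n)$ steps---inverts $f$ with non-negligible probability over random $x$ for every $n$, contradicting infinitely-often one-wayness of $f$. Hence \Cref{i:OWF-ae-noOWF} holds, and the earlier-established equivalence \Cref{i:OWF-ae-noOWF}$\iff$\Cref{i:OWF-ae-SoI} yields \Cref{i:OWF-ae-SoI}.

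The main obstacle is keeping the $2^{\alpha}$ factor in the time bound of \Cref{i:OWF-ae-SoI-cd} polynomial; this is exactly what \Cref{lemma:random string is computationally shallow} provides for samplable distributions, with the remaining bookkeeping simply mirroring \Cref{l:avg-soi-owf-io}.
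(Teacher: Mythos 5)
Your proposal is correct, but it takes a genuinely different route from the paper's. The paper's (sketched) proof of \Cref{l:soi-worst-to-avg} is a direct adaptation of \Cref{l:coding-worst-to-avg}: fix a samplable family, use \Cref{lemma:random string is computationally shallow} to get $\mathsf{cd}^{\rho(n)}(x,y)\le O(\log nk)$ with probability $1-1/k$, and feed this into \Cref{i:OWF-ae-SoI-cd} so that the $2^{\alpha}$ blow-up in the time bound becomes polynomial and the additive $c\alpha$ term becomes logarithmic. You instead prove \Cref{i:OWF-ae-SoI-cd}$\implies$\Cref{i:OWF-ae-noOWF} by transplanting the inverter of \Cref{l:avg-soi-owf-io}, with the worst-case depth-based inequality (applied at $t=\rho(n)$ to the shallow pairs $(x,f(x))$) standing in for \Cref{c:upper}, and then close the loop via the equivalence \Cref{i:OWF-ae-noOWF}$\iff$\Cref{i:OWF-ae-SoI}, which is established before \Cref{l:soi-worst-to-avg} is invoked, so there is no circularity. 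Your stated motivation for the detour is a fair criticism of the paper's sketch: in the direct route one obtains an upper bound on $\rK^{(2^{\alpha}t)^{c}}(x\mid y)$ in terms of $\rK^{t}(x,y)-\rK^{t}(y)$ with a smaller time bound $t$ on the right, and converting $\rK^{t}(x,y)$ into $\rK^{t'}(x,y)$ for the larger $t'$ appearing on the left goes the wrong way under monotonicity; repairing this requires an additional coding-type step (as \Cref{l:avg-coding-soi-io} does by routing the joint term through $\K(x,y)\le\rK^{t'}(x,y)$), whereas in your OWF-inversion argument the analogous term is trivially bounded by $\rK^{\rho(n)/2}(x)+O(\log n)\le n+O(\log n)$ for uniform $x$. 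The trade-off is that your argument is non-local (it leans on the already-proved equivalence of \Cref{i:OWF-ae-noOWF} and \Cref{i:OWF-ae-SoI}), while the paper's intended route keeps the implication self-contained among the depth-based items; all of your individual steps (shallowness of $(x,f(x))$, \Cref{f:K_small_than_rK}, \Cref{c:LW}, the counting bound on $\K(x)$, and the random-program inverter succeeding for every $n$, as needed to rule out infinitely-often one-wayness) check out.
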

\begin{proof}[Proof Sketch]
	The proof is essentially the same as that of \Cref{l:coding-worst-to-avg}, using the fact that a string drawn from any polynomial-time samplable distribution has small computational depth (\Cref{lemma:random string is computationally shallow}). We omit the details here.
\end{proof}
\section{One-Way Functions and Average-Case Hardness of \texorpdfstring{$\rK^\poly$}{PDFstring}}
In this section, we prove \Cref{t:owf-rKt-formal}, which is restated below.

\OWFrK*

\subsection{Technical Lemmas}

\begin{lemma}[\cite{LiuP20_focs_conf}]\label{l:liu-pass}
	If $\mathsf{MINrKT}$ is easy on average over the uniform distribution \emph{(}i.e., \Cref{i:owf-rKt-4} in \Cref{t:owf-rKt-formal} holds\emph{)}, then infinitely-often one-way functions do not exist.
\end{lemma}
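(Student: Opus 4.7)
The plan is to follow the approach of Liu and Pass~\cite{LiuP20_focs_conf}, who proved the analogous implication for Levin's $\mathsf{Kt}$ complexity; the same strategy adapts to $\mathsf{rK}^{\poly}$ with only cosmetic changes. Assume \Cref{i:owf-rKt-4} holds via a PPT algorithm $A$ and polynomial $\rho$, and suppose for contradiction that $f : \{0,1\}^n \to \{0,1\}^n$ is an infinitely-often one-way function. The goal is to derive from $A$ a PPT inverter for $f$ that succeeds with inverse-polynomial probability on uniform inputs at infinitely many lengths, contradicting the assumption that $f$ is infinitely-often one-way.

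The argument rests on two observations about the $\rK^{\rho}$-complexity of random strings. First, a direct counting argument gives $\rK^{\rho(n)}(y) \ge n - O(\log(nk))$ with probability at least $1 - 1/k$ over uniform $y \in \{0,1\}^n$; together with \Cref{i:owf-rKt-4}, this implies that $A$ correctly outputs $\mathsf{NO}$ on a $(1-1/k)$-fraction of uniform inputs at any threshold $s < n - O(\log(nk))$. Second, from the assumed OWF $f$, the HILL construction yields a pseudorandom generator $G : \{0,1\}^n \to \{0,1\}^{2n}$, and every image string $y = G(r)$ satisfies $\rK^{p(n)}(y) \le n + O(\log n)$ via the randomized program that runs $G$ on its $n$-bit random seed, followed by success amplification (\Cref{l:success_amp}). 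Instantiating both bounds at length $2n$ with threshold $s = 1.5n$, the PPT algorithm $B(y) := A(y, 1^{1.5n}, 1^{\rho(2n)}, 1^{\ell}, 1^{k})$ outputs $\mathsf{YES}$ with probability at most $1/k + \negl(n)$ on $U_{2n}$, while the correct answer on every $y \in G(\{0,1\}^n)$ is $\mathsf{YES}$; if $A$ were also correct on $G(U_n)$, then $B$ would distinguish $U_{2n}$ from $G(U_n)$ with advantage close to $1 - 2/k$, contradicting the pseudorandomness of $G$ and hence the security of $f$.

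The main obstacle is that the hypothesis only guarantees $A$'s correctness on the uniform distribution: pseudorandomness of $G$ propagates $A$'s \emph{output distribution} from $U_{2n}$ to $G(U_n)$ but not its correctness, so a priori $A$ could be systematically wrong on pseudorandom inputs without contradiction. Liu and Pass circumvent this by converting $A$ into an average-case inverter for the canonical function $U_t(1^i, \Pi, r) := (1^i, \Pi(r))$ via binary search on $s$ together with a bit-by-bit search-to-decision reduction, where each sub-query is posed on a string drawn from the uniform distribution (where $A$'s guarantee applies). The resulting $U_t$-inverter is then composed with $f$: because the distribution $f(U_n)$ is polynomially dominated by the output distribution of $U_t$ (a coding-theorem-style domination, using that a random short randomized program outputs $f(x)$ with probability at least $2^{-O(n)}$ on random $x$), the $U_t$-inverter yields a preimage of $f(x)$ on a noticeable fraction of uniform $x$, giving the desired contradiction.
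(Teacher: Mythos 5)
The paper gives no proof of this lemma; it defers entirely to \cite{LiuP20_focs_conf}, so what has to be checked is whether your reconstruction of the Liu--Pass argument is sound. Your setup of the distinguisher framing is fine, and you correctly put your finger on the real difficulty: the hypothesis only guarantees $A$'s correctness over the \emph{uniform} distribution, and pseudorandomness transfers $A$'s output distribution, not its correctness, to $G(U_n)$. The problem is that the fix you then attribute to Liu and Pass is not their fix, and it does not work. An average-case inverter for the universal computation $U_t(1^i,\Pi,r)=(1^i,\Pi(r))$ is exactly the machinery of the \emph{converse} direction (``no OWF $\Rightarrow$ $\mathsf{MINrKT}$ easy on average,'' which is \Cref{l:LP-OWF-cond} in this paper), and it cannot be ``composed with $f$'': inverting $U_t$ on $y=f(x)$ produces a short (randomized) program printing $y$, which exists trivially and carries no information about $f^{-1}(y)$, so domination of $f(U_n)$ by the universal distribution buys you nothing. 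The domination is also used in the wrong direction for your purposes: the universal distribution dominates the uniform one, so an average-case guarantee over uniform inputs does not transfer to it. Finally, the ``bit-by-bit search-to-decision reduction whose sub-queries are uniform'' that you invoke is not available off the shelf in the error-prone setting --- designing such a reduction for $\rK^{\poly}$ is precisely the open problem of \cite{HiraharaKLO24_coco_conf} that \Cref{t:owf-rKt-formal} of this paper resolves, and its resolution itself assumes the non-existence of one-way functions.

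The actual Liu--Pass resolution of the obstacle you identified is a \emph{conditionally entropy-preserving} PRG: from a one-way function one builds $G\colon\binset^n\to\binset^{m}$ with $m=n+O(\log n)$ whose output distribution has Shannon entropy at least $n-O(\log n)$ (conditioned on a suitable event). Because the set on which $A$ errs has uniform density $1/k$, hence size at most $2^{m}/k$, a distribution on $\binset^m$ with entropy deficiency only $O(\log n)$ can place at most $O(\log n/\log k)=o(1)$ mass on it (taking $k=n^{C}$ for large $C$); so $A$ \emph{is} correct on most PRG outputs after all, and the distinguishing argument goes through with a threshold $s$ placed between $n+O(\log n)$ and $m-O(\log n)$. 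Your version with a length-doubling HILL generator cannot be repaired this way, since its output distribution has entropy deficiency $n$, and the entropy argument gives nothing. The remaining adaptations to $\rK^{\poly}$ (counting NO instances via the fact that a single randomized program is an $\rK^t_{2/3-1/\ell}$-witness of at most $O(1)$ strings, and upper-bounding $\rK^{\poly}$ of PRG outputs by $n+O(\log\log)$ via \Cref{l:success_amp}) are indeed cosmetic, as you say, but the central tool is missing from your writeup.
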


The following is a key lemma for proving \Cref{t:owf-rKt-formal}.
\begin{lemma}\label{l:LP-OWF-cond}
	If infinitely-often one-way functions do not exist, then for every $\lambda\in[0,1)$, there exists a probabilistic polynomial-time algorithm $A$ such that for all $n,t,\ell,k\in\mathbb{N}$ with $t\geq n^{1.01}$, with probability at least $1-1/k$ over the internal randomness of $A$, 
	\begin{equation}\label{eq:uni-dis-cond}
		\sum_{x\in\bool^n} 2^{-\rK^t_{\lambda}(x)} \cdot 1_{\mleft[\text{$A(x,1^t,1^\ell,1^k; r_{_A})\not\in \lambda\text{-}\mathsf{Search}$-$\mathsf{MINrKT}(x,1^t,1^{\ell})$}\mright]} \leq \frac{\poly(n)}{k}.
	\end{equation}
\end{lemma}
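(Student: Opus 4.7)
The plan is to adapt the Liu--Pass approach \cite{LiuP20_focs_conf} from $\K^\poly$ to $\rK^t_\lambda$, following the sketch in \Cref{subsection:search-to-decision}. First I would design a candidate one-way function $f=f_{n,t,\ell,k}$ that, on input $(i,\Pi,r_0,r_1)$ with $i\in[n+c]$, $\Pi\in\bool^i$, $r_0\in\bool^t$, and $r_1$ an auxiliary string of $\poly(n,t,\ell,k)$ bits, proceeds as follows: run $U(\Pi,r_0)$ for $t$ steps to obtain $x\in\bool^n$; using $r_1$, compute an empirical Chernoff estimate of $p_\Pi(x):=\Prob_{r}[U(\Pi,r)\text{ outputs }x\text{ within }t\text{ steps}]$ to within additive error $1/(3\ell)$ with failure probability $\le 2^{-n}$; if the estimate is $\ge\lambda-1/(2\ell)$, output $(1^i,x)$, otherwise output $\bot$. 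This $f$ is computable in time $\poly(n,t,\ell,k)$ and, by padding $r_1$, its input length can be made larger than any fixed polynomial in $(n,t,\ell,k)$.

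Second, assuming that infinitely-often one-way functions do not exist, there is a PPT inverter $B$ for $f$ such that for every polynomial $q$, on uniformly random inputs of $f$, $B$ returns a preimage with probability $\ge 1-1/q$ for all sufficiently large input sizes (finitely many exceptional parameter tuples are absorbed by hard-coded lookup). The algorithm $A(x,1^t,1^\ell,1^k)$ would then enumerate $s=0,1,\ldots,n+c$, call $B$ on $(1^s,x)$ to recover a tentative preimage $(s,\Pi,r_0,r_1)$, verify $f(s,\Pi,r_0,r_1)=(1^s,x)$, and return the program $\Pi$ of smallest successful $s$. A successful inversion certifies that the estimator accepted $\Pi$, so by the $2^{-n}$ failure bound the true success probability of $\Pi$ is $\ge\lambda-1/\ell$ on all but a negligible portion of preimages, making $\Pi$ a valid $(1/\ell)$-$\rK^t_\lambda$-witness of $x$ of length $s\le \rK^t_\lambda(x)$.

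Third, for the probabilistic analysis I would show that the output distribution $\D_f$ of $f$ on uniformly random inputs dominates $Q^t_\lambda(x):=2^{-\rK^t_\lambda(x)}$ in the sense that $\D_f(1^{s^*(x)},x)\ge Q^t_\lambda(x)/\poly(n,\ell)$, where $s^*(x):=\rK^t_\lambda(x)$: the canonical $\rK^t_\lambda$-witness $\Pi^*_x$ is picked with probability $\frac{1}{n+c}\cdot 2^{-s^*(x)}$, succeeds on $\ge\lambda$ fraction of $r_0$, and passes the estimator on all but a $2^{-n}$ fraction of $r_1$. Since $B$'s expected failure probability on $\D_f$-random outputs is $\le 1/q$, Markov's inequality gives that with probability $\ge 1-1/k$ over $A$'s randomness the $\D_f$-mass of outputs on which $B$ errs is $\le k/q$, so the $Q^t_\lambda$-mass of bad $x$ is at most $k\cdot\poly(n,\ell)/q$. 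Choosing $q=k^2\cdot\poly(n,\ell)$ (a polynomial bound allowed by the non-OWF assumption) yields the required $\poly(n)/k$ bound in \Cref{eq:uni-dis-cond}.

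The main obstacle I anticipate is the faithful translation of ``inverting $f$ on a given output'' into ``obtaining a genuine $(1/\ell)$-$\rK^t_\lambda$-witness'': because $\Pi$ is randomized, an inverted preimage only witnesses $\Pi$'s behaviour on the specific $r_0,r_1$ found, not its average behaviour. This is handled by making $r_1$ long enough that the Chernoff estimator's failure probability is $2^{-n}$, so the measure of ``spurious'' preimages --- those where $\Pi$ has true success probability below $\lambda-1/\ell$ but the estimator accepts --- is a negligible fraction of each fiber $f^{-1}(1^s,x)$ and can be absorbed into the final $\poly(n)/k$ error. A secondary technicality is that one must not double-count randomness: the verification randomness $r_1$ must be part of the input to $f$ (so that successful inversion binds its outcome), not fresh randomness used by $A$.
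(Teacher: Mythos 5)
There is a genuine gap, and it sits exactly at the point you flagged as the "main obstacle." Your construction places the estimator randomness $r_1$ in the \emph{input} of $f$ but not in its \emph{output}: $f(i,\Pi,r_0,r_1)=(1^i,x)$. Consequently, when the inverter is asked to invert $(1^s,x)$, it is free to return \emph{any} preimage $(s,\Pi,r_0,r_1)$, including one in which $r_1$ happens to be one of the (at most $2^{-n}$ fraction of) strings on which the Chernoff estimator wrongly accepts a program $\Pi$ whose true success probability is below $\lambda-1/\ell$. The guarantee coming from the non-existence of one-way functions is only that the inverter returns \emph{some} preimage with high probability over a random output; it says nothing about which preimage. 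Since $|r_1|\gg n$, a $2^{-n}$ fraction of $r_1$'s is still an enormous set, so essentially every fiber $f^{-1}(1^s,x)$ --- including fibers at levels $s<\rK^t_{\lambda-1/\ell}(x)$, which contain \emph{only} spurious preimages --- is nonempty, and your algorithm $A$ (which takes the smallest successful $s$) can be fed a program that is shorter than $\rK^t_\lambda(x)$ but has success probability below $\lambda-1/\ell$, i.e.\ not a valid witness. Your claim that the spurious preimages are "a negligible fraction of each fiber" is not justified: negligible mass in the \emph{input space} does not translate into negligible mass in each \emph{fiber} (and even within a fiber at level $s=\rK^t_\lambda(x)$, the spurious mass $\approx 2^{-n}$ is comparable to the genuine mass $\approx 2^{-s}$ when $s$ is close to $n$). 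Your "secondary technicality" draws the opposite of the correct conclusion: putting $r_1$ only in the input is what \emph{lets} the inverter choose it adversarially.

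The paper's proof closes this hole by two coupled modifications. First, the verifier $V$ of Proposition~\ref{l:valid} is made correct \emph{simultaneously for all} pairs $(\Pi,x)$ with probability $1-2^{-k}$ over its coins $r_{_V}$ (per-pair error $2^{-n^2}\cdot 2^{-k}$ followed by a union bound over all $\Pi\in\bool^{\le 2n}$ and $x\in\bool^n$); your per-pair error of $2^{-n}$ is too weak to support such a union bound. Second, and crucially, $r_{_V}$ is included in the \emph{output} of $f$: the function maps $(r_{_V},i,\Pi,r_t,\dots)$ to $(r_{_V},i,x,\dots)$, and the algorithm $A$ samples $r_{_V}$ itself and queries the inverter on $(r_{_V},i,x,\dots)$. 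Any preimage the inverter returns must then be consistent with the specific good $r_{_V}$ that $A$ chose, and for a good $r_{_V}$ there are no spurious preimages at all: every accepted $\Pi_{[i]}$ genuinely has success probability at least $\lambda-1/\ell$. The remainder of your argument (the domination $\D_i(x)\ge\tfrac{2}{3}\,2^{-\rK^t_\lambda(x)}$, the averaging over the inverter's and verifier's randomness, and the final Markov-style accounting) matches the paper's, but it only becomes sound once the verification coins are made public in this way.
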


To show \Cref{l:LP-OWF-cond}, we will need the following simple proposition.

\begin{proposition}\label{l:valid}
	For every $\lambda \in [0,1)$, there is a probabilistic polynomial-time algorithm $V$ such that for all $n,t,\ell,k\in\mathbb{N}$, the following holds with probability at least $1-2^{k}$ over the interval randomness of $V$. For every randomized program $\Pi\in\bool^{\leq 2n}$ and $x\in\bool^n$,
	\begin{enumerate}
		\item if within $t$ steps, $\Pi$ outputs $x$ with probability at least $\lambda$, then $V(x,\Pi,1^t,1^{\ell},1^k)$ accepts, and  
		\item if within $t$ steps, $\Pi$ outputs $x$ with probability less than $\lambda-1/\ell$, then $V(x,\Pi,1^t,1^{\ell},1^k)$ rejects.
	\end{enumerate}
\end{proposition}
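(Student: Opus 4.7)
The plan is to let $V$ empirically estimate the success probability $p_{\Pi,x} := \Prob_r[U(\Pi,r)\text{ outputs }x\text{ within }t\text{ steps}]$ and threshold it halfway between $\lambda$ and $\lambda - 1/\ell$. Concretely, on input $(x,\Pi,1^t,1^\ell,1^k)$, the algorithm $V$ draws $m := C\ell^2(n+k)$ independent random strings $r_1,\dots,r_m \in \bool^t$ (for a sufficiently large absolute constant $C$), simulates $U(\Pi,r_i)$ for $t$ steps for each $i$, sets $\hat p$ to the fraction of simulations that output $x$, and accepts if and only if $\hat p \ge \lambda - 1/(2\ell)$. Since $m$ and $t$ are both polynomial in the input length, $V$ runs in randomized polynomial time.

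The correctness argument I would give is a standard Chernoff--Hoeffding estimate followed by a union bound over $(\Pi,x)$. For any fixed pair, Hoeffding yields $\Prob[|\hat p - p_{\Pi,x}| > 1/(2\ell)] \le 2\exp(-m/(2\ell^2))$, which by the choice of $m$ is at most $2^{-k-3n-2}$. There are only $|\bool^{\le 2n}| \cdot 2^n \le 2^{3n+O(1)}$ pairs $(\Pi,x)$ under consideration, so a single union bound shows that, with probability at least $1-2^{-k}$ over the internal randomness of $V$, the estimate $\hat p$ is within $1/(2\ell)$ of $p_{\Pi,x}$ \emph{simultaneously} for every pair $(\Pi,x)$ in this range. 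Conditioned on this good event, the two required cases are immediate: if $p_{\Pi,x} \ge \lambda$, then $\hat p \ge \lambda - 1/(2\ell)$ and $V$ accepts, whereas if $p_{\Pi,x} < \lambda - 1/\ell$, then $\hat p < \lambda - 1/(2\ell)$ and $V$ rejects.

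There is essentially no serious obstacle here; the proposition is a routine concentration-plus-union-bound statement, and the only design choice is to inflate the sampling budget $m$ by an additive $O(n)$ in the exponent so that it absorbs the $2^{3n+O(1)}$ loss from quantifying uniformly over $(\Pi,x)$. I read the ``$1 - 2^k$'' in the statement as a typographical error for ``$1 - 2^{-k}$'', which is precisely the bound the Hoeffding-plus-union-bound calculation yields.
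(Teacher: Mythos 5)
Your proposal is correct and matches the paper's own proof: both estimate the success probability empirically by running $\Pi$ a $\poly(n,k,\ell)$ number of times, threshold at $\lambda - 1/(2\ell)$, apply a Chernoff--Hoeffding bound per pair, and union-bound over all $(\Pi,x)$ with $\Pi\in\bool^{\le 2n}$ and $x\in\bool^n$. Your reading of ``$1-2^{k}$'' as a typo for ``$1-2^{-k}$'' is also the intended one.
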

\begin{proof}[Proof Sketch]
	Given a randomized program $\Pi\in\bool^{\leq 2n}$ and $x\in\bool^n$, we repeatedly runs the randomized program $\Pi$ for $t$ steps, for $\poly(n,k,\ell)$ times and counts the fraction $\mu$ of times that $x$ is obtained. Using standard concentration bounds, it is easy to show that the following holds with probability at least $1-2^{-n^2}\cdot 2^{-k}$. If the condition stated in the first item of the proposition is true, then $\mu\geq \lambda -1/(2\ell)$, and if the condition stated in the first item, then $\mu< \lambda -1/(2\ell)$. Finally, we apply a union bound over all $\Pi\in\bool^{\leq 2n}$ and $x\in\bool^n$.
\end{proof}

We now show \Cref{l:LP-OWF-cond}.

\begin{proof}[Proof of \Cref{l:LP-OWF-cond}]
	We will show the lemma for $\lambda \vcentcolon=2/3$. It is not hard to adapt the proof to any $\lambda \in [0,1)$.
	
	Let $c>0$ be a constant so that $\rK^t(x)\leq n +c$ for every $x\in\bool^n$ and $t\geq n^{1.01}$. Let $V$ be the algorithm in \Cref{l:valid} instantiated with $\lambda$.
	
	Let $f$ be a polynomial-time computable function defined as follows. 
	\begin{quote}
		On input $(r_{_V},i,\Pi, r_{t}, r_\ell, r_{k})$, where $r_{_V}\in\bool^{\poly(t,\ell,k)}$, $i\in [n+c]$, $\Pi\in\bool^{n+c}$, $r_t\in\bool^t$, $r_{\ell}\in\bool^{\ell}$ $r_k\in\bool^k$ and $r_2\in\bool^k$. We run $U(\Pi_{[i]}, r_t)$ for $t$ steps and obtain a string $x$. If $|x|=n$ and $V(x,\Pi_{[i]},1^t,1^{\ell},1^k;r_{_V})=1$, we output $(r_{_V}, i, x, 1^t,1^{\ell},1^k)$; otherwise we output $\bot$.
	\end{quote}	
 
	Since we assume infinitely-often one-way functions do not exist (which implies infinitely-often weak one-way functions do not exist), there is a polynomial-time algorithm $\mathsf{Invert}$ such that for all $n,t,\ell,k\in\mathbb{N}$, it holds that
	\[
	\Prob \mleft[\mathsf{Invert}(r_{_V},i, x, 1^t,1^{\ell},1^k; r_{_\mathsf{Invert}}) \textnormal{ succeeds}\mright]\geq 1-\frac{1}{2k^2},
	\]
	where $(r_{_V}, i, x, 1^t,1^{\ell},1^k)$ is sampled according to $f$, $r_{_\mathsf{Invert}}\in\bool^{\poly(t,\ell,k)}$ is the internal randomness of $\mathsf{Invert}$, and  ``$\mathsf{Invert}(r_{_V}, i, x, 1^t,1^{\ell},1^k)$ succeeds'' means $\mathsf{Invert}(i, x, 1^t,1^{\ell},1^k)$ returns a pre-image of $(r_{_V}; i, x, 1^t,1^{\ell},1^k)$.
	
	By an averaging argument, we get that with probability at least $1-1/(2k)$ over $r_{_V}$ (the internal randomness of $V$ used in the definition of $f$) and $r_{_\mathsf{Invert}}$ (the internal randomness of $\mathsf{Invert}$), it holds that
	\begin{equation}\label{eq:failure-unnder-LP-cond}
		\Prob \mleft[\mathsf{Invert}(r_{_V},i, x, 1^t,1^{\ell},1^k; r_{_\mathsf{Invert}}) \textnormal{ succeeds}\mright]\geq 1-\frac{1}{k},
	\end{equation}
	where the above probability is only over $i$ and $x$.
    Also, with probability at least $1-2^{-k}$ over a uniformly random $r_{_V}$, the condition stated in \Cref{l:valid} will hold. By a union bound, with probability at least $1-1/k$, both \Cref{eq:failure-unnder-LP-3-cond} and the condition stated in \Cref{l:valid} hold for a uniform random $(r_{_V}, r_{_\mathsf{Invert}})$. Let us consider any such \emph{good} pair $(r_{_V}, r_{_\mathsf{Invert}})$.
	
	By a union bound, \Cref{eq:failure-unnder-LP-cond} yields that for all $i\in[n+c]$, 
	\begin{equation}\label{eq:failure-unnder-LP-2-cond}
		\Prob \mleft[\mathsf{Invert}(r_{_V},i, x, 1^t,1^{\ell},1^k; r_{_\mathsf{Invert}}) \textnormal{ succeeds}\mright]\geq 1-\frac{n+c}{k},
	\end{equation}
	where now the probability is only over $x$.
	
	Next, for every $i\in[n+c]$, let $\D_i$ be the following distribution (note that we have also fixed a good $r_{_V}$):
	\begin{enumerate}
		\item Pick $\Pi\sim\bool^{n+c}$.
		\item Pick $r_t\sim\bool^t$.
		\item Run $U(\Pi_{[i]}, r_t)$ for $t$ steps and obtain a string $x$. If $|x|=n$ and $V(x,\Pi_{[i]},1^t,1^{\ell},1^k;r_{_V})=1$, we output $x$; otherwise output $\bot$.
	\end{enumerate}
	Then \Cref{eq:failure-unnder-LP-2-cond} implies that for all $i\in[n+c]$,
	\begin{equation}\label{eq:failure-unnder-LP-3-cond}
		\Prob_{x\sim \D_i} \mleft[\mathsf{Invert}(r_{_V},i, x, 1^t,1^{\ell},1^k; r_{_\mathsf{Invert}}) \textnormal{ fails}\mright]\leq \frac{n+c}{k}.
	\end{equation}
	
	Now consider the following algorithm $A$ for solving $\mathsf{Search}$-$\mathsf{MINrKT}$:
	\begin{quote}
		On input $(x,1^t,1^{\ell},1^k)$, we pick $r_{_V}\sim\bool^{\poly(t,\ell,k)}$ and $r_{_\mathsf{Invert}}\sim \bool^{\poly(t,\ell,k)}$. We then try to find the smallest $i\in[n+c]$ such that the following holds: $\mathsf{Invert}(r_{_V},i, x, 1^t,1^{\ell},1^k; r_{_\mathsf{Invert}})$ returns some $(r_{_V},i,\Pi, r_{t}, r_\ell, r_{k})$ such that after running $U(\Pi_{[i]}, r_t)$ for $t$ steps, $x$ is obtained, and that  $V(x,\Pi_{[i]},1^t,1^{\ell},1^k;r_{_V})=1$. In this case, we output $\Pi_{[i]}$.
	\end{quote} 
	
	We claim that the algorithm $A$ satisfies the condition stated in the lemma. 
	
	First note that in the description of the above algorithm, with probability at least $1-1/k$, the internal randomness of $A$, $r_{_A}\vcentcolon=(r_{_V}, r_{_\mathsf{Invert}})$, will be good, and hence satisfy \Cref{eq:failure-unnder-LP-3-cond} and the condition stated in \Cref{l:valid}. In what follows, we fix such a good $r_{_A}$. 
 
    For the sake of contradiction, suppose
	\begin{equation}\label{eq:contradic-cond}
		\sum_{x\in\bool^n} 2^{-\rK^t(x)} \cdot 1_{\mleft[\text{$A(x,1^t,1^\ell,1^k; r_{_A})\not\in \mathsf{Search}$-$\mathsf{MINrKT}(x,1^t,1^{\ell})$}\mright]} \leq \frac{n^b}{k},
	\end{equation}
	where $b>0$ is a constant to be specified later.
	
	Note that for every $i\in [n+c]$ and every $x\in\bool^n$ such that $\rK^t(x)=i$, we have
	\begin{equation}\label{eq:dominate-Kt-1-cond}
		\D_i(x) \geq 2^{-\rK^t(x)}\cdot \frac{2}{3}.
	\end{equation}
	This is because in the sampling procedure of $\D_i$, with probability $2^{-\rK^t(x)}$, we will pick a $\Pi$ whose $i$-bit prefix corresponds to a $\rK^t$-witness of $x$. In this case, with probability at least $2/3$ over $r_t$, we obtain $x$. Finally, note that by the property of $V$ and the fact that $r_{_V}$ is good, in this case we have $V(x,\Pi_{[i]},1^t,1^{\ell},1^k; r_{_V})=1$ and hence $x$ will be output.
	
	Also, for every $i\in [n+c]$ and every $x\in\bool^n$ such that $\rK^t(x)=i$, if there exists an $i^*\leq i$ such that $\mathsf{Invert}(r_{_V},i^*, x, 1^t,1^{\ell},1^k;r_{_\mathsf{Invert}})$ succeeds, then it means we obtain some $(r_{_V},i^*,\Pi, r_{t}, r_\ell, r_{k})$ such that after running $U(\Pi_{[i^*]}, r_t)$ for $t$ steps, $x$ is obtained and that $V(x,\Pi_{[i^*]},1^t,1^{\ell},1^k;r_{_V})=1$. Again, by the property of $V$ and the fact that $r_{_V}$ is good, $\Pi_{[i^*]}$ is an $(1/\ell)$-$\rK^t$-witness of $x$, which means $A(x,1^t, 1^{\ell}, 1^k;r_{_V})\in \mathsf{Search}$-$\mathsf{MINrKT}(x,1^t,1^{\ell})$.
	
	In other words, for every $i\in [n+c]$ and every $x\in\bool^n$ such that $\rK^t(x)=i$, if $A(x,1^t, 1^{\ell}, 1^k;r_{_A})\not\in \mathsf{Search}$-$\mathsf{MINrKT}(x,1^t,1^{\ell})$, then for all $i^*\leq i$, $\mathsf{Invert}(r_{_V},i^*, x, 1^t,1^{\ell},1^k;r_{_\mathsf{Invert}})$ fails. In particular, the latter holds for $i^*=i$.
	
	Then we have
	\begin{align*}
		\frac{n^b}{k} &\leq \sum_{i\in [n+c]}\,\, \sum_{\substack{x\in\bool^n:\\ \rK^t(x)=i}} 2^{-\rK^t(x)} \cdot 1_{\mleft[\text{$A(x,1^t,1^{\ell},1^k;r_{_A})\not\in \mathsf{Search}$-$\mathsf{MINrKT}(x,1^t,1^{\ell})$}\mright]} \tag{by \Cref{eq:contradic-cond}}\\
		&\leq  \sum_{i} \sum_{\substack{x\in\bool^n:\\ \rK^t(x)=i}} \frac{3}{2}\cdot \D_i(x) \cdot 1_{\mleft[\text{$A(x,1^t,1^{\ell},1^k;r_{_A})\not\in \mathsf{Search}$-$\mathsf{MINrKT}(x,1^t,1^{\ell})$}\mright]} \tag{\text{by \Cref{eq:dominate-Kt-1-cond}}}\\
		&\leq \frac{3}{2}\cdot  \sum_{i} \sum_{\substack{x\in\bool^n:\\ \rK^t(x)=i}} \D_i(x) \cdot 1_{\mleft[\mathsf{Invert}(r_{_V},i^*, x, 1^t,1^{\ell},1^k;r_{_\mathsf{Invert}})\textnormal{ fails}\mright]}.
	\end{align*}
	By averaging, the above implies that there exists some $i$ such that 
	\[
	\sum_{\substack{x\in\bool^n:\\ \rK^t(x)=i}} \D_i(x) \cdot 1_{\mleft[\mathsf{Invert}(r_{_V},i, x, y, 1^t, 1^{k};r_{_\mathsf{Invert}})\textnormal{ fails}\mright]} \geq \frac{2\cdot n^b}{3\cdot (n+c)\cdot k},
	\]
	which contradicts \Cref{eq:failure-unnder-LP-3-cond} by letting $b$ be a sufficiently large constant.
\end{proof}

\subsection{Proof of \texorpdfstring{\Cref{t:owf-rKt-formal}}{}}

We now show \Cref{t:owf-rKt-formal}.

\begin{proof}[Proof of \Cref{t:owf-rKt-formal}]
	(\Cref{i:owf-rKt-2}$\implies$\Cref{i:owf-rKt-3}) and (\Cref{i:owf-rKt-3}$\implies$\Cref{i:owf-rKt-4}) are trivial.     (\Cref{i:owf-rKt-4}$\implies$\Cref{i:owf-rKt-1}) follows from \Cref{l:liu-pass}. Below, we show (\Cref{i:owf-rKt-1}$\implies$\Cref{i:owf-rKt-2}).
	
	Let $\lambda \in [0,1)$, and let $\{\D_{n}\}_{n}$ be a polynomial-time samplable distribution family. Also let $\rho$ to be a polynomial to be specified later.
	
	Fix $n,\ell,k\in\mathbb{N}$ and $t\geq \rho(n,k)$.
	
	Let $A$ be the polynomial-time algorithm in \Cref{l:LP-OWF-cond}, and let $d>0$ be some constant to be specified later. 
	We have that with probability at least $1-1/(2k)$ over the internal randomness $r_{_A}$ of $A$, 
	\begin{equation}\label{eq:failure-under-universal-cond}
		\sum_{x\in\bool^n} 2^{-\rK^t_{\lambda}(x)} \cdot 1_{\mleft[\text{$A(x,1^t,1^\ell,1^{(nk)^d}; r_{_A})\not\in \mathsf{Search}$-$\mathsf{MINrKT}(x,1^t,1^{\ell})$}\mright]} \leq \frac{1}{(nk)^d}.		
	\end{equation}

	Also, by \Cref{optimal coding-ae}, there exists a polynomial $p$ such that,
	\begin{equation}\label{coding-uner-D-cond}
		\Prob_{x\sim\D_{n}}\mleft[\rK^{p(n,k)}(x\mid y)\le\log\frac{1}{\D_{n}(x)}+\log p(n,k)\mright]\ge 1-\frac{1}{4k}.
	\end{equation}
	
	Fix any \emph{good} $r_{_A}$ such that \Cref{eq:failure-under-universal-cond} holds. 
	We claim that
	\begin{equation}\label{eq:x-good-cond}
		\Prob_{x\sim \D_{n}}\mleft[A(x,1^t,1^\ell,1^{(nk)^d}; r_{_A})\in \mathsf{Search}\text{-}\mathsf{MINrKT}(x,1^t,1^{\ell})\mright] \geq 1-\frac{1}{2k}.
	\end{equation}
	Note that this suffices to show the theorem, since $r_{_A}$ is good with probability at least $1-1/(2k)$.
	
	Suppose, for the sake of contradiction, \Cref{eq:x-good-cond} is not true. Then 
	\begin{equation}\label{failure-uner-D-cond}
		\Prob_{x\sim \D_{n}}\mleft[A(x,1^t,1^\ell,1^{(nk)^d}; r_{_A})\not\in \mathsf{Search}\text{-}\mathsf{MINrKT}(x,1^t,1^{\ell})\mright] >\frac{1}{2k}.
	\end{equation}
	Let $\mathcal{E}(x)$ be the event that both the following hold.
	\begin{itemize}
		\item \text{$A(x,1^t,1^\ell,1^{(nk)^d}; r_{_A})\not\in \mathsf{Search}\text{-}\mathsf{MINrKT}(x,1^t,1^{\ell})$}
		\item $\rK^{p(n,k)}(x)\le\log\frac{1}{\D_{n}(x)}+\log p(n,k)$.
	\end{itemize}
	By \Cref{coding-uner-D-cond} and \Cref{failure-uner-D-cond}, we get that
	\begin{equation}\label{eq:large-over-D-cond}
		\sum_{x\in\bool^n} \D_n(x) \cdot 1_{\mathcal{E}(x)} \geq \frac{1}{4k}.		    
	\end{equation}
	Note that whenever $\mathcal{E}(x)$ holds, we have
	\begin{equation}\label{eq:dominate-Kt-2-cond}
		\D_{n}(x) \leq  \frac{p(n,k)}{2^{\rK^{p(n,k)}(x)}}.
	\end{equation}
	Now we have
	\begin{align*}
		\frac{1}{4k} &\leq 	\sum_{x\in\bool^n} \D_n(x) \cdot 1_{\mathcal{E}(x)} \tag{\text{by \Cref{eq:large-over-D-cond}}}\\
		&\leq  \sum_{x\in\bool^n} \frac{p(n,k)}{2^{\rK^{p(n,k)}(x)}} \cdot 1_{\mathcal{E}(x)} \tag{\text{by \Cref{eq:dominate-Kt-2-cond}}}\\
		&\leq  p(n,k) \cdot \sum_{x\in\bool^n} 2^{-\K^{p(n,k)}(x)} \cdot 1_{\mathcal{E}(x)}\\
		&\leq  p(n,k) \cdot \sum_{x\in\bool^n} 2^{-\rK^{p(n,k)}(x)} \cdot 1_{\mleft[\text{$A(x,1^t,1^\ell,1^{(nk)^d}; r_{_A})\not\in \mathsf{Search}\text{-}\mathsf{MINrKT}(x,1^t,1^{\ell})$}\mright]}\label{eq:dominated}
		\numberthis
	\end{align*}
	Note that if $\lambda \leq 2/3$, we have $2^{-\rK^{p(n,k)}(x)}\leq 2^{-\rK^{t}_{\lambda}(x)}$ for all $t\geq p(n,k)$. On the other hand, if $\lambda >2/3$, then by \Cref{l:success_amp}, we have $2^{-\rK^{p(n,k)}(x)}\leq 2^{-\rK^{t}_{\lambda}(x)+O(\log (1/(1-\lambda))}$ for  $t\geq p(n,k)\cdot O(\log (1/(1-\lambda)))$. In both cases, if $t\geq \rho(n,k)$ for some sufficiently large polynomial $\rho$, we have
	\[
	\text{\Cref{eq:dominated}} \leq \mleft(\frac{1}{1-\lambda}\mright)^{O(1)} \cdot p(n,k) \cdot \sum_{x\in\bool^n} 2^{-\rK^{t}_{\lambda}(x)} \cdot 1_{\mleft[\text{$A(x,1^t,1^\ell,1^{(nk)^d}; r_{_A})\not\in \mathsf{Search}\text{-}\mathsf{MINrKT}(x,1^t,1^{\ell})$}\mright]}.
	\]
	By rearranging, we get
	\[
	\sum_{x\in\bool^n} 2^{-\rK^{t}_{\lambda}(x)} \cdot 1_{\mleft[\text{$A(x,1^t,1^\ell,1^{(nk)^d}; r_{_A})\not\in \mathsf{Search}\text{-}\mathsf{MINrKT}(x,1^t,1^{\ell})$}\mright]} \geq \frac{(1-\lambda)^{O(1)}}{4k\cdot p(n,k)}.
	\]
	However, this contradicts \Cref{eq:failure-under-universal-cond} by letting $d$ be a sufficiently large constant.
\end{proof}

\printbibliography

\end{document}